\theoremstyle{plain}
\newtheorem{assumption}{\protect\assumptionname}
\newtheorem{remark}{Remark}
        \let\oldremark\remark
        \let\endoldremark\endremark
        \renewenvironment{remark}[1][]{%
        \pushQED{$\blacksquare$}
        \oldremark[#1]\normalfont
        }{
        \popQED\endoldremark
    }
\theoremstyle{definition}
\newtheorem{definition}{\protect\definitionname}
\theoremstyle{plain}
\newtheorem{theorem}{\protect\theoremname}
\theoremstyle{plain}
\newtheorem{lemma}{\protect\lemmaname}
\theoremstyle{plain}
\newtheorem{corollary}{\protect\corollaryname}
\theoremstyle{plain}
\newtheorem{example}{\protect\examplename}
\theoremstyle{plain}
\newtheorem{condition}{Condition}
\crefname{table}{Table}{Tables}
\crefname{subtable}{Table}{Tables}
\crefname{part}{Part}{Parts}
\crefname{chapter}{Chapter}{Chapters}
\crefname{section}{Section}{Sections}
\crefname{subsection}{Section}{Sections}
\crefname{subsubsection}{Section}{Sections}
\crefname{appendix}{Appendix}{Appendices}
\crefname{condition}{Condition}{Conditions}
\crefname{theorem}{Theorem}{Theorems}
\crefname{lemma}{Lemma}{Lemmas}
\crefname{algorithm}{Algorithm}{Algorithms}
\crefname{listing}{Listing}{Listings}
\crefname{figure}{Figure}{Figures}
\crefname{assumption}{Assumption}{Assumptions}
\crefname{equation}{}{}
\definecolor{darkblue}{rgb}{0.0, 0.0, 0.55}
\definecolor{teal}{rgb}{0.0, 0.5, 0.5}
\renewcommand{\bar}[1]{\overline{#1}}
\renewcommand{\tilde}[1]{\widetilde{#1}}
\renewcommand{\hat}[1]{\widehat{#1}}
\newcommand{\bR}{\boldsymbol{R}}
\newcommand{\bd}{\boldsymbol{d}}
\newcommand{\bD}{\boldsymbol{D}}
\newcommand{\bA}{\boldsymbol{A}}
\newcommand{\bC}{\boldsymbol{C}}
\newcommand{\indep}{\mathrel{\perp\mspace{-10mu}\perp}}
\newcommand{\causal}{\mathrm{causal}}
\newcommand{\causample}{\mathrm{causal,sample}}
\newcommand{\Var}{\operatorname{Var}}
\newcommand{\Cov}{\operatorname{Cov}}
\newcommand{\Lip}{\operatorname{Lip}}
\renewcommand{\ref}[1]{\cref{#1}}
\tikzset{
  startstop/.style = {rectangle, rounded corners, draw, fill=gray!10, align=center, text width=4cm, minimum height=1cm, font=\small},
  decision/.style  = {diamond, draw, align=center, text width=4cm, inner sep=2pt, font=\small, fill=orange!10,aspect=2},
  process/.style   = {rectangle, draw, fill=blue!5, align=center, text width=4cm, minimum height=1cm, font=\small},
  arrow/.style     = {thick, -{Stealth}}
}
\newcolumntype{H}{>{\iffalse}c<{\fi}@{}}
\providecommand{\assumptionname}{Assumption}
\providecommand{\corollaryname}{Corollary}
\providecommand{\definitionname}{Definition}
\providecommand{\lemmaname}{Lemma}
\providecommand{\theoremname}{Theorem}
\providecommand{\examplename}{Example}
\begin{document}

\title{Design-Based and Network Sampling-Based Uncertainties in Network Experiments}
\thanks{We are grateful to Jack Porter for his invaluable advice on this project. We also thank Kirill Borusyak, Yong Cai, Harold Chiang, Max Cytrynbaum, Bruce Hansen, Tadao Hoshino, Laura Schechter, Xiaoxia Shi, Xun Tang, Kohei Yata, Ruonan Xu, and participants in seminars in Wisconsin and the World Congress of the Econometric Society for their valuable comments.
This work is supported by the Summer Research Fellowship from the University of Wisconsin-Madison.}
\author{Kensuke Sakamoto and Yuya Shimizu}
\address{Department of Economics, University of Wisconsin, Madison. 1180 Observatory Drive, Madison, WI 53706-1393, USA.}
\email{\href{mailto:}{ksakamoto2@wisc.edu}}
\address{Department of Economics, University of Wisconsin, Madison. 1180 Observatory Drive, Madison, WI 53706-1393, USA.}
\email{\href{mailto:}{yuya.shimizu@wisc.edu}}
\date{\today}

\begin{abstract}   

    Ordinary least squares (OLS) estimators are widely used in network experiments to estimate spillover effects. We study the causal interpretation of, and inference for the OLS estimator under both design-based uncertainty from random treatment assignment and sampling-based uncertainty in network links. We show that correlations among regressors that capture the exposure to neighbors' treatments can induce contamination bias, preventing OLS from aggregating heterogeneous spillover effects for a clear causal interpretation. We derive the OLS estimator's asymptotic distribution and propose a network-robust variance estimator. Simulations and an empirical application demonstrate that contamination bias can be substantial, leading to inflated spillover estimates.

\end{abstract}

\maketitle 

{\it Keywords: Network Sampling, Design-based Inference, Network Experiments, Spillover Effects, Potential Outcomes}

JEL classification codes: C13, C21

\begin{bibunit}[apecon]

\section{Introduction}
\label{sec:intro}
Network experiments, or randomized controlled trials (RCTs) on networks, have become increasingly common in applied economics (e.g., \citealp{cai2015social}; \citealp{dizon2020}; \citealp{carter2021subsidies}; \citealp{fernando2021}; \citealp{Beaman2021-hw}). A central objective of these experiments is to estimate the ``spillover effect'' of policy interventions as they propagate through networks. For example, \cite{cai2015social} estimate spillover effects from randomly assigned information sessions on rice farmers' decisions to purchase a weather insurance product in Chinese villages. In this paper, we develop a comprehensive theoretical framework for ordinary least squares (OLS) estimators in network experiments, explicitly accounting for both design-based uncertainty, arising from randomness in treatment assignment, and sampling-based uncertainty, arising from randomness in sampling units and network links. Our theory is motivated by two key gaps between empirical practice in applied work and existing econometric theory.

The first gap lies in the choice of estimator.
In applications, researchers predominantly use OLS estimators to estimate spillover effects, employing exposure mappings that summarize treatment status and network structure. In our survey of 29 papers analyzing network experiments, published in the ``top 5'' economics journals and two leading field journals, all of the studies report using the OLS estimator, while only two papers use propensity score-based estimators.\footnote{Specifically, we considered papers published from April 2010 through April 2025 in the following journals: American Economic Review, Econometrica, Quarterly Journal of Economics, Journal of Political Economy, Review of Economic Studies, American Economic Journal: Applied Economics, and Journal of Development Economics. We searched for articles that listed ``networks'' and either ``field experiments'' or ``randomized trial'' as keywords on the Web of Science platform. This search resulted in 52 papers, of which 29 conducted network experiments and are mentioned in the text. These papers are referenced in \cref{app:survey}.} This pattern stands in contrast to the theoretical literature on inference in network experiments (e.g., \citealp{aronow2017estimating}; \citealp{leung2022causal}; \citealp{gao2023causal}), which provides inference results for inverse probability weighting (IPW) estimators that directly estimate average spillover effects.

The other gap is due to ignoring a source of randomness.
In many applied cases, researchers need to collect network information through surveys. This collection process can introduce an extra layer of uncertainty beyond design-based uncertainty.
Moreover, the collected network may only partially capture the true network governing the propagation mechanism. By contrast, the theoretical literature on causal inference in network experiments typically abstracts away from sampling-based uncertainty, assuming that the data correspond to the entire population and that the observed network is complete.

To address these gaps, we make three contributions. First, we develop a novel framework that jointly incorporates design-based randomness in treatment assignment and sampling-based randomness in network links. Our framework considers a finite population of $n$ units, from which units are randomly sampled and treatments are assigned. We explicitly model the network sampling process, focusing on two common sampling methods: (i) induced subgraph sampling, where each sampled unit reports friends within the sample, and (ii) star sampling, where each sampled unit reports friends from the entire population. In this setup, unlike in non-network experiments, sampling-based uncertainty arises from two sources: (i) which units are sampled, and (ii) which links are observed. 
We consider potential outcomes that depend on the entire treatment vector, thus violating the Stable Unit Treatment Value Assumption (SUTVA). To address the resulting dimensionality problem, we assume that the potential outcomes are linear in an exposure mapping, a set of user-specified sufficient statistics summarizing treatment status and network structure. For example, a common exposure mapping includes the fraction of one's friends who are treated. Importantly, we do not assume that the user-specified exposure mapping is correctly specified; it may differ from the true exposure mapping in both functional form and dimension. This flexibility also allows us to incorporate censored network links in a unified way.

As our second contribution, we investigate whether the estimands associated with the OLS estimator can be interpreted as causal spillover effects.
We distinguish between two causal targets: a population-level estimand and a sample-level estimand. The population-level estimand is defined as the weighted average of the treatment effect vector across the entire population, including those who are not sampled, with complete network information. On the other hand, the sample-level estimand is defined as the sample average of the treatment effect vector across the sampled units, with the sampled network information. 
We show that both types of estimands can be contaminated: each element of the multi-dimensional estimands may reflect causal effects from other elements of the exposure mapping vector.
With heterogeneous treatment effects, correlations among elements in the exposure mapping vector (e.g., the proportion of treated friends and the proportion of friends' treated friends) blur the distinction between the true causal effects in one element and those in another. Although the population-level causal estimand can be free from contamination if the exposure mapping is defined such that there is no correlation among its elements, the sample-level causal estimand can still be subject to contamination, and thus lacks causal interpretability due to network sampling. Missing links can create undesirable correlations between the observed and true exposure mapping across different elements. As a result, the two estimands can remain distinct even in large samples unless the exposure mapping is correctly specified and the network links for the neighborhood are completely sampled.

In our third contribution, we derive asymptotic theory for the OLS estimator and find conditions under which the OLS estimator approximates the estimands.
We show that the OLS estimator is consistent for the sample-level causal estimand, conditionally or unconditionally on the sampling uncertainty. However, because the sample-level causal estimand generally lacks causal interpretability, results from OLS estimation should be interpreted with caution.
If the exposure mapping is correctly specified and there is no potential correlation between the true and observed exposure mappings, the sample-level causal estimand is consistent for the population-level causal estimand; thus, we can guarantee a clear interpretation of the OLS estimator.
We further derive the estimator's asymptotic distribution and provide a conservative network heteroskedasticity and autocorrelation consistent (HAC) variance estimator.

This paper contributes to the literature on design-based inference in network experiments (\citealp{aronow2017estimating}; \citealp{leung2022causal}; \citealp{gao2023causal}; \citealp{hoshino2024}). Previous works have primarily focused on design-based uncertainty, where treatment assignment is the only source of randomness and complete network information is assumed to be available without sampling uncertainty. Additionally, these works have mainly considered IPW estimators, which allow for direct estimation of causal spillover effects, while the OLS estimator has received less attention. To focus on IPW estimators, these works typically assume that the exposure mapping takes discrete values, such as an indicator of whether a unit has at least one treated friend.\footnote{\cite{gao2023causal} discuss the potential application of IPW-based estimators to continuous exposure mappings.}
In contrast, this paper considers both design-based and sampling-based uncertainties with an explicit network collection process, and focuses on the OLS estimator with exposure mappings as regressors, which is widely used in empirical applications and allows for continuous exposure mappings.

This paper also relates to the literature on simultaneous design-based and sampling-based inference (see \citealp{abadie2020sampling}; \citealp{xu2022design}; \citealp{abadie2023should}; \citealp{viviano2024policy}). Our framework extends the approach of \cite{abadie2020sampling} to the network setting by allowing for both design-based and sampling-based uncertainties in network experiments, and by focusing on both population-level and sample-level estimands. We differ from \cite{abadie2020sampling} in several important respects. First, we explicitly model network sampling, where the observed network may be only partially observed. Second, we study the OLS estimator with exposure mappings as regressors, which induces dependence among outcomes and between regressors and sampling indicators, features not present in their analysis. Third, we provide an element-wise causal interpretation of the estimands and the OLS estimator, which is not addressed in their work.
Relatedly, \cite{viviano2024policy} also considers both design-based and sampling-based uncertainties, including uncertainty arising from network sampling. However, while his approach assumes that all relevant network information for computing the true exposure mapping is observed, our framework allows for the possibility that some relevant network information is unobserved due to sampling uncertainty. Additionally, while \cite{viviano2024policy} focuses on a sample-level estimand that maximizes a welfare measure, our study is concerned with inference for both population-level and sample-level causal estimands, emphasizing the potential divergence between the two.

This paper is also related to the literature studying the impact of network data collection on parameters of interest (\citealp{chandrasekhar2016econometrics}; \citealp{griffith2022name}; \citealp{lewbel2023ignoring}; \citealp{hsieh2024}). While these papers share a similar motivation in that the network sampling process can affect the estimation of spillover effects, they primarily focus on the potential bias of estimators with respect to homogeneous parameters due to network sampling. In contrast, this paper focuses on the causal interpretability of the OLS estimator with heterogeneous spillover effects. This distinction is important because attenuation bias, as highlighted for example in \cite{chandrasekhar2016econometrics}, does not necessarily hinder learning about spillover effects if the estimator preserves the sign of the underlying effects. However, we show that the OLS estimator with exposure mappings may not preserve the sign of the true spillover effects due to contamination bias, potentially leading to misleading conclusions.

More broadly, this paper contributes to the literature on the causal interpretability of estimators in linear regressions with heterogeneous treatment effects (\citealp{angrist1998}; \citealp{goldsmith2022contamination}; \citealp{borusyak2024negative}). In particular, \cite{goldsmith2022contamination} show that the OLS estimator with multi-dimensional treatment indicators can be contaminated in the presence of heterogeneous treatment effects, which aligns with our findings in \cref{cor:causal_element}. There are two important differences. First, we consider a finite population model, whereas \cite{goldsmith2022contamination} focus on an infinite population model, making it nontrivial to extend their results to our setting. Second, we allow for general exposure mappings as regressors, while \cite{goldsmith2022contamination} restrict attention to mutually exclusive multi-dimensional treatment indicators. In our context, contamination bias arises from overlaps in the treatment status across elements of the exposure mapping, whereas such overlaps are not possible in the non-network setup of \cite{goldsmith2022contamination}.

The remainder of this paper is organized as follows. 
\cref{sec:model} introduces the framework for network sampling, the model, and assumptions.
\cref{sec:main} presents the main results, including a causal interpretation and asymptotic theory.
\cref{sec:variance} proposes a network heteroskedasticity and autocorrelation consistent (HAC) estimator for the standard errors.
\cref{sec:simulation} provides a simulation study to illustrate the finite sample properties of the proposed estimator.
\cref{sec:empirical} applies the proposed method to a real-world dataset.
\cref{sec:conclusion} concludes the paper and provides a flowchart (\cref{fig:flowchart}) outlining recommended steps for conducting inference in network experiments using the OLS estimator.
\cref{app:gamma} discusses how to estimate the nuisance parameters consistently,
\cref{app:lemmas} contains technical lemmas, \cref{app:proof} contains proofs, \cref{app:sim_additional} presents additional simulation results, and 
\cref{app:survey} lists the papers included in the survey of network experiment research presented in the Introduction.

\section{Model}
\label{sec:model}
In this section, we first outline our framework for modeling network experiments. We then introduce the estimands of interest, which are defined both for the entire population and for the sampled group, as well as the OLS estimator used to estimate these estimands.

\subsection{Population} As in \cite{abadie2020sampling}, we consider a sequence of finite populations.
There are finitely many units ($n<\infty$) in the population, denoted by $\mathcal{N}_{n}=\{1,...,n\}$. These units are connected through the network represented by an adjacency matrix $\bA_{n}=[A_{n,i,j}]_{i,j\in\mathcal{N}_{n}}\in\{0,1\}^{n\times n}$.
We define $A_{n,i,j}=1$ if there is a network link between units $i$ and $j$, and $A_{n,i,j}=0$ otherwise.
We assume that the network is undirected $(A_{n,i,j}=A_{n,j,i})$ and has no self-loops ($A_{n,i,i}=0$). Each unit $i$ is characterized by a vector of covariates $Z_{n,i}\in \mathcal{Z}_{n}\subset\mathbb{R}^{d_{Z}}$, potential outcomes $Y_{n,i}^{*}(\cdot)\in\mathcal{Y}_{n}\subset\mathbb{R}$ that depend on the entire vector of binary treatments $\bD_{n}=[D_{n,i}]_{i\in\mathcal{N}_{n}}\in \{0,1\}^{n}$. We consider the setup where the researcher assigns treatments only to the sampled units, but spillovers to non-sampled units are allowed. The covariates $Z_{n,i}$ include both network information (e.g., number of $i$'s neighbors, degree: $\text{deg}_{n,i}=\sum_{j\neq i}A_{n,i,j}$) and individual information (e.g., $i$'s age). Also, the potential outcomes may violate the Stable Unit Treatment Value Assumption (SUTVA) by allowing for others' treatment status as inputs. 

\subsection{Sampling} From a finite population of $n$ units, we draw a sample of $N=\sum_{i=1}^{n}R_{n, i}$ units (hence $n\geq N$), where $R_{n, i}\in\{0,1\}$ is the sampling indicator for the $i$-th unit: $R_{n,i}=1$ if $i$ is in the sample and otherwise $R_{n,i}=0$.
Given the sampling indicator vector $\bR_{n}$, partial elements of the true network $\bA_{n}$ are sampled. We denote the sampled network, given the sampling indicator vector $\bR_{n}$, as $\tilde{\bA}_{n}(\bR_{n})$.
When the dependence on $\bR_{n}$ is clear from context, we simply write $\tilde{\bA}_{n}$.
The sampled adjacency matrix $\tilde{\bA}_{n}=[\tilde{A}_{n,i,j}]_{i,j\in\mathcal{N}_{n}}\in\{0,1\}^{n\times n}$ has $(i,j)$-element $\tilde{A}_{n,i,j}$, which equals one if there is a true network link between units $i$ and $j$ and the link is sampled, and zero otherwise.

In this paper, we focus on two canonical network sampling methods: (i) \emph{induced subgraph sampling}, and (ii) \emph{star sampling}.
In the induced subgraph sampling case, we sample $\tilde{\bA}_{n}= \bR_{n}\bR_{n}'\odot \bA_{n}$ where $\odot$ is the element-wise product and the $(i,j)$-element of $\tilde{\bA}_{n}$, $\tilde{A}_{n,i,j}=R_{n,i}R_{n,j}A_{n,i,j}$ represents a network link between the units $i$ and $j$, which is sampled if both units are sampled.
In the star sampling case, we sample $\tilde{\bA}_{n}=\left(\boldsymbol{1}_{n}\boldsymbol{1}_{n}'-(\boldsymbol{1}_{n}-\bR_{n})(\boldsymbol{1}_{n}-\bR_{n})'\right)\odot \bA_{n}$, where $\tilde{A}_{n,i,j}=\max\{R_{n,i},R_{n,j}\}A_{n,i,j}$ represents a network link between the units $i$ and $j$, which is sampled if at least one of the two units is sampled.
Sampled networks under induced subgraph and star sampling are illustrated in \cref{fig:network_sampling}. 
In the figure, the sampled units are in blue, and the non-sampled units are in light gray. The sampled links are shown as solid black lines, and the non-sampled links as dashed gray lines.
In practice, if the researcher asks the sampled units to list their friends \textit{from the list of sampled units}, the induced subgraph sampling network is sampled (e.g., \citealp{Conley2010,dizon2020}; \citealp{carter2021subsidies}). If the researcher asks the sampled units to list their friends \textit{from the population}, the star sampling network is sampled (e.g., \citealp{banerjee2013diffusion}; \citealp{cai2015social}; \citealp{Beaman2021-hw}).
See Section 5.3 of \cite{kolaczyk2014statistical} for further examples of network sampling.

\begin{figure}[ht]
    \caption{Comparison of induced subgraph sampling (left) and star sampling (right).} 
    \begin{threeparttable}
        \centering
        \begingroup
        \begin{subfigure}{0.49\textwidth}
            \centering
            \includegraphics[width=\textwidth]{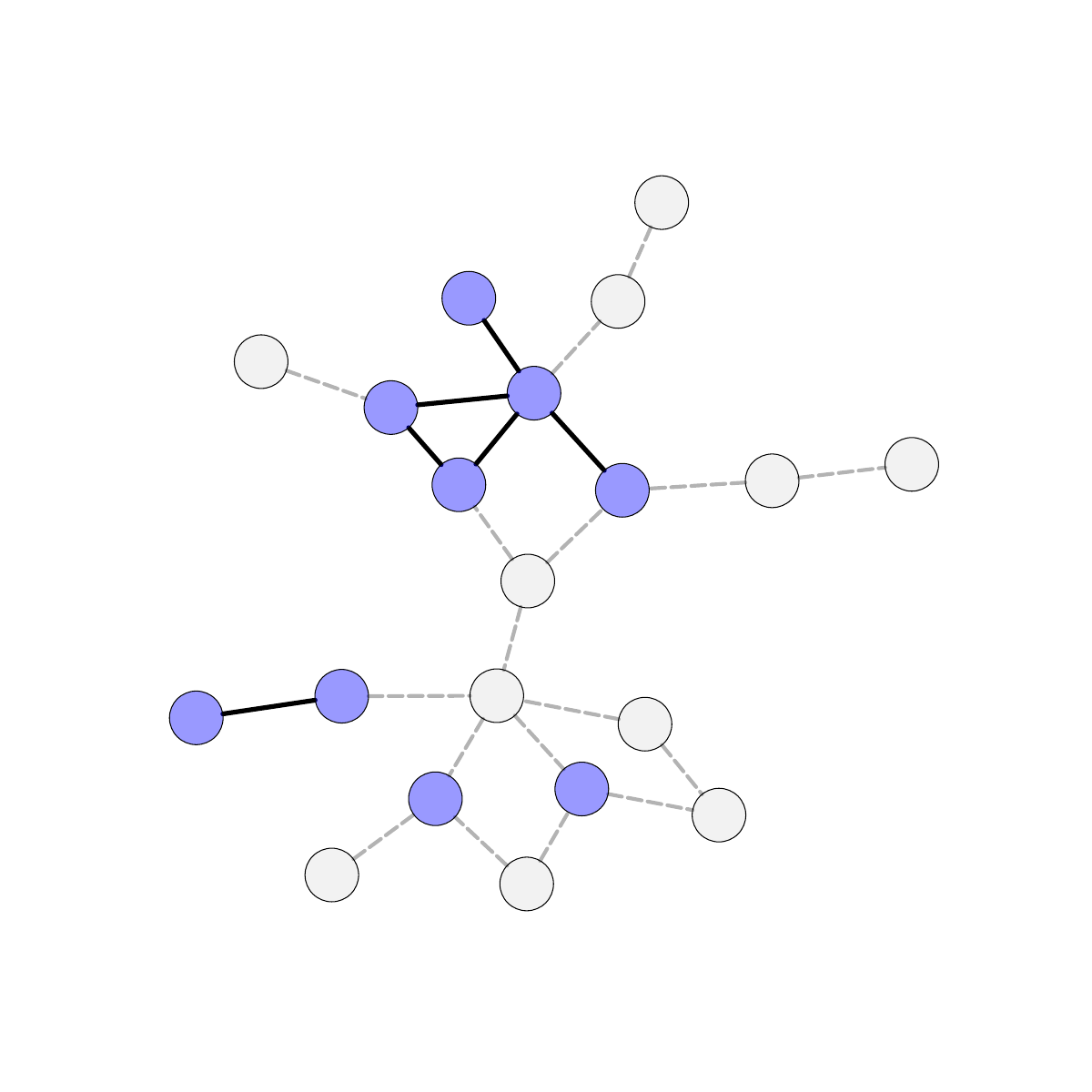}
            \caption{Induced subgraph sampling}
            \label{fig:in_sample}
        \end{subfigure}
        \hfill
        \begin{subfigure}{0.49\textwidth}
            \centering
            \includegraphics[width=\textwidth]{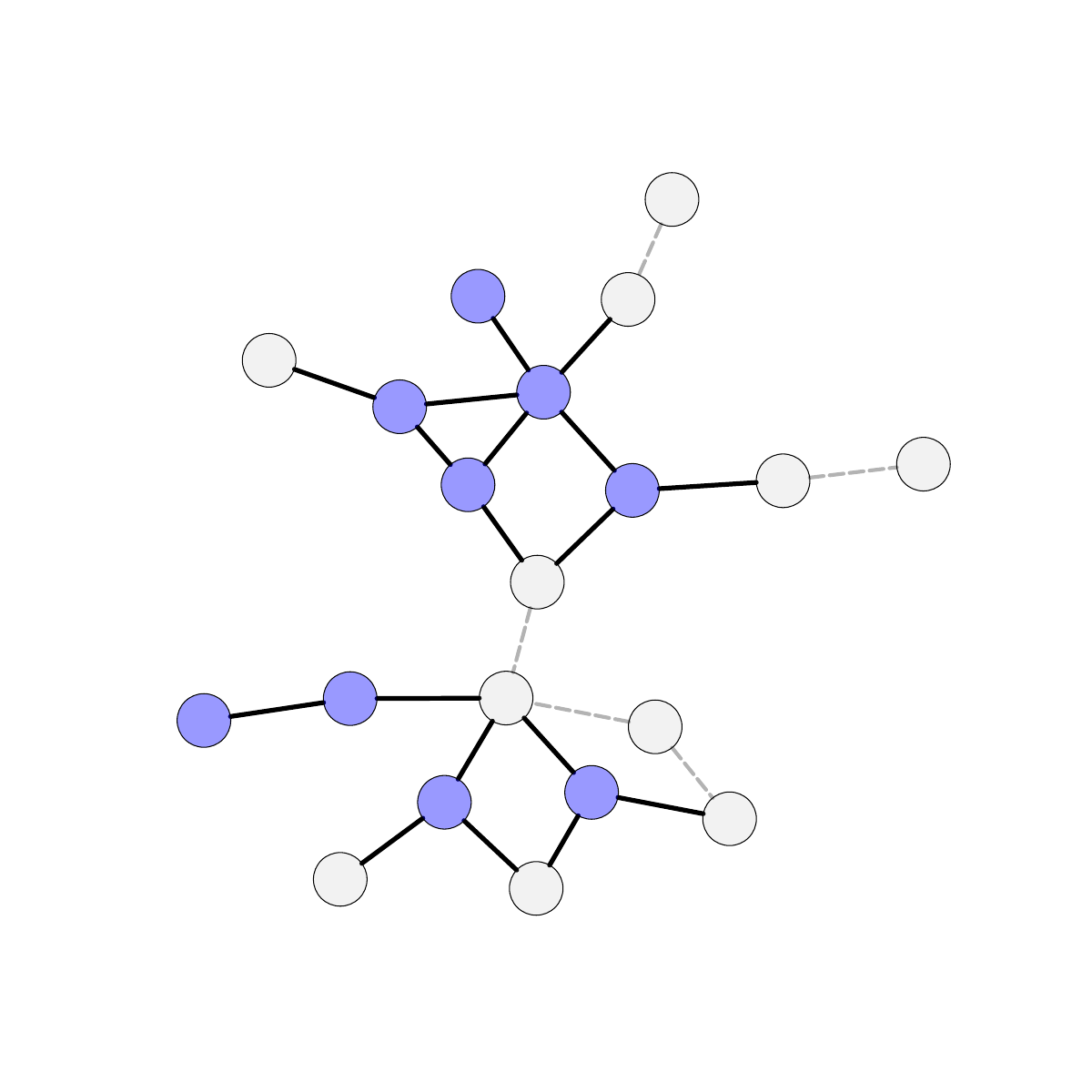}
            \caption{Star sampling}
            \label{fig:out_sample}
        \end{subfigure}
        \vspace{8pt}
        \begin{tablenotes}
            \footnotesize
            \item {\it Note:} Blue nodes indicate sampled units, while light gray nodes denote non-sampled units. Solid black links are observable to the researcher; dashed gray links are unobserved.
        \end{tablenotes}
    \endgroup
    \end{threeparttable}
           \label{fig:network_sampling}
\end{figure}

We denote the observed covariates by $\tilde{Z}_{n,i}$, which may differ from $Z_{n,i}$ due to network sampling. For example, if $Z_{n,i}$ includes $i$'s degree, then $\tilde{Z}_{n,i}$ contains $i$'s degree computed from the sampled network $\tilde{\bA}_{n}$: $\tilde{\text{deg}}_{n,i}=\sum_{j\neq i}\tilde{A}_{n,i,j}$. 
Note that we allow both $Z_{n,i}$ and $\tilde{Z}_{n,i}$ to depend on $\bR_n$.

Throughout the paper, we maintain the following assumption regarding the sampling process and the assignment mechanism.
\begin{assumption}   \label{asm:sampling}
    (i) Random sampling:
    \begin{align*}
        R_{n,i}\sim \text{Bernoulli}(\rho_{n}) \text{ i.i.d.},
    \end{align*}
    where $\rho_{n}\in(0,1]$ is a sequence of sampling probabilities such that $\rho_{n}\to\rho\in(0,1]$.
    
    (ii) Network sampling:
    Given a fixed entire network sequence $\bA_{n}\in\{0,1\}^{n\times n}$, the $(i,j)$-element of sampled network $\tilde{\bA}_{n}$ is generated by the induced subgraph sampling 
    $\tilde{A}_{n,i,j}=R_{n,i}R_{n,j}A_{n,i,j}$ or the star sampling $\tilde{A}_{n,i,j}=\max\{R_{n,i},R_{n,j}\}A_{n,i,j}$.
    
    (iii) Treatment assignment mechanism: Let $\bR_{n,-i}$ denote the vector $\bR_n$ excluding the $i$-th element, $R_{n,i}$.
    The assignment mechanism $D_{n,i}$ is independent of $\bR_{n,-i}$ and drawn independently (but not necessarily identically) from a known distribution. The distribution of $D_{n,i}$ is degenerate at $0$ if and only if $R_{n,i}=0$.
\end{assumption}

\cref{asm:sampling} (iii) implies $D_{n,i}=0$ if $R_{n,i}=0$, which means we treat only the sampled units. 
The simplest example is 
\begin{equation}
    D_{n,i}\sim \text{Bernoulli}(R_{n,i}p_{n,i}) \text{ independently}.
    \label{eq:bernoulli}
\end{equation}
While we use the Bernoulli assignment in \cref{eq:bernoulli} for all illustrations in the paper, our theoretical results accommodate more general assignment mechanisms, as specified in \cref{asm:sampling} (iii).
Since \cref{asm:sampling} (iii) does not require the identical draws, $p_{n,i}$ could depend on $\bA_{n}$, $Z_{n,i}$ or other observed characteristics of unit $i$. We can equivalently write \cref{asm:sampling} (iii) as $D_{n,i}=R_{n,i}D^*_{n,i}$, where $D^*_{n,i}$ is defined as the latent treatment indicator generated by $D^*_{n,i}\sim \text{Bernoulli}(p_{n,i})$ (or more general distribution satisfying the assumption) independently. Note also that the treatment assignment mechanism is known to the researcher, which is satisfied in a randomized controlled trial and commonly assumed in the design-based inference literature.

\begin{remark}
    \cref{asm:sampling} (i) rules out cluster sampling and multi-wave network sampling, because in such designs $R_{n,i}$ may depend on $R_{n,j}$ for some $j\neq i$ through the cluster or the network, respectively. 
    \cref{asm:sampling} (ii) prohibits censoring of $\tilde{\bA}_n$; when censoring occurs, it can be treated as misspecification of the exposure mapping (see \cref{ex:censor}). 
    \cref{asm:sampling} (iii) excludes complex assignment schemes, such as matched-pair or blocked randomization.
\end{remark}

\subsection{Potential Outcome}
As discussed above, each unit's potential outcome $Y^*_{n,i}(\cdot)$ is a function of the full treatment vector $\bD_{n}$. By \cref{asm:sampling} (iii), we can write $\bD_{n}=\bR_n\odot\bD_{n}^*$, where $\bD^*_{n}=[D^*_{n,i}]_{i\in\mathcal{N}_n}$. Following the literature (e.g., \citealp{aronow2017estimating}), we assume that there is an exposure mapping $T_{n,i}\in\mathcal{T}_n\subset\mathbb{R}^{d_T}$ that essentially determines $i$'s potential outcome by summarizing the network structure and the treatment status vector. See \cref{sec:exposure_mapping} below for a detailed definition and discussion on the exposure mapping.
We consider a linear potential outcome model, so that for each $t\in\mathcal{T}_{n}$, $Y_{n,i}^{*}(t)$ is defined as follows.
 \begin{assumption}\label{asm:linear}
    For all $t\in\mathcal{T}_{n}$, 
    \begin{align*}
        Y_{n,i}^{*}(t)=t'\theta_{n,i} + \nu_{n,i},
    \end{align*}
    where $\theta_{n,i}$ and $\nu_{n,i}$ are non-stochastic.
\end{assumption}

Note that $\theta_{n,i}$ is a vector of heterogeneous treatment effects. Each element $\theta_{n,i,(k)}$ represents the marginal effect of the $k$-th component of the exposure mapping $T_{n,i}$ on the potential outcome $Y_{n,i}^{*}(t)$. For example, if the $k$-th component of $T_{n,i}$ is the share of treated friends, then $\theta_{n,i,(k)}$ captures the causal spillover effect from $i$'s treated friends on $i$'s outcome. Since $\theta_{n,i}$ is non-stochastic, it may depend on the population network $\bA_{n}$, allowing for heterogeneity based on network structure and unit $i$'s position.

Although a linear model may seem restrictive, when $|\mathcal{T}_{n}|$ is finite (e.g., $\mathcal{T}_{n}=\{0,1\}^2$), this assumption is without loss of generality as discussed in \cite{abadie2020sampling}. 
The realized outcome is $Y_{n,i}=Y^*_{n,i}(T_{n,i})$. Thus, the outcome depends on $\bD_n$ only via the exposure mapping $T_{n,i}$.

\subsection{Exposure Mapping}
\label{sec:exposure_mapping}
Let the true exposure mapping be $T_{n,i}=g(i,\bD_{n},\bA_{n})\in\mathcal{T}_{n}\subset\mathbb{R}^{d_T}$, where $g:\mathcal{N}_{n}\times \{0,1\}^{n} \times \{0,1\}^{n\times n} \to \mathcal{T}_{n}$ is a function that generates the true exposure mapping for each unit. Specifically, for unit $i$, it takes (i) $i$'s index, (ii) the treatment vector $\bD_{n}$, and (iii) the true network $\bA_n$ as inputs, and returns a lower-dimensional vector of summary statistics for the outcome. For example, applied researchers use the presence of $i$'s treated friends and the share of $i$'s treated friends as the exposure mapping.

This paper allows the researcher to misspecify the functional form of $g$.
For example, if the researcher uses the presence of treated friends as the exposure mapping, while the true potential outcome is linear in the share of treated friends, then the exposure mapping is misspecified.
We denote this misspecified exposure mapping function by $\tilde{g}_n:\mathcal{N}_{n}\times \{0,1\}^{n}\times \{0,1\}^{n\times n}\to \tilde{\mathcal{T}}_{n}$, where $\tilde{\mathcal{T}}_{n}\in\mathbb{R}^{d_{\tilde{T}}}$. Note that the dimensions $d_T$ and $d_{\tilde{T}}$ may differ. 
The functional form $\tilde{g}_n$ could depend on the sample size $n$ (as in \cref{ex:cencormiss}), but for notational simplicity, we omit the subscript $n$.
We assume that dimensions $d_T$ and $d_{\tilde{T}}$ are constants independent of $n$.

If $\tilde{g}=g$, then the observed exposure mapping $\tilde{T}_{n,i}$ can be written as $\tilde{T}_{n,i}=g(i,\bD_{n},\tilde{\bA}_{n})$. That is, the only difference between the true exposure mapping and the observed exposure mapping is the network input, between $\bA_{n}$ and $\tilde{\bA}_{n}$.
More generally, if the researcher misspecifies $g$ as $\tilde{g}$, then the observed exposure mapping is $\tilde{T}_{n,i}=\tilde{g}(i,\bD_{n},\tilde{\bA}_{n})$.
In this case, the dimensions $d_T$ and $ d_{\tilde{T}}$ may differ.

Below, we provide some examples of exposure mappings. 
\begin{example}\label{ex:direct}
    Suppose that the true exposure mapping is $i$'s own treatment indicator:
    \begin{align*}
        T_{n,i}=g(i,\bD_{n},\bA_{n})&=D_{n,i}=R_{n,i}D^*_{n,i}.
    \end{align*}
    Note that the exposure mapping does not depend on the network information, and as long as the researcher correctly specifies the exposure mapping $g=\tilde{g}$, we have $T_{n,i}=\tilde{T}_{n,i}$ for all $i\in\mathcal{N}_n$.
\end{example}

\begin{example}\label{ex:withoutmiss}
    Suppose that the true exposure mapping is an indicator of the existence of at least one treated friend:
    \begin{align*}
        T_{n,i}=g(i,\bD_{n},\bA_{n})&=\mathds{1}\left\{\sum_{j\neq i}A_{n,i,j}R_{n,j}D^*_{n,j}>0\right\},
    \end{align*}
    and the researcher correctly specifies the exposure mapping as $\tilde{T}_{n,i}=g(i,\bD_{n},\tilde{\bA}_{n})$. Thus, for the induced subgraph sampling case ($\tilde{A}_{n,i,j}=R_{n,i}R_{n,j}A_{n,i,j}$),
    \begin{align*}
        \tilde{T}_{n,i}&=\mathds{1}\left\{\sum_{j\neq i}R_{n,i}R_{n,j}A_{n,i,j}R_{n,j}D^*_{n,j}>0\right\}=\mathds{1}\left\{R_{n,i}\sum_{j\neq i}A_{n,i,j}R_{n,j}D^*_{n,j}>0\right\}.
    \end{align*}
    Thus, when $R_{n,i}=1$, we have $T_{n,i}=\tilde{T}_{n,i}$. 
    For the star sampling case ($\tilde{A}_{n,i,j}=\max\{R_{n,i},R_{n,j}\}A_{n,i,j}$),
    \begin{align*}
        \tilde{T}_{n,i}&=\mathds{1}\left\{\sum_{j\neq i}\max\{R_{n,i},R_{n,j}\}A_{n,i,j}R_{n,j}D^*_{n,j}>0\right\}=\mathds{1}\left\{\sum_{j\neq i}A_{n,i,j}R_{n,j}D^*_{n,j}>0\right\},
    \end{align*}
    and we have $T_{n,i}=\tilde{T}_{n,i}$ for all $i\in\mathcal{N}_n$.
\end{example}

Although the two preceding examples correctly specify the exposure mapping, the subsequent example fails to do so.
\begin{example}\label{ex:cencormiss}
    Suppose that the true exposure mapping is a vector of a direct treatment, a spillover treatment through a fraction of treated peers, and their interaction term:
    \begin{align*}
        T_{n,i}=g(i,\bD_{n},\bA_{n})&=\left(R_{n,i}D^*_{n,i},\frac{\sum_{j\neq i}A_{n,i,j}R_{n,j}D^*_{n,j}}{\sum_{j\neq i}A_{n,i,j}},R_{n,i}D^*_{n,i}\times \frac{\sum_{j\neq i}A_{n,i,j}R_{n,j}D^*_{n,j}}{\sum_{j\neq i}A_{n,i,j}}\right).
    \end{align*}
    By convention, we usually set $\sum_{j\neq i}A_{n,i,j}R_{n,j}D^*_{n,j}/\sum_{j\neq i}A_{n,i,j}=0$ if $\sum_{j\neq i}A_{n,i,j}=0$ to negate the spillover effect.
    Suppose that the researcher misspecifies $\tilde{g}$ as
    \begin{align*}
        \tilde{T}_{n,i}=\tilde{g}(i,\bD_{n},\tilde{\bA}_{n})&=\left(R_{n,i}D^*_{n,i},\mathds{1}\left\{\sum_{j\neq i}\tilde{A}_{n,i,j}R_{n,j}D^*_{n,j}>0\right\}\right).
    \end{align*}
    In this specification, it is evident that $g \neq \tilde{g}$ because $d_{T} > d_{\tilde{T}}$.
    The misspecified $\tilde{g}$ accounts only for the direct effect and the spillover effect represented by an indicator of the presence of at least one treated friend. Consequently, not only do the dimensions differ, but the structures of the variables capturing spillover effects are also distinct.
    %
\end{example}

\subsection{Censored Network}
We can also treat censoring on a sampled network as arising from a misspecified exposure mapping as $\tilde{g}$ can specify which links in a sampled network $\tilde{\bA}$ to be used to compute the exposure mapping. This is empirically relevant as in practice, some studies impose a cap on the number of links each sampled unit can report, leading to a discrepancy between the sampled and censored networks. For example, in \cite{cai2015social}, each sampled unit was asked to report up to five closest friends, which potentially introduces censoring in the observed network. See also \cite{griffith2022name} for further examples and a detailed discussion of censoring in network data collection.
The following example illustrates how censoring can be framed as a misspecified exposure mapping:

\begin{example}\label{ex:censor}
    Let $g$ be the same as in \cref{ex:withoutmiss}. Suppose that the researcher misspecifies $\tilde{g}$ due to the censoring as 
    \begin{align*}
        \tilde{T}_{n,i}&=\tilde{g}(i,\bD_{n},\tilde{\bA}_{n})=g(i,\bD_{n},\bC_n(\tilde{\bA}_{n})\odot\tilde{\bA}_{n})=\mathds{1}\left\{\sum_{j\neq i}C_{n,i,j}(\tilde{\bA}_{n})\tilde{A}_{n,i,j}R_{n,j}D^*_{n,j}>0\right\},
    \end{align*}
    where $\bC_n(\tilde{\bA}_{n})$ is the censoring indicator matrix whose $(i,j)$-element is $C_{n,i,j}(\tilde{\bA}_{n})\in\{0,1\}$, a binary variable that indicates whether unit $j$ is censored from $i$'s perspective. The censoring indicator can be a random variable, as we allow it to be an unknown function of the sampled network $\tilde{\bA}_{n}$. For example, $C_{n,i,j}=1$ when unit $i$ (or $j$) with $R_{n,i}=1$ (or $R_{n,j}=1$) is asked to list their five closest friends and $j$ (or $i$) is one of them.\footnote{We can define $C_{n,i,i}(\tilde{\bA}_{n})$ arbitrarily because $A_{n,i,i}=0$.} In this example, $g\neq \tilde{g}$ in general and misspecification occurs due to the censoring.
\end{example}

We distinguish between the sampled network $\tilde{\bA}_{n}$ and the censored network $\bC_n(\tilde{\bA}_{n})\odot\tilde{\bA}_{n}$, and the discrepancy is framed as the misspecification of the exposure mapping.
This framework is useful for separating the sampling effect from the censoring. In the extreme case with $\rho_n=1$, we sample the entire network $\tilde{\bA}_{n}=\bA_{n}$, but the censoring still matters as we observe $\bC_n(\bA_{n})\odot\bA_{n}$.
For convenience, we will omit the notational dependence of $\bC_n$ on $\tilde{\bA}_{n}$.

The dependence of $\bC_n$ on $\tilde{\bA}_{n}$ is motivated as follows.
In practice, the censored induced subgraph sampling network is observed if the researcher asks the sampled unit to list a fixed number of closest friends \textit{from the sampled friends}. Thus, it usually depends on $[\tilde{A}_{n,i,j}]_{j\in\mathcal{N}_n}$.
The censored star sampling network is observed if the researcher asks $i$ with $R_{n,i}=1$ to list a fixed number of closest friends \textit{from their friends in population} $[A_{n,i,j}]_{j\in\mathcal{N}_n}$. Since $\tilde{A}_{n,i,j}=A_{n,i,j}$ holds for $R_{n,i}=1$ for the star sampling network, the censoring depends on $[\tilde{A}_{n,i,j}]_{j\in\mathcal{N}_n}$. We also allow the arbitrary dependence of $\bC_n$ on other deterministic variables, such as individuals' preferences regarding their friends, which is a benefit of the design-based framework.

\subsection{Estimands and Estimator}
To facilitate the introduction of our estimands and OLS estimator, we first transform the exposure mappings. 
Recall that the exposure mappings $T_{n,i}$ and $\tilde{T}_{n,i}$ are random vectors that depend on $\bR_n$ and $\bD_n$, and the covariates $Z_{n,i}$ and $\tilde{Z}_{n,i}$ are random vectors that depend only on $\bR_n$.
Define 
\begin{equation*}
    X_{n,i}=T_{n,i} - \Lambda_{n}Z_{n,i},\quad\text{ and }\quad\tilde{X}_{n,i}=\tilde{T}_{n,i} - \tilde{\Lambda}_{n}\tilde{Z}_{n,i},
\end{equation*}
where 
$$\Lambda_{n}=\left(\sum_{i=1}^n \mathbb{E}[T_{n,i}Z_{n,i}^{\prime}]\right)\left(\sum_{i=1}^n \mathbb{E}[Z_{n,i}Z_{n,i}^{\prime}]\right)^{-1},$$
and 
$$\tilde{\Lambda}_{n}=\left(\sum_{i=1}^n R_{n, i}\mathbb{E}[\tilde{T}_{n,i}|\bR_n]\tilde{Z}_{n,i}^{\prime}\right)\left(\sum_{i=1}^n R_{n, i}\tilde{Z}_{n,i}\tilde{Z}_{n,i}^{\prime}\right)^{-1}.$$
That is, $X_{n,i}$ is the population residual of the regression of $T_{n,i}$ on $Z_{n,i}$, and $\tilde{X}_{n,i}$ is the residual of the regression of $\tilde{T}_{n,i}$ on $\tilde{Z}_{n,i}$ using sampled units. Since we know the treatment assignment distribution with known $p_{n,i}$ and observe $\bR_n$, we can calculate $\mathbb{E}[\tilde{T}_{n,i}|\bR_n]$ analytically.

\cref{tab:lin_prop} summarizes the conditional expectation of widely used exposure mappings when the assignment probability is homogeneous: $D^*_{n,i}\sim \text{Bernoulli}(p_{n})$ i.i.d. The table focuses on the case where the exposure mapping is scalar. The researcher applies it element-wise for multi-dimensional cases. 
For the second neighborhood, the expectation can be calculated similarly. See also \cref{ex:cai} below for the modification on multi-dimensional cases with the second neighborhood.

\begin{table}[h]
    \caption{Conditional Expectation of Exposure Mappings Frequently Used in Applied Research}
    \label{tab:lin_prop}
    \begin{threeparttable}
    \centering
    \begingroup
    \setlength{\extrarowheight}{5pt}
        \begin{tabular}{c c c}
            \hline
            \hline
            Exposure Mapping & $\tilde{T}_{n,i}=g(i,\bD_{n},\tilde{\bA}_{n})$ & $\mathbb{E}\left[\tilde{T}_{n,i}\mid\bR_n\right]$\\
            \hline
            Individual Treatment & $R_{n,i}D^*_{n,i}$ & $R_{n,i}p_{n}$\\
            Treated Friends Share & $\frac{\sum_{j\neq i}\tilde{A}_{n,i,j}R_{n,j}D^*_{n,j}}{\sum_{j\neq i}\tilde{A}_{n,i,j}}$ & $p_{n}\times\frac{\sum_{j\neq i}\tilde{A}_{n,i,j}R_{n,j}}{\sum_{j\neq i}\tilde{A}_{n,i,j}}$\\
            Treated Friends Number & $\sum_{j\neq i}\tilde{A}_{n,i,j}R_{n,j}D^*_{n,j}$ & $p_{n}\times \sum_{j\neq i}\tilde{A}_{n,i,j}R_{n,j}$\\
            Treated Friends Existence & $\mathds{1}\left\{\sum_{j\neq i}\tilde{A}_{n,i,j}R_{n,j}D^*_{n,j}>0\right\}$ & $1-(1-p_{n})^{\sum_{j\neq i}\tilde{A}_{n,i,j}R_{n,j}}$\\
            \hline
        \end{tabular}
        \begin{tablenotes}
            \footnotesize
            \item {\it Note:} Assume that $R_{n,i}\sim \text{Bernoulli}(\rho_{n})$ i.i.d. and $D^*_{n,i}\sim \text{Bernoulli}(p_{n})$ i.i.d. By convention, we usually set $\sum_{j\neq i}\tilde{A}_{n,i,j}R_{n,j}D^*_{n,j}/\sum_{j\neq i}\tilde{A}_{n,i,j}=0$ if $\sum_{j\neq i}\tilde{A}_{n,i,j}=0$.
        \end{tablenotes}
    \endgroup
    \end{threeparttable}
\end{table}

To summarize relevant moments of the data, define the population matrix \(\Omega_n\) and the sample matrices \(\tilde Q_n\) and \(\tilde\Omega_n\):
$$
\Omega_n=\frac{1}{n} \sum_{i=1}^n\mathbb{E}\left[\left(\begin{array}{c}
    Y_{n,i}\\
    X_{n,i}\\
    Z_{n,i}
\end{array}\right)\left(\begin{array}{c}
    Y_{n,i}\\
    X_{n,i}\\
    Z_{n,i}
\end{array}\right)^{\prime}\right]\equiv\left(\begin{array}{ccc}
    \Omega_n^{YY} & \Omega_n^{YX} & \Omega_n^{YZ}\\
    \Omega_n^{XY} & \Omega_n^{XX} & \Omega_n^{XZ}\\
    \Omega_n^{ZY} & \Omega_n^{ZX} & \Omega_n^{ZZ}
\end{array}\right),
$$
$$
\tilde{Q}_n=\frac{1}{N} \sum_{i=1}^n R_{n, i}\left(\begin{array}{c}
    Y_{n,i}\\
    \tilde{X}_{n,i}\\
    \tilde{Z}_{n,i}
\end{array}\right)\left(\begin{array}{c}
    Y_{n,i}\\
    \tilde{X}_{n,i}\\
    \tilde{Z}_{n,i}
\end{array}\right)^{\prime}\equiv\left(\begin{array}{ccc}
    \tilde{Q}_n^{YY} & \tilde{Q}_n^{YX} & \tilde{Q}_n^{YZ}\\
    \tilde{Q}_n^{XY} & \tilde{Q}_n^{XX} & \tilde{Q}_n^{XZ}\\
    \tilde{Q}_n^{ZY} & \tilde{Q}_n^{ZX} & \tilde{Q}_n^{ZZ}
\end{array}\right),
$$
and
$$
\tilde{\Omega}_n=\frac{1}{N} \sum_{i=1}^n R_{n, i}\mathbb{E}\left[\left(\begin{array}{c}
    Y_{n,i}\\
    \tilde{X}_{n,i}\\
    \tilde{Z}_{n,i}
\end{array}\right)\left(\begin{array}{c}
    Y_{n,i}\\
    \tilde{X}_{n,i}\\
    \tilde{Z}_{n,i}
\end{array}\right)^{\prime}\mid\bR_n\right]\equiv\left(\begin{array}{ccc}
    \tilde{\Omega}_n^{YY} & \tilde{\Omega}_n^{YX} & \tilde{\Omega}_n^{YZ}\\
    \tilde{\Omega}_n^{XY} & \tilde{\Omega}_n^{XX} & \tilde{\Omega}_n^{XZ}\\
    \tilde{\Omega}_n^{ZY} & \tilde{\Omega}_n^{ZX} & \tilde{\Omega}_n^{ZZ}
\end{array}\right).
$$
Note that the expectation for $\Omega_n$ is taken over $\bD_n$ and $\bR_n$ while the conditional expectation for $\tilde{\Omega}_n$ is taken over $\bD_n$ conditional on $\bR_n$. 

Our estimands of interest are 
\begin{equation}
    \left(\begin{array}{c}
        \theta_{n}^{\causal}\\
        \gamma_{n}^{\causal}
    \end{array}\right)=\left(\begin{array}{cc}
        \Omega_n^{XX} & \Omega_n^{XZ}\\
        \Omega_n^{ZX} & \Omega_n^{ZZ}
    \end{array}\right)^{-1}\left(\begin{array}{c}
        \Omega_n^{XY}\\
        \Omega_n^{ZY}
    \end{array}\right),
\end{equation}
and
\begin{equation}
    \left(\begin{array}{c}
        \theta_{n}^{\causample}\\
        \gamma_{n}^{\causample}
    \end{array}\right)=\left(\begin{array}{cc}
        \tilde{\Omega}_n^{XX} & \tilde{\Omega}_n^{XZ}\\
        \tilde{\Omega}_n^{ZX} & \tilde{\Omega}_n^{ZZ}
    \end{array}\right)^{-1}\left(\begin{array}{c}
        \tilde{\Omega}_n^{XY}\\
        \tilde{\Omega}_n^{ZY}
    \end{array}\right).
\end{equation}
These are causal estimands in the sense specified by \cite{abadie2020sampling}. 
$(\theta_{n}^{\causal},\gamma_{n}^{\causal})'$ concerns the population-level causal effects of intervention while $(\theta_{n}^{\causample},\gamma_{n}^{\causample})$ concerns the sample-level causal effects when the sampling is governed by $\bR_{n}$.
$(\theta_{n}^{\causal},\gamma_{n}^{\causal})'$ is a solution for the population moment condition:
\begin{align}
    \label{eq:moment_causal}
    \frac{1}{n} \sum_{i=1}^n \mathbb{E}\left[\left(\begin{array}{c}
        X_{n,i}\\
        Z_{n,i}
    \end{array}\right)\left(Y_{n, i}-X_{n, i}^{\prime} \theta_{n}^{\causal} - Z_{n, i}^{\prime} \gamma_{n}^{\causal}\right)\right]=0,
\end{align}
and $(\theta_{n}^{\causample},\gamma_{n}^{\causample})$ is a solution for the sample moment condition:
\begin{align}
    \label{eq:moment_causalsample}
    \frac{1}{N} \sum_{i=1}^n R_{n, i}\mathbb{E}\left[\left(\begin{array}{c}
        \tilde{X}_{n,i}\\
        \tilde{Z}_{n,i}
    \end{array}\right)\left(Y_{n, i}-\tilde{X}_{n, i}^{\prime} \theta_{n}^{\causample} - \tilde{Z}_{n, i}^{\prime} \gamma_{n}^{\causample}\right)\mid\bR_n\right]=0.
\end{align}
We study (i) whether the sample-level estimand can be estimated consistently (internal validity), and, if so, (ii) how closely it approximates the population-level estimand (external validity). We will also discuss whether each element of these estimands admits a causal interpretation, namely, whether each OLS coefficient represents a convex combination of the corresponding heterogeneous treatment effects, which is not discussed in \cite{abadie2020sampling}.

For the sample-level causal estimand, we consider the ordinary least squares estimator:
\begin{equation}
    \left(\begin{array}{c}
        \hat{\theta}_{n}\\
        \hat{\gamma}_{n}
    \end{array}\right)=\left(\begin{array}{cc}
        \tilde{Q}_n^{XX} & \tilde{Q}_n^{XZ}\\
        \tilde{Q}_n^{ZX} & \tilde{Q}_n^{ZZ}
    \end{array}\right)^{-1}\left(\begin{array}{c}
        \tilde{Q}_n^{XY}\\
        \tilde{Q}_n^{ZY}
    \end{array}\right).
\end{equation}
Equivalently, the moment condition is
\begin{equation}
    \frac{1}{n} \sum_{i=1}^n R_{n, i}\left(\begin{array}{c}
        \tilde{X}_{n,i}\\
        \tilde{Z}_{n,i}
    \end{array}\right)\left(Y_{n, i}-\tilde{X}_{n, i}^{\prime} \theta - \tilde{Z}_{n, i}^{\prime} \gamma\right)=0.
\end{equation}

An alternative approach is to use the inverse probability weighting (IPW) estimator (e.g., \citealp{leung2022causal}; \citealp{gao2023causal}). A usual condition for the IPW estimator to work in a network experimental setting is the individual-level overlapping condition; in our notation, we need to have $\mathbb{P}[\tilde{T}_{n,i}=t|\bR_{n}]\in (\eta,1-\eta)$ almost surely for all $i\in \mathcal{N}_{n}$ and $t\in\mathcal{T}_{n}$ for some $\eta\in(0,1/2)$. This overlapping condition is difficult to maintain in the network sampling framework. For example, consider a population of two connected units. Suppose the first unit is sampled, while the second is not. The exposure mapping is defined as the number of treated neighbors. In this case, $\mathbb{P}[\tilde{T}_{n,1}=1|\bR_{n}]=0$, thereby violating the overlapping condition.
Also, it is notable that the IPW estimator typically targets a quantity that differs from our estimands, which are defined through moment conditions in \cref{eq:moment_causal} and \cref{eq:moment_causalsample}.

Throughout this section, we have defined the network sampling framework, the exposure mapping, and the potential outcome model. We have also defined the population- and sample-level estimands, which are the solutions to the population and sample moment conditions, respectively. The next section provides our main theoretical results within this framework.

\section{Main Results}
In this section, we present the main results of this paper. We first discuss the population- and sample-level estimands' causal interpretation, then derive the 
asymptotic properties of the OLS estimator for both.
Since we have assumed that the sequence of sampling probabilities $\rho_n$ is bounded away from $0$ (\cref{asm:sampling}-(i)), it follows that $N>0$ a.s. for large enough $n$ (\cref{lem:N_pos} in the Supplemental Appendix). Thus, there is no additional concern for the degeneracy of the estimands and the OLS estimator in a large population, relative to the standard design-based setting with $\rho_n=1$.\footnote{This is why we have a stronger statement than \cite{abadie2020sampling} who allow $\rho_n\to 0$ as $n\to\infty$ and use ``with probability approaching $1$'' instead of ``almost surely'' in their results. Note that \cref{lem:N_pos} allows $\rho_n\to 0$ as long as $\rho_n n\to \infty$.}
\label{sec:main}
\subsection{Interpretability of the Causal Estimands}
We impose the following regularity conditions for the causal estimands to be well-defined. 
These conditions require boundedness of the outcome, exposure mappings, and covariates, as well as full rank of the exposure mappings and covariates.

\begin{assumption}
    \label{asm:moment}
    \
    \begin{enumerate}
        \item (Uniform Boundedness): The sequence of potential outcomes $Y_{n,i}^{*}(\cdot)$ is uniformly bounded, i.e., there exists some constant $\bar{Y}>0$ such that $|Y_{n,i}^{*}(t)|\leq \bar{Y}<\infty$ for all $n$, $i\in \mathcal{N}_{n}$, and $t\in\mathcal{T}$.
        \item The sequences of exposure mappings $T_{n,i}$ and $\tilde{T}_{n,i}$ satisfy the following.
        \begin{enumerate}
            \item (Uniform Boundedness): There exists some constant $\bar{T}$ such that $\|T_{n,i}\|,\|\tilde{T}_{n,i}\|\leq \bar{T}<\infty$ almost surely for all $n,i\in\mathcal{N}_{n}$.
            \item (Variation): $(1/n)\times\sum_{i\in \mathcal{N}_{n}}\Var(T_{n,i})$ is invertible and $(1/N)\times\sum_{i\in \mathcal{N}_{n}}R_{n,i}\Var(\tilde{T}_{n,i}\mid\bR_n)$ is almost surely invertible for large enough $n$.
        \end{enumerate}
        \item The sequences of covariates $Z_{n,i}$ and $\tilde{Z}_{n,i}$ satisfy the following.
        \begin{enumerate}
            \item (Uniform Boundedness): There exists some constant $\bar{Z}$ such that $\|Z_{n,i}\|,\|\tilde{Z}_{n,i}\|\leq \bar{Z}<\infty$ almost surely for all $n,i\in\mathcal{N}_{n}$.
            \item (Full Rank): $(1/n)\times\sum_{i=1}^nZ_{n,i}Z_{n,i}'$ is almost surely full-rank for large enough $n$, and $(1/N)\times\sum_{i=1}^nR_{n,i}\allowbreak \tilde{Z}_{n,i}\tilde{Z}_{n,i}'$ is almost surely invertible for large enough $n$.
        \end{enumerate}
    \end{enumerate}
\end{assumption}

\cref{asm:moment} (iii) implies that the sequences of residualized exposure mappings $X_{n,i}$ and $\tilde{X}_{n,i}$ satisfy the following.
\begin{enumerate}[label=(\alph*)]
    \item (Uniform Boundedness): There exists some constant $\bar{X}$ such that $\|X_{n,i}\|,\|\tilde{X}_{n,i}\|\leq \bar{X}<\infty$ almost surely for all $n,i\in\mathcal{N}_{n}$.
    \item (Full Rank): $(1/n)\times\sum_{i\in \mathcal{N}_{n}}\mathbb{E}[X_{n,i}X_{n,i}']$ is invertible and $(1/N)\times\sum_{i=1}^nR_{n,i}\mathbb{E}[\tilde{X}_{n,i}\tilde{X}_{n,i}'|\bR_{n}]$ is almost surely invertible for large enough $n$.
\end{enumerate}

The uniform boundedness of the potential outcomes in \cref{asm:moment} (i) is a standard assumption in the literature (e.g., \citealp{leung2022causal,gao2023causal}). \cref{asm:moment} (ii-a) rules out some network statistics in a large, dense network (e.g., a diverging degree). \cref{asm:moment} (ii-b) requires that the exposure mappings are not degenerate across the units. For example,  in \cref{ex:withoutmiss},  \cref{asm:moment} (ii-b) is violated if the network is empty, $A_{n,i,j}=0$ for all $i,j\in \mathcal{N}_{n}$, as $\mathds{1}\{\sum_{j\neq i}R_{n,j}A_{n,i,j}D^*_{n,j}>0\}=0$ for all $i\in\mathcal{N}_{n}$.
\cref{asm:moment} (iii-b) does not exclude the constant term in $Z_{n,i}$ and $\tilde{Z}_{n,i}$.
\cref{asm:moment} (ii-b) and (iii-b) are not as restrictive as they seem since we have $N>0$ a.s. for large enough $n$.

We impose an additional condition on the exposure mapping:
\begin{assumption}
    \label{asm:linear_propensity}
    There exists a sequence of matrices $L_{n}$ such that
    \begin{align*}
        \mathbb{E}[T_{n,i}|\bR_{n}]=L_{n}Z_{n,i}\quad\text{a.s.}
    \end{align*}
    for large enough $n$. Similarly, there exists a sequence of matrices $\tilde{L}_{n}$ measurable with respect to $\sigma(\bR_{n})$ such that 
    \begin{align*}
        \mathbb{E}[\tilde{T}_{n,i}|\bR_{n}]=\tilde{L}_{n}\tilde{Z}_{n,i}\quad\text{a.s.}
    \end{align*}
    for large enough $n$.
\end{assumption}

This assumption is fairly weak, as it is automatically satisfied if $\mathbb{E}[T_{n,i}|\bR_{n}]$ and $\mathbb{E}[\tilde{T}_{n,i}|\bR_{n}]$ are included in $Z_{n,i}$ and $\tilde{Z}_{n,i}$, respectively. Typically, in a field experiment, the experimenter knows the assignment mechanism, so $\mathbb{E}[\tilde{T}_{n,i}|\bR_{n}]$ can be computed either analytically or numerically and included as covariates. As the following example shows, in some cases, it is sufficient to include some network statistics in the covariates to satisfy this assumption.

\begin{example}\label{ex:kramer}
    Consider a variant of the exposure mapping in \cite{miguel2004worms} that counts the number of treated friends:
    \begin{align*}
        T_{n,i}=g(i,\bD_{n},\bA_{n})=\sum_{j\neq i}A_{n,i,j}R_{n,j}D^*_{n,j}.
    \end{align*}
    If there is no censoring, then $\tilde{g}=g$.
    The conditional expectations of exposure mappings are derived as $\mathbb{E}[T_{n,i}|\bR_{n}]=\sum_{j\neq i}A_{n,i,j}R_{n,j}p_{n,j}$, and $\mathbb{E}[\tilde{T}_{n,i}|\bR_{n}]=\sum_{j\neq i}\tilde{A}_{n,i,j}R_{n,j}p_{n,j}=\sum_{j\neq i}A_{n,i,j}R_{n,j}p_{n,j}$ for $R_{n,i}=1$.
    Thus, \cref{asm:linear_propensity} holds if the weighted degree $\sum_{j\neq i}A_{n,i,j}R_{n,j}p_{n,j}$ is included in $Z_{n,i}$ and $\tilde{Z}_{n,i}$.
\end{example}

We obtain the following transformations of the estimands in terms of the individual causal effects $\theta_{n,i}$ in the linear potential outcome model in \cref{asm:linear}:
\begin{theorem}\label{thm:weakly_causal}
        Under \cref{asm:sampling,asm:linear,asm:moment,asm:linear_propensity}, for large enough $n$,
    \begin{align*}
        \theta_{n}^{\causal}=\left(\sum_{i=1}^n\mathbb{E}[X_{n,i}X_{n,i}']\right)^{-1}\sum_{i=1}^n\mathbb{E}[X_{n,i}X_{n,i}']\theta_{n,i},
    \end{align*}
    and
    \begin{align*}
        \theta_{n}^{\causample}=\left(\sum_{i=1}^nR_{n,i}\mathbb{E}[\tilde{X}_{n,i}\tilde{X}_{n,i}'|\bR_{n}]\right)^{-1}\sum_{i=1}^nR_{n,i}\mathbb{E}[\tilde{X}_{n,i}X_{n,i}'|\bR_{n}]\theta_{n,i}\quad\text{a.s.}
    \end{align*}
\end{theorem}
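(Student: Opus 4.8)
The plan is to exploit the residualized construction of $X_{n,i}$ and $\tilde{X}_{n,i}$ together with \cref{asm:linear_propensity}, which is the workhorse here. My first move is to observe that \cref{asm:linear_propensity} pins down the projection coefficients appearing in the residuals. Plugging $\mathbb{E}[T_{n,i}\mid\bR_n]=L_nZ_{n,i}$ into the definition of $\Lambda_n$ and using that $Z_{n,i}$ is $\sigma(\bR_n)$-measurable while $L_n$ is deterministic gives $\sum_i\mathbb{E}[T_{n,i}Z_{n,i}']=L_n\sum_i\mathbb{E}[Z_{n,i}Z_{n,i}']$, hence $\Lambda_n=L_n$ for large $n$; the analogous conditional computation with the $\sigma(\bR_n)$-measurable $\tilde{L}_n$ yields $\tilde{\Lambda}_n=\tilde{L}_n$ a.s. Consequently $\mathbb{E}[X_{n,i}\mid\bR_n]=(L_n-\Lambda_n)Z_{n,i}=0$ and $\mathbb{E}[\tilde{X}_{n,i}\mid\bR_n]=(\tilde{L}_n-\tilde{\Lambda}_n)\tilde{Z}_{n,i}=0$ a.s. This upgrades the mere sum-level orthogonality built into the residuals to a pointwise conditional-mean-zero property, which drives everything that follows.

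Second, I would use these mean-zero identities to block-diagonalize the matrices defining the two estimands. Because $Z_{n,i}$ and $\tilde{Z}_{n,i}$ are $\sigma(\bR_n)$-measurable, $\mathbb{E}[X_{n,i}Z_{n,i}']=\mathbb{E}[\mathbb{E}[X_{n,i}\mid\bR_n]Z_{n,i}']=0$ and, conditionally, $\mathbb{E}[\tilde{X}_{n,i}\tilde{Z}_{n,i}'\mid\bR_n]=\mathbb{E}[\tilde{X}_{n,i}\mid\bR_n]\tilde{Z}_{n,i}'=0$, so $\Omega_n^{XZ}=0$ and $\tilde{\Omega}_n^{XZ}=0$ a.s. The off-diagonal blocks vanish, the inverse factors, and the $\theta$-subvector decouples from $\gamma$, yielding $\theta_n^{\causal}=(\Omega_n^{XX})^{-1}\Omega_n^{XY}$ and $\theta_n^{\causample}=(\tilde{\Omega}_n^{XX})^{-1}\tilde{\Omega}_n^{XY}$ a.s.; the invertibility of $\Omega_n^{XX}$ and $\tilde{\Omega}_n^{XX}$ for large $n$ is exactly the full-rank condition (b) derived from \cref{asm:moment}(iii).

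Third, I would substitute the linear outcome model. Writing the realized outcome as $Y_{n,i}=T_{n,i}'\theta_{n,i}+\nu_{n,i}$ (\cref{asm:linear}) and decomposing $T_{n,i}=X_{n,i}+\Lambda_nZ_{n,i}$, I compute $\mathbb{E}[X_{n,i}Y_{n,i}]=\mathbb{E}[X_{n,i}X_{n,i}']\theta_{n,i}+\mathbb{E}[X_{n,i}Z_{n,i}']\Lambda_n'\theta_{n,i}+\mathbb{E}[X_{n,i}]\nu_{n,i}$, and the last two terms vanish by $\mathbb{E}[X_{n,i}]=0$ and $\mathbb{E}[X_{n,i}Z_{n,i}']=0$ from the first step. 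Summing over $i$ gives $\Omega_n^{XY}=\tfrac1n\sum_i\mathbb{E}[X_{n,i}X_{n,i}']\theta_{n,i}$, hence the first display. For the sample-level estimand the same algebra is carried out conditionally on $\bR_n$, but with one crucial wrinkle: the realized outcome is generated by the \emph{true} exposure $T_{n,i}$, not the observed $\tilde{T}_{n,i}$. Thus $\mathbb{E}[\tilde{X}_{n,i}Y_{n,i}\mid\bR_n]=\mathbb{E}[\tilde{X}_{n,i}T_{n,i}'\mid\bR_n]\theta_{n,i}$, and decomposing $T_{n,i}=X_{n,i}+\Lambda_nZ_{n,i}$ again kills only the $Z$-term (via $\mathbb{E}[\tilde{X}_{n,i}\mid\bR_n]=0$ and $\sigma(\bR_n)$-measurability of $Z_{n,i}$), leaving the cross moment $\mathbb{E}[\tilde{X}_{n,i}X_{n,i}'\mid\bR_n]\theta_{n,i}$ intact. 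This surviving $\tilde{X}X'$ term, rather than $\tilde{X}\tilde{X}'$, is precisely the contamination channel the paper emphasizes.

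The step I expect to be the main obstacle, and where care is needed, is the first: establishing $\Lambda_n=L_n$ and $\tilde{\Lambda}_n=\tilde{L}_n$ a.s., and hence the pointwise conditional-mean-zero identities. This requires keeping the measurability bookkeeping straight—$Z_{n,i},\tilde{Z}_{n,i}$ and $\tilde{\Lambda}_n,\tilde{L}_n$ are $\sigma(\bR_n)$-measurable while $L_n$ and $\Lambda_n$ are deterministic—and invoking the invertibility and full-rank conditions in \cref{asm:moment}(ii)–(iii) (together with $N>0$ a.s. via \cref{lem:N_pos}) so that the projection coefficients and matrix inverses are well defined for large $n$. Everything after that is bookkeeping with the block-inverse identity and the linear model.
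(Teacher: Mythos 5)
Your proposal is correct and follows essentially the same route as the paper: your first step reproduces the paper's Lemma \ref{lem:as_rep} (that \cref{asm:linear_propensity} forces $\Lambda_n=L_n$, $\tilde{\Lambda}_n=\tilde{L}_n$ a.s., hence the pointwise conditional-mean-zero property of $X_{n,i}$ and $\tilde{X}_{n,i}$), and your second and third steps are exactly the paper's argument — kill $\Omega_n^{XZ}$ and $\tilde{\Omega}_n^{XZ}$ to decouple $\theta$ from $\gamma$, then substitute $Y_{n,i}=T_{n,i}'\theta_{n,i}+\nu_{n,i}$ with $T_{n,i}=X_{n,i}+\Lambda_nZ_{n,i}$, including the key observation that conditionally only the cross moment $\mathbb{E}[\tilde{X}_{n,i}X_{n,i}'\mid\bR_n]$ survives.
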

\cref{thm:weakly_causal} shows that $\theta_{n}^{\causal}$ is expressed as a weighted sum of causal effects $\theta_{n,i}$ induced by the exposure mapping. On the other hand, $\theta_{n}^{\causample}$ is not necessarily a weighted sum of $\theta_{n,i}$ because of the difference in $X_{n,i}$ and $\tilde{X}_{n,i}$ in the numerator.
Moreover, the dimension of $\theta_{n}^{\causample}$ is $d_{\tilde{T}}$, which can be different from $d_{T}$, the dimension of $\theta_{n,i}$.

In the absence of \cref{asm:linear_propensity}, it is known that the formula in \cref{thm:weakly_causal} does not hold due to the omitted variable bias (OVB).
\cref{asm:linear_propensity} and \cref{thm:weakly_causal} suggest a takeaway for practitioners: \textit{under the linear propensity scores, the researcher can select necessary controls
easily to avoid the OVB.}

The linear propensity score assumption \cref{asm:linear_propensity} is a weak assumption in design-based causal inference. This assumption also appears in \cite{abadie2020sampling} and \cite{borusyak2023nonrandom}. 
In the latter, the OVB is removed by using the recentered instruments. Theoretically, including the controls and using the recentered instruments are equivalent, but including the controls is more frequently used in practice. While \cite{borusyak2023nonrandom} focuses on homogeneous treatment effects, this paper allows for heterogeneous treatment effects.

Note that in general, the $k$-th elements of $\theta_{n}^{\causal}$ and $\theta_{n}^{\causample}$ do not directly correspond to the causal effect of changes in the $k$-th element of the exposure mapping on the outcomes. For example, if the exposure mapping is two-dimensional, we could have the first element of $\theta_{n}^{\causal}$ to be negative while the first element of $\theta_{n,i}$ is positive for all $i\in\mathcal{N}_{n}$ if the second element of it is significantly negative. 

\subsection{Causal Interpretation}
To provide a causal interpretation for each element $\theta_{n,(k)}^{\causal}$ and $\theta_{n,(k)}^{\causample}$, we develop an element-wise version of \cref{thm:weakly_causal}. To this end, we let $T_{n,i,(k)}$ denote the $k$-th element of $T_{n,i}$. Similarly, we write $\tilde{T}_{n,i,(k)},X_{n,i,(k)},\tilde{X}_{n,i,(k)}$. For each $k$, let $U_{n,i,(k)}$ be the residual when projecting $X_{n,i,(k)}$ onto the $X_{n,i,(-k)}=(X_{n,i,(l)})_{l\neq k}$:
\begin{equation*}
    U_{n,i,(k)}=X_{n,i,(k)}-\left(\sum_{i=1}^n\mathbb{E}[X_{n,i,(k)}X_{n,i,(-k)}']\right)\left(\sum_{i=1}^n\mathbb{E}[X_{n,i,(-k)}X_{n,i,(-k)}']\right)^{-1}X_{n,i,(-k)}.
\end{equation*}
Similarly, define 
\begin{equation*}
    \tilde{U}_{n,i,(k)}=\tilde{X}_{n,i,(k)}-\left(\sum_{i=1}^nR_{n,i}\mathbb{E}[\tilde{X}_{n,i,(k)}\tilde{X}_{n,i,(-k)}'|\bR_{n}]\right)\left(\sum_{i=1}^nR_{n,i}\mathbb{E}[\tilde{X}_{n,i,(-k)}\tilde{X}_{n,i,(-k)}'|\bR_{n}]\right)^{-1}\tilde{X}_{n,i,(-k)}.
\end{equation*}
Then, we have the following decompositions:
\begin{corollary}
    \label{cor:causal_element}
    Under \cref{asm:sampling,asm:linear,asm:moment,asm:linear_propensity}, for large enough $n$,
    \begin{align}
        \theta_{n,(k)}^{\causal}=\frac{\sum_{i=1}^n\mathbb{E}[U_{n,i,(k)}X_{n,i,(k)}]\theta_{n,i,(k)}}{\sum_{i=1}^n\mathbb{E}[U_{n,i,(k)}^{2}]}
        +\frac{\sum_{i=1}^n\mathbb{E}[U_{n,i,(k)}X_{n,i,(-k)}']\theta_{n,i,(-k)}}{\sum_{i=1}^n\mathbb{E}[U_{n,i,(k)}^{2}]}\label{eq:causal_decomp}
    \end{align}
    for each $k=1,...,d_{T}$, and
    \begin{align}
            \theta_{n,(k)}^{\causample}
            =\frac{\sum_{i=1}^nR_{n,i}\mathbb{E}[\tilde{U}_{n,i,(k)}X_{n,i}'|\bR_{n}]\theta_{n,i}}{\sum_{i=1}^nR_{n,i}\mathbb{E}[\tilde{U}_{n,i,(k)}^{2}|\bR_{n}]}\quad\text{a.s.}
            \label{eq:causample_decomp}
    \end{align}
    for each $k=1,...,d_{\tilde{T}}$.
    Under an additional assumption $d_{\tilde{T}}=d_T$, we can simplify it into
    \begin{align}
        \theta_{n,(k)}^{\causample}
        =&\frac{\sum_{i=1}^nR_{n,i}\mathbb{E}[\tilde{U}_{n,i,(k)}X_{n,i,(k)}|\bR_{n}]\theta_{n,i,(k)}}{\sum_{i=1}^nR_{n,i}\mathbb{E}[\tilde{U}_{n,i,(k)}^{2}|\bR_{n}]}\nonumber\\
        &+\frac{\sum_{i=1}^nR_{n,i}\mathbb{E}[\tilde{U}_{n,i,(k)}X_{n,i,(-k)}'|\bR_{n}]\theta_{n,i,(-k)}}{\sum_{i=1}^nR_{n,i}\mathbb{E}[\tilde{U}_{n,i,(k)}^{2}|\bR_{n}]}\quad\text{a.s.}
        \label{eq:causample_decomp_special}
    \end{align}
    for each $k=1,...,d_{T}$.
\end{corollary}

\cref{cor:causal_element} shows that $\theta_{n,(k)}^{\causal}$ and $\theta_{n,(k)}^{\causample}$ can be influenced by effects from other elements $\theta_{n,i,(l)}$ with $l\neq k$. 
However, the residualization does not eliminate contamination bias, because the definition of $U_{n,i,(k)}$ and $\tilde{U}_{n,i,(k)}$ only implies 
$$\sum_{i=1}^n\mathbb{E}[U_{n,i,(k)}X_{n,i,(-k)}']=0 \text{ and } \sum_{i=1}^nR_{n,i}\mathbb{E}[\tilde{U}_{n,i,(k)}X_{n,i,(-k)}'|\bR_{n}]=0,$$ 
respectively.
Moreover, $\mathbb{E}[U_{n,i,(k)}X_{n,i,(k)}]$ and $\mathbb{E}[\tilde{U}_{n,i,(k)}X_{n,i,(k)}|\bR_{n}]$ are not guaranteed to be non-negative.

\begin{example}
    Suppose the true exposure mapping is the number of treated friends, $T_{n,i}= \sum_{j\neq i}A_{n,i,j}R_{n,j}D^*_{n,j}$, and that $T_{n,i}$ takes three possible values $1$, $2$, or $3$. 
    Suppose the researcher misspecifies the exposure mapping as dummy variables: $\tilde{T}_{n,i}=(\tilde{T}_{n,i,(1)},\tilde{T}_{n,i,(2)},\tilde{T}_{n,i,(3)})$, where $\tilde{T}_{n,i,(k)}=\allowbreak\mathds{1}\{\sum_{j\neq i}\tilde{A}_{n,i,j}R_{n,j}D^*_{n,j}=k\}$.
    Thus, $d_{\tilde{T}}=3>d_T=1$.
    For simplicity, consider star sampling, which provides all network links in the first neighborhood. Then, $\tilde{T}_{n,i,(k)}=\allowbreak\mathds{1}\{T_{n,i}=k\}$. Equation \cref{eq:causample_decomp} in \cref{cor:causal_element} implies that the coefficient for $\tilde{T}_{n,i,(k)}$ has no contamination term because $X_{n,i}$ is a scalar in this example. However, each element of $\theta_{n}^{\causample}$ captures a different weighted sum of $\theta_{n,i}$; thus, the interpretation is unclear.
\end{example}

\begin{remark}
    \textbf{(i)} Assuming $d_{\tilde{T}}=d_T$ requires the researcher to correctly specify the dimension of the exposure mapping ($d_{T}=d_{\tilde{T}}$).
    However, this assumption allows the researcher to misspecify the shape of $\tilde{g}\neq g$ or mismeasure the network.

    \textbf{(ii)} Our result for $\theta_n^{\causal}$ is a design-based analogue of Proposition 1 in \cite{goldsmith2022contamination}. The main differences are that their analysis is model-based and focuses on mutually exclusive treatment indicators (e.g., K-arms).\footnote{Mutually exclusive treatments guarantee that each treatment's own effect receives a non-negative weight.}\textsuperscript{,}\footnote{
        \cite{goldsmith2022contamination} propose three approaches to eliminate contamination bias, but all require modeling the conditional expectation of heterogeneous treatment effects based on observed covariates. In a design-based setting with deterministic treatment effects $\theta_{n,i}$, such modeling is not appropriate. Even if the modeling assumption is justified, their methods may be unreliable for network experiments due to weak overlap in propensity scores, which is often violated for common exposure mappings.
    }
    In contrast, we allow more flexible treatments, including network spillovers. Our decomposition for $\theta_n^{\causample}$ additionally accommodates both misspecification of the exposure mapping and mismeasurement of the network.

    \textbf{(iii)} If the distribution of $T_{n,i}$ does not depend on $i$, a result in \cref{cor:causal_element} can be strengthened to 
    \begin{align*}
        \theta_{n,(k)}^{\causal}=\frac{\sum_{i=1}^n\mathbb{E}[U_{n,i,(k)}X_{n,i,(k)}]\theta_{n,i,(k)}}{\sum_{i=1}^n\mathbb{E}[U_{n,i,(k)}^{2}]}
    \end{align*}
    for any $k$.
    That is, we do not have a contamination bias. 
    However, the weight can be negative.
    Moreover, the homogeneous requirement of the treatment variable $T_{n,i}$ is usually violated in design-based network experiments since the exposure mapping depends on the network information for each $i$ and the population network $\bA_n$ is treated as non-random.
    
    \textbf{(iv)} The weight for $\theta_{n}^{\causal}$ is clearly non-negative if the dimension of the treatment variable $T_{n,i}$ is one ($d_T=1$) because no contamination occurs when $d_{T}=1$.
    This result is consistent with \cite{borusyak2024negative}, but our result in \cref{cor:causal_element} is more general ($d_T>1$).
\end{remark}

\subsection{When Can We Avoid the Contamination Bias?}
The following statement provides sufficient conditions to avoid contamination bias.
Define the conditional covariance for randomvariables $W_1$ and $W_2$ given $\bR_n$ as $\Cov(W_1,W_2|\bR_n)=\mathbb{E}[(W_1-\mathbb{E}[W_1|\bR_n])(W_2-\mathbb{E}[W_2|\bR_n])|\bR_n]$.

\begin{corollary}
    \label{cor:no_contamination}
    Assume that \cref{asm:sampling,asm:linear,asm:moment,asm:linear_propensity} and $d_{\tilde{T}}=d_T$ hold.
    Suppose that $\mathbb{E}[\Cov(T_{n,i,(k)},T_{n,i,(l)}|\bR_n)]=0$ for all $i\in\mathcal{N}_{n}$ and for any $l\neq k$.
    Then, for large enough $n$, there is no contamination bias for $\theta_{n,(k)}^{\causal}$, i.e., 
    \begin{align*}
        \theta_{n,(k)}^{\causal}=\frac{\sum_{i=1}^n\mathbb{E}[X_{n,i,(k)}^2]\theta_{n,i,(k)}}{\sum_{i=1}^n\mathbb{E}[X_{n,i,(k)}^{2}]}
    \end{align*}
    for each $k=1,...,d_{T}$. Suppose that  $\Cov(\tilde{T}_{n,i,(k)},T_{n,i,(l)}|\bR_{n})=0$ for all $i\in\mathcal{N}_{n}$ with $R_{n,i}=1$ and for any $l\neq k$.
    Then, for large enough $n$, there is no contamination bias for $\theta_{n,(k)}^{\causample}$, i.e.,
    \begin{align*}
            \theta_{n,(k)}^{\causample}
            =&\frac{\sum_{i=1}^nR_{n,i}\mathbb{E}[\tilde{X}_{n,i,(k)}X_{n,i,(k)}|\bR_{n}]\theta_{n,i,(k)}}{\sum_{i=1}^nR_{n,i}\mathbb{E}[\tilde{X}_{n,i,(k)}^{2}|\bR_{n}]}\quad\text{a.s.}
    \end{align*}
    for each $k=1,...,d_{T}$.

    The weights of $\theta_{n,(k)}^{\causal}$ for $\theta_{n,i,(k)}$ are always non-negative.
    If we further assume that
    $\Cov(\tilde{T}_{n,i,(k)},T_{n,i,(k)}|\bR_{n})\geq0$ a.s. for all $i\in\mathcal{N}_{n}$ with $R_{n,i}=1$ and for all $k=1,...,d_{T}$, then the weights of $\theta_{n,(k)}^{\causample}$ for $\theta_{n,i,(k)}$ are non-negative, i.e., 
    \begin{equation*}
        \frac{R_{n,i}\mathbb{E}[\tilde{X}_{n,i,(k)}X_{n,i,(k)}|\bR_{n}]}{\sum_{i=1}^nR_{n,i}\mathbb{E}[\tilde{X}_{n,i,(k)}^{2}|\bR_{n}]}\geq0\quad\text{a.s.}
    \end{equation*}
    for all $i\in\mathcal{N}_{n}$ and each $k=1,...,d_{T}$.
\end{corollary}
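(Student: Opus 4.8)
The plan is to reduce every second-moment object appearing in \cref{thm:causal_element} to a conditional covariance of exposure mappings, after which the decorrelation hypotheses can be read off directly. The first step is to observe that \cref{asm:linear_propensity} forces the residualization to coincide with centering by the conditional mean. Substituting $\mathbb{E}[T_{n,i}\mid\bR_n]=L_nZ_{n,i}$ into the definition of $\Lambda_n$ and using that $Z_{n,i}$ is $\sigma(\bR_n)$-measurable yields $\Lambda_n=L_n$, and the analogous computation in the sample gives $\tilde\Lambda_n=\tilde L_n$; this is the same identity already used to establish \cref{thm:weakly_causal}, so I would cite it rather than redo it. Hence $X_{n,i}=T_{n,i}-\mathbb{E}[T_{n,i}\mid\bR_n]$ and $\tilde X_{n,i}=\tilde T_{n,i}-\mathbb{E}[\tilde T_{n,i}\mid\bR_n]$ are conditionally centered, so that for all $k,l$ one has $\mathbb{E}[X_{n,i,(k)}X_{n,i,(l)}\mid\bR_n]=\Cov(T_{n,i,(k)},T_{n,i,(l)}\mid\bR_n)$, $\mathbb{E}[\tilde X_{n,i,(k)}X_{n,i,(l)}\mid\bR_n]=\Cov(\tilde T_{n,i,(k)},T_{n,i,(l)}\mid\bR_n)$, and $\mathbb{E}[\tilde X_{n,i,(k)}\tilde X_{n,i,(l)}\mid\bR_n]=\Cov(\tilde T_{n,i,(k)},\tilde T_{n,i,(l)}\mid\bR_n)$. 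This turns each hypothesis of the corollary into a statement that a per-unit cross-moment of residualized regressors vanishes.

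For $\theta_{n,(k)}^{\causal}$, applying the tower property to the hypothesis $\Cov(T_{n,i,(k)},T_{n,i,(l)}\mid\bR_n)=0$ gives $\mathbb{E}[X_{n,i,(k)}X_{n,i,(l)}]=0$ for every $i$ and every $l\neq k$. In particular the aggregate cross-moment $\sum_i\mathbb{E}[X_{n,i,(k)}X_{n,i,(-k)}']$ is zero, so the projection coefficient in the definition of $U_{n,i,(k)}$ vanishes and $U_{n,i,(k)}=X_{n,i,(k)}$. Substituting this into the population expression for $\theta_{n,(k)}^{\causal}$ in \cref{thm:causal_element}, the contamination numerator $\sum_i\mathbb{E}[U_{n,i,(k)}X_{n,i,(-k)}']\theta_{n,i,(-k)}$ vanishes term by term: because each per-unit factor is zero, the heterogeneity of $\theta_{n,i,(-k)}$ no longer matters, which is precisely what distinguishes this from the mere aggregate orthogonality noted in the discussion following \cref{thm:causal_element}. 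Replacing $U_{n,i,(k)}$ by $X_{n,i,(k)}$ in the surviving term gives the stated formula, and since each $\mathbb{E}[X_{n,i,(k)}^2]\ge0$ with positive denominator by \cref{asm:moment}, the weights are non-negative.

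For $\theta_{n,(k)}^{\causample}$ I would argue analogously, but now two distinct restrictions play two distinct roles. The hypothesis $\Cov(\tilde T_{n,i,(k)},T_{n,i,(l)}\mid\bR_n)=0$ for $l\neq k$ makes the per-unit terms $\mathbb{E}[\tilde X_{n,i,(k)}X_{n,i,(l)}\mid\bR_n]$ vanish, which eliminates the cross-dimensional $\theta_{n,i,(-k)}$ contribution in the numerator. The \emph{main obstacle} is the denominator: to pass from $\tilde U_{n,i,(k)}$ to $\tilde X_{n,i,(k)}$ I must show that partialling the remaining observed coordinates $\tilde X_{n,i,(-k)}$ out of $\tilde X_{n,i,(k)}$ is trivial, i.e., that the off-diagonal blocks of the conditional Gram matrix $\sum_i R_{n,i}\mathbb{E}[\tilde X_{n,i}\tilde X_{n,i}'\mid\bR_n]=\sum_i R_{n,i}\Var(\tilde T_{n,i}\mid\bR_n)$ vanish. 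Equivalently, I need $\sum_i R_{n,i}\Cov(\tilde T_{n,i,(k)},\tilde T_{n,i,(l)}\mid\bR_n)=0$ for $l\neq k$, the decorrelation among the \emph{observed} exposure-mapping coordinates, which is the natural sample-side counterpart of the population hypothesis. Only once this is secured does the projection coefficient defining $\tilde U_{n,i,(k)}$ equal zero, so that $\tilde U_{n,i,(k)}=\tilde X_{n,i,(k)}$, the denominator collapses to $\sum_i R_{n,i}\mathbb{E}[\tilde X_{n,i,(k)}^2\mid\bR_n]$, and the surviving numerator becomes $\sum_i R_{n,i}\mathbb{E}[\tilde X_{n,i,(k)}X_{n,i,(k)}\mid\bR_n]\theta_{n,i,(k)}$, as claimed. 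I would keep all of this conditional on $\bR_n$ and invoke the almost-sure invertibility and variation conditions of \cref{asm:moment} so that the inverse and the projection are well defined for large $n$.

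Finally, the non-negativity of the sample-level weights follows immediately from the reductions above: each weight equals $R_{n,i}\Cov(\tilde T_{n,i,(k)},T_{n,i,(k)}\mid\bR_n)$ divided by $\sum_i R_{n,i}\Var(\tilde T_{n,i,(k)}\mid\bR_n)$, whose numerator is non-negative under the extra hypothesis $\Cov(\tilde T_{n,i,(k)},T_{n,i,(k)}\mid\bR_n)\ge0$ and whose denominator is almost surely positive by \cref{asm:moment}.
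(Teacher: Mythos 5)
Your proof follows the same route as the paper's: apply \cref{lem:as_rep} so that $X_{n,i}$ and $\tilde{X}_{n,i}$ are conditionally centered, identify every per-unit cross-moment with a conditional covariance, argue that the projection coefficients defining $U_{n,i,(k)}$ and $\tilde{U}_{n,i,(k)}$ vanish, and then simplify the formulas of \cref{thm:causal_element}. The population-level half of your argument is complete and coincides with the paper's proof, including the tower-property step and the non-negativity of the weights.

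The obstacle you flag in the sample-level half is genuine, and it is the crux: to conclude $\tilde{U}_{n,i,(k)}=\tilde{X}_{n,i,(k)}$ (and hence to collapse both the numerator and the denominator) you need the observed-coordinate decorrelation $\sum_{i=1}^{n} R_{n,i}\Cov(\tilde{T}_{n,i,(k)},\tilde{T}_{n,i,(l)}\mid\bR_n)=0$ for $l\neq k$, and this is \emph{not} implied by the two stated hypotheses, which control only $\Cov(T_{n,i,(k)},T_{n,i,(l)}\mid\bR_n)$ and $\Cov(\tilde{T}_{n,i,(k)},T_{n,i,(l)}\mid\bR_n)$. For instance, if two coordinates of $\tilde{T}_{n,i}$ are built from treatment draws of units that enter no coordinate of $T_{n,i}$, both hypotheses hold while the observed coordinates are perfectly correlated; then, by \cref{thm:weakly_causal}, $\theta_{n,(k)}^{\causample}=(G^{-1}v)_k$ with $G=\sum_{i=1}^{n} R_{n,i}\mathbb{E}[\tilde{X}_{n,i}\tilde{X}_{n,i}'\mid\bR_n]$ non-diagonal, and the claimed ratio formula fails. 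Your write-up states this condition as needed (``only once this is secured'') but never secures it --- and it cannot be secured from the hypotheses as written. You should know, however, that the paper's own proof is no better at exactly this point: it asserts $\tilde{U}_{n,i,(k)}=\tilde{X}_{n,i,(k)}$ directly ``by the definition'' and the stated conditions, silently invoking precisely the condition you isolated, so your diagnosis exposes a gap in the corollary as stated rather than a defect of your strategy. To finish, either strengthen the hypotheses to include $\Cov(\tilde{T}_{n,i,(k)},\tilde{T}_{n,i,(l)}\mid\bR_n)=0$ for $l\neq k$, or verify it for the class of exposure mappings at hand; this is what the paper's discussion implicitly assumes, since when distinct coordinates of $T_{n,i}$ and $\tilde{T}_{n,i}$ depend on disjoint sets of independent treatment draws, as in \cref{ex:q_function}, all three families of covariances vanish simultaneously.
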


The zero conditional covariance assumption is satisfied if elements of $T_{n,i}$ and $\tilde{T}_{n,i}$ are mutually independent.
The positive conditional covariance assumption is satisfied under the censored network (see \cref{ex:censor_contamination} below).

\begin{remark}
    Under homogeneous treatment effects $\theta_{n,i}=\theta_{n}$, we have $\theta_{n}^{\causal}=\theta_{n}$, but 
    \begin{equation*}
        \theta_{n}^{\causample}=\theta_{n}-\left(\sum_{i=1}^{n}R_{n,i}\mathbb{E}[\tilde{X}_{n,i}\tilde{X}_{n,i}'|\bR_{n}]\right)^{-1}\sum_{i=1}^{n}R_{n,i}\mathbb{E}[\tilde{X}_{n,i}(X_{n,i}-\tilde{X}_{n,i})'|\bR_{n}]\theta_{n}.
    \end{equation*}
    Thus, $\theta_{n}^{\causal}$ does not have contamination bias for homogeneous treatment effects, but $\theta_{n}^{\causample}$ does.
    Under homogeneous treatment effects and $X_{n,i}=\tilde{X}_{n,i}$, we have $\theta_{n}^{\causal}=\theta_{n}^{\causample}=\theta_{n}$.
\end{remark}

\begin{example}
    \label{ex:censor_contamination}
    Consider the exposure mapping in \cref{ex:censor}. 
    The misspecified exposure mapping is $\tilde{T}_{n,i}=\tilde{g}(i,\bD_{n},\tilde{\bA}_{n})=\mathds{1}\left\{\sum_{j\neq i}C_{n,i,j}\tilde{A}_{n,i,j}R_{n,j}D^*_{n,j}>0\right\}$.
    Assume that $D^*_{n,i} \sim \text{Bernoulli}(p_{n})$ for $i = 1, \dots, n$ independently.
    By adapting \cref{cor:no_contamination}, $\theta_{n}^{\causample}$ is a convex combination of $\theta_{n,i}$. 
    Indeed, we can calculate
    \begin{align*}
            \theta_{n}^{\causample}
            =&\frac{\sum_{i=1}^nR_{n,i}\left(1-(1-p_n)^{\sum_{j\neq i}C_{n,i,j}\tilde{A}_{n,i,j}R_{n,j}}\right)\theta_{n,i,(1)}}{\sum_{i=1}^nR_{n,i}\left(1-(1-p_n)^{\sum_{j\neq i}C_{n,i,j}\tilde{A}_{n,i,j}R_{n,j}}\right)},
    \end{align*}
    and the weights are non-negative.
    In general, if both mappings 
    $\tilde{T}_{n,i,(k)}$ and $T_{n,i,(k)}$ are weakly increasing (or both weakly decreasing) in $\{D^*_{n,i}\}_{i\in\mathcal{N}_n}$, then the weights are non-negative.
    Thus, censoring does not cause negative weight problems when $g$ is weakly monotone on $\{D^*_{n,i}\}_{i\in\mathcal{N}_n}$ for the first neighborhood exposure mapping.    
\end{example}

\subsection{More Examples}
\begin{example}
    \label{ex:q_function}
    Consider a general form of exposure mapping. For some function $q:\mathbb{R}^2\to\mathbb{R}$, let $T_{n,i}\allowbreak=(R_{n,i}D^*_{n,i},\allowbreak q(\sum_{j\neq i}A_{n,i,j}R_{n,j}D^*_{n,j},\allowbreak\sum_{j\neq i}A_{n,i,j}R_{n,j}))$ and 
    $\tilde{T}_{n,i}\allowbreak=(R_{n,i}D^*_{n,i},\allowbreak q(\sum_{j\neq i}\tilde{A}_{n,i,j}R_{n,j}D^*_{n,j},\allowbreak\sum_{j\neq i}\tilde{A}_{n,i,j}R_{n,j}))$. For example, the share of treated friends is covered by the following $q$:
    \begin{equation*}
        q\left(\sum_{j\neq i}A_{n,i,j}R_{n,j}D^*_{n,j},\sum_{j\neq i}A_{n,i,j}R_{n,j}\right)=\frac{\sum_{j\neq i}A_{n,i,j}R_{n,j}D^*_{n,j}}{\sum_{j\neq i}A_{n,i,j}R_{n,j}}.
    \end{equation*} 
    It also covers the indicator function as in \cref{ex:withoutmiss}.
    Since $D^*_{n,i}\indep D^*_{n,j}$, this satisfies the no-correlation conditions. If $q$ is non-decreasing with respect to the first argument, then $\tilde{T}_{n,i}$ and $T_{n,i}$ are positively correlated, giving $\theta_{n}^{\causample}$ a clear causal interpretation. This type of exposure mapping is used in \cite{cai2015social} and \cite{carter2021subsidies}.
    As we illustrated above in the special case, the censoring $\tilde{T}_{n,i}=(R_{n,i}D^*_{n,i},q(\sum_{j\neq i}C_{n,i,j}\tilde{A}_{n,i,j}R_{n,j}D^*_{n,j},\sum_{j\neq i}C_{n,i,j}\tilde{A}_{n,i,j}R_{n,j}))$ does not cause negative weight problems since the exposure mapping $g$ is weakly monotone on $\{D^*_{n,i}\}_{i\in\mathcal{N}_n}$.
\end{example}

\begin{example}
    \label{ex:aronow}
    Let 
    $T_{n,i}=(R_{n,i}D^*_{n,i}G_{n,i}, R_{n,i}D^*_{n,i}(1-G_{n,i}),(1-R_{n,i}D^*_{n,i})G_{n,i})$, where $G_{n,i}=\mathds{1}\{\sum_{j\neq i}A_{n,i,j}\allowbreak R_{n,j}D^*_{n,j}>0\}$. The exposure mapping categorizes each unit $i$ into one of three mutually exclusive exposure types, based on their own treatment status and the presence of treated friends.\footnote{The slope of the OLS estimator captures the effect associated with the group of units that are untreated and have no treated friends.} The elements are mutually exclusive but dependent, so the no-correlation conditions are violated, and we have a contamination bias. This exposure mapping is used in \cite{aronow2017estimating}. For the exposure mapping with dependence among its elements, we recommend using the inverse propensity score weighting (IPW) estimators to avoid contamination bias.
\end{example}

\begin{remark}(Comparison with IPW estimators)
    The causal estimand for the IPW estimators is the average treatment effect (ATE), $(1/n)\sum_{i=1}^n Y_{n,i}^*(t)$ for each $t$.
    In other words, the IPW estimator and the regression estimator are for different causal estimands.
    While the IPW estimator works well for cases like \cref{ex:aronow}, it is not suitable for cases like \cref{ex:cai} because the overlapping condition of the propensity score is easily violated.
    For example, suppose that $T_{n,i}$ is the treated friends share $(\sum_{j\neq i}A_{n,i,j}R_{n,j}D^*_{n,j})/(\sum_{j\neq i}A_{n,i,j})$, and there are two units having three and two friends in the population network, respectively. The former can take $T_{n,i}=1/3$ with positive probability, but the latter never takes the value. Thus, the overlapping condition fails to hold. 
    Moreover, the overlapping condition can be violated in the sampled network even if it is satisfied in the population network, since the sampled network is a sub-network of the population one.

    The choice between the IPW estimator and the regression should be decided by the exposure mapping formula that the researcher wants to use.
    We recommend using the IPW estimators to avoid contamination bias when the overlapping condition is satisfied.
    On the other hand, if there is any doubt about the overlapping condition or the exposure mapping takes (nearly) continuous values, we suggest using the regression model since it does not require the overlapping condition. We leave a more detailed comparison between the IPW estimator and the OLS estimator for future research.
\end{remark}

\begin{example}
    \label{ex:cai}
    Consider an exposure mapping
    $$T_{n,i}=\left(R_{n,i}D^*_{n,i},\frac{\sum_{j\neq i}A_{n,i,j}R_{n,j}D^*_{n,j}}{\sum_{j\neq i}A_{n,i,j}},\frac{\sum_{j\neq i}\sum_{k\neq i,j}A_{n,i,j}A_{n,j,k}R_{n,k}D^*_{n,k}}{\sum_{j\neq i}\sum_{k\neq i,j}A_{n,i,j}A_{n,j,k}}\right),$$
    where the first element indicates whether unit $i$ is directly treated or not, the second element captures the treated friends share among $i$'s first neighbors, and the third element captures the treated friends share among $i$'s second neighbors.
    There are overlaps in $\bD_{n}$ in the second and third elements if there are triangles in the network, so no-correlation conditions are generally violated. \cref{fig:sub_triangle} shows an example of a network with triangles. The second element of $T_{n,i}$ is the average of the neighbors' treatment status including $D_{n,i_{1}}$ and $D_{n,i_{2}}$. The third element is the average of the first neighbors' treatment status, including $D_{n,i_{1}}$ and $D_{n,i_{2}}$, again. Thus, the second and third elements are correlated.
    This setting is employed in \cite{cai2015social}.
    An easy way to avoid contamination bias is to modify the exposure mapping $g$ to eliminate the double counting. For example, we can use 
    \begin{align}
        T_{n,i}=\left(R_{n,i}D^*_{n,i},\frac{\sum_{j\neq i}A_{n,i,j}R_{n,j}D^*_{n,j}}{\sum_{j\neq i}A_{n,i,j}},\frac{\sum_{j\neq i}\sum_{k\neq i,j}A_{n,i,j}A_{n,j,k}(1-A_{n,i,k})R_{n,k}D^*_{n,k}}{\sum_{j\neq i}\sum_{k\neq i,j}A_{n,i,j}A_{n,j,k}(1-A_{n,i,k})}\right),
        \label{eq:triangle}
    \end{align} 
    instead.
    Although we miss some of the second-order links, we still manage to avoid the double counting and hence contamination bias.
    \begin{figure}[h]
        \caption{Networks with triangle links}
        \centering
        
        \begin{subfigure}[b]{0.38\textwidth}
            \centering
            \begin{tikzpicture}[
                scale=.75,
                transform shape,
                every node/.style={
                    circle,
                    draw=blue,
                    fill=blue!30,
                    minimum size=1cm
                }
            ]
                \node (center) at (0,0) {$i$};
    
                \node (A) at (2.7,0) {$i_1$};
                \node (B) at (-2.5,-1) {};
                \node (C) at (2,2) {$i_2$};
                \node (D) at (-1,2) {};
    
                \draw (center) -- (A);
                \draw (center) -- (B);
                \draw (center) -- (C);
                \draw (center) -- (D);
    
                \node (A1) at (5,-0.8) {};
                \node (A2) at (5,0.8)  {};
                \draw (A) -- (A1);
                \draw (A) -- (A2);
                \draw (A1) -- (A2);
                \draw (A) -- (C);
    
                \draw (A1) -- ++(1,-0.2);
                \draw (A1) -- ++(1,0.2);
                \draw (A2) -- ++(1,0.7);
                \draw (center) -- ++(0,-1.5);
                \draw (A) -- ++(0.3,-1.5);
                \draw (B) -- ++(-1,-0.2);
                \draw (C) -- ++(0.2,1);
                \draw (D) -- ++(0.2,0.9);
                \draw (D) -- ++(-2,0.7);
            \end{tikzpicture}
            \caption{Without censored links}
            \label{fig:sub_triangle}
        \end{subfigure}
        \hspace{50pt}
        \begin{subfigure}[b]{0.38\textwidth}
            \centering
            \begin{tikzpicture}[
                scale=.75,          
                transform shape,    
                every node/.style={
                    circle,
                    draw=blue,
                    fill=blue!30,
                    minimum size=1cm
                }
            ]
                \node (center) at (0,0) {$i$};
    
                \node (A) at (2.7,0) {$i_1$};
                \node (B) at (-2.5,-1) {};
                \node (C) at (2,2) {$i_2$};
                \node (D) at (-1,2) {};
    
                \draw (center) -- (A);
                \draw (center) -- (B);
                \draw[dashed, gray] (center) -- (C);   
                \draw (center) -- (D);
    
                \node (A1) at (5,-0.8) {};
                \node (A2) at (5,0.8)  {};
                \draw (A) -- (A1);
                \draw (A) -- (A2);
                \draw (A1) -- (A2);
                \draw (A) -- (C);
    
                \draw (A1) -- ++(1,-0.2);
                \draw (A1) -- ++(1,0.2);
                \draw (A2) -- ++(1,0.7);
                \draw (center) -- ++(0,-1.5);
                \draw[dashed, gray] (A) -- ++(0.3,-1.5);
                \draw (B) -- ++(-1,-0.2);
                \draw (C) -- ++(0.2,1);
                \draw (D) -- ++(0.2,0.9);
                \draw (D) -- ++(-2,0.7);
            \end{tikzpicture}
            \caption{With censored (dashed) links}
            \label{fig:sub_triangle_censor}
        \end{subfigure}
        
        \label{fig:triangle_combined}
    \end{figure}

\end{example}

\begin{example}
    \label{ex:cai_censor}
    Consider the setup in \cref{ex:cai} but with censoring caused by naming up to four friends.
    As illustrated in \cref{fig:sub_triangle_censor}, suppose that the sampled network link between $i_1$ and $i_2$ is not observed due to the censoring. Then, $i_2$ is misclassified as a second neighborhood friend in the observed network while $i_2$ is a first neighborhood friend in the population network.
    Thus, if we consider the true exposure mapping $T_{n,i}$ as in \cref{eq:triangle}, and a misspecified exposure mapping for the sampled network
    \begin{align*}
        \tilde{T}_{n,i}=&\left(R_{n,i}D^*_{n,i},
        \frac{\sum_{j\neq i}C_{n,i,j}\tilde{A}_{n,i,j}R_{n,j}D^*_{n,j}}{\sum_{j\neq i}C_{n,i,j}\tilde{A}_{n,i,j}},\right.\\
        &\left.\qquad
        \frac{\sum_{j\neq i}\sum_{k\neq i,j}C_{n,i,j}\tilde{A}_{n,i,j}C_{n,j,k}\tilde{A}_{n,j,k}(1-C_{n,i,k}\tilde{A}_{n,i,k})R_{n,k}D^*_{n,k}}{\sum_{j\neq i}\sum_{k\neq i,j}C_{n,i,j}\tilde{A}_{n,i,j}C_{n,j,k}\tilde{A}_{n,j,k}(1-C_{n,i,k}\tilde{A}_{n,i,k})}\right),
    \end{align*}
    then, there is a correlation between $T_{n,i,(2)}$ and $\tilde{T}_{n,i,(3)}$.
    An easy way to avoid contamination bias is to modify the exposure mapping $\tilde{g}$ so that $\tilde{T}_{n,i,(3)}$ equals zero for individuals subject to censoring.
    For example, if the censoring happens by asking up to four friends, we can eliminate the individuals with four observed links from consideration
    \begin{equation*}
        \tilde{T}_{n,i,(3)}=\frac{\sum_{j\neq i}\sum_{k\neq i,j}C_{n,i,j}\tilde{A}_{n,i,j}C_{n,j,k}\tilde{A}_{n,j,k}(1-C_{n,i,k}\tilde{A}_{n,i,k})R_{n,k}D^*_{n,k}}{\sum_{j\neq i}\sum_{k\neq i,j}C_{n,i,j}\tilde{A}_{n,i,j}C_{n,j,k}\tilde{A}_{n,j,k}(1-C_{n,i,k}\tilde{A}_{n,i,k})}\mathds{1}\left\{\sum_{j\neq i}C_{n,i,j}\tilde{A}_{n,i,j}<4\right\}.
    \end{equation*}
    Note that the censoring for $i$ does not matter for the first neighborhood element $\tilde{T}_{n,i,(2)}$ by the same logic as \cref{ex:q_function}. Moreover, the censoring for $i_1$ does not matter for the second neighborhood element $\tilde{T}_{n,i,(3)}$ of $i$ because it does not introduce any misclassification.
\end{example}

\subsection{Asymptotic Theory}
We mostly follow the notation of \cite{kojevnikov2021limit}. Let $\mathcal{N}_{n}=\{1,...,n\}$ be the set of population units and $d_{n}(i,j)$ be the shortest distance between $i,j\in \mathcal{N}_{n}$ on $\bA_{n}$ (set $d_{n}(i,i)=0$; set $d_{n}(i,j)=\infty$ if there are no paths between $i$ and $j$). Define $\mathcal{L}_{v}=\{\mathcal{L}_{v,a}:a\in\mathbb{N}\}$, where $\mathcal{L}_{v,a} = \{f:\mathbb{R}^{v\times a}\to \mathbb{R}:\|f\|_{\infty}<\infty,\Lip(f)<\infty\}$, $\|\cdot\|_{\infty}$ is the sup-norm, and $\Lip(f)$ is the Lipschitz constant of $f$. Let $\mathcal{P}_{n}(a,b;s)=\{(A,B):A,B\subset \mathcal{N}_{n},|A|=a,|B|=b,d_{n}(A,B)\geq s\}$, where $d_{n}(A,B)=\min_{i\in A}\min_{j\in B}d_{n}(i,j)$.
For each $A\subset \mathcal{N}_{n}$ and triangular array $(U_{n,i})$, let us write $U_{n,A}=(U_{n,i})_{i\in A}$.
\begin{definition}
    A triangular array $\{U_{n,i}\}, n\geq 1,U_{n,i}\in \mathbb{R}^{v}$, is called {\it conditionally $\psi$-dependent given $\bR_n$}, if for each $n\in\mathbb{N}$, there exists a $\sigma(\bR_n)$-measurable sequence $\xi_{n}=\{\xi_{n,s}\}_{s\geq 0},\xi_{n,0}=1$, and a collection of nonrandom functions $(\psi_{a,b})_{a,b\in\mathbb{N}},\psi_{a,b}:\mathcal{L}_{v,a}\times\mathcal{L}_{v,b}\to[0,\infty)$ such that for all $(A,B)\in\mathcal{P}_{n}(a,b;s)$ with $s>0$ and all $f\in\mathcal{L}_{v,a}$ and $g\in\mathcal{L}_{v,b}$,
    \begin{align*}
        |\Cov(f(U_{n,A}),g(U_{n,B}))|\leq \psi_{a,b}(f,g)\xi_{n,s} \quad\text{a.s.}
    \end{align*}
\end{definition}

Define
\begin{align*}
    \mathcal{N}_{n}(i;s) =\{j\in \mathcal{N}_{n}:d_{n}(i,j)\leq s\},
\end{align*}
which is the set of $i$'s neighborhood within $s$-distance.
First, we assume that the network dependence of the exposure mappings is local. 
\begin{assumption}\label{asm:local}
    There exists some $K\in\mathbb{N}$ such that for any $i\in \mathcal{N}_{n},n\in\mathbb{N}$ and $\bd_{n},\bd_{n}^{\prime}\in\{0,1\}^{n}$ such that $\bd_{n,\mathcal{N}_{n}(i,K)}=\bd'_{n,\mathcal{N}_{n}(i,K)}$,
    \begin{align*}
        g(i,\bd_{n},\bA_{n})&=g(i,\bd_{n}^{\prime},\bA_{n}),\quad\text{ and }\quad
        \tilde{g}(i,\bd_{n},\tilde{\bA}_{n})=\tilde{g}(i,\bd_{n}^{\prime},\tilde{\bA}_{n})\quad\text{a.s.}
    \end{align*}
\end{assumption}

Let $\tilde{d}_{n}(i,j)$ be the shortest distance between $i,j\in \mathcal{N}_{n}$ on $\tilde{\bA}_{n}$.
\cref{asm:local,asm:sampling} imply that $T_{n,i}\indep T_{n,j}$ if $d_{n}(i,j)>2K$. They also imply that $\tilde{T}_{n,i}\indep \tilde{T}_{n,j}$ if $d_{n}(i,j)>2K$ because $\tilde{d}_{n}(i,j)\geq d_{n}(i,j)$ almost surely and because $i$ and $j$ do not share $R_{n,k}$ and $D^*_{n,k}$ for any $k\neq i,j$ in their $K$-neighborhoods.

Under the correctly specified exposure mapping, $g=\tilde{g}$, the condition $g(i,\bd_{n},\bA_{n})=g(i,\bd_{n}^{\prime},\bA_{n})$ automatically implies $\tilde{g}(i,\bd_{n},\tilde{\bA}_{n})=\tilde{g}(i,\bd_{n}^{\prime},\tilde{\bA}_{n})$ a.s. because the distance on a sampled network is always weakly longer than that on the population network: $\tilde{d}_{n}(i,j)\geq d_{n}(i,j)$. For the same reason, the distance on a censored network is always weakly longer than that on sampled or population networks.

Define $\mathcal{N}_{n}^{\partial}(i;s)=\{j\in \mathcal{N}_{n}:d_{n}(i,j)=s\}$, which is the set of $i$'s neighborhood with exact $s$-distance, and its $p$-th sample moment $\delta_{n}^{\partial}(s;p)=n^{-1}\sum_{i\in \mathcal{N}_{n}}|\mathcal{N}_{n}^{\partial}(i;s)|^{p}$.
The next assumption requires that the sum of these $p$-th sample moments within $2K$-distance is bounded.
\begin{assumption}
    \label{asm:sparsity}
    The sequence of networks $(\bA_{n})$ satisfies
    \begin{align*}
        \sum_{1\leq s\leq 2K}\delta_{n}^{\partial}(s;1)=O(1).
    \end{align*}
\end{assumption}
By a simple calculation and $\rho>0$, we can show that \cref{asm:sparsity} is equivalent to $(n\rho_n)^{-1}\sum_{i=1}^{n}\allowbreak\sum_{j\in\mathcal{N}_n(i;2K)}1=O(1)$. Also note that \cref{asm:sparsity} is weaker than the bounded network degree since this assumption only requires the boundedness on average.

Then, we show that our estimator is consistent for the sample-level causal estimand:
\begin{theorem}\label{thm:consistency}
    Under \cref{asm:linear,asm:sampling,asm:moment,asm:linear_propensity,asm:local,asm:sparsity}, $$\hat{\theta}_{n}-\theta_{n}^{\causample}\overset{p^R}{\longrightarrow}0
    \quad\text{ and }\quad \hat{\theta}_{n}-\theta_{n}^{\causample}\overset{p}{\to}0,$$
    where $\overset{p^R}{\longrightarrow}$ denotes convergence in probability conditional on $\bR_n$, that is, for any $\varepsilon>0$,
    $$
        \mathbb{P}\left(\|\hat{\theta}_{n}-\theta_{n}^{\causample}\|\leq\varepsilon\mid\bR_n\right)\overset{a.s.}{\longrightarrow}1
    $$ 
    as $n\to\infty$.
\end{theorem}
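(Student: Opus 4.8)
The plan is to reduce the claim to a conditional law of large numbers for the sample moment matrix. Observe first that $\tilde\Omega_n=\mathbb{E}[\tilde Q_n\mid\bR_n]$, so that both the estimator and the estimand are the \emph{same} continuous map---a block of $M^{-1}b$ for the $(X,Z)$ sub-matrix $M$ and right-hand vector $b$---applied to $\tilde Q_n$ and to $\tilde\Omega_n$ respectively. Hence it suffices to show that every entry of $\tilde Q_n-\tilde\Omega_n$ converges to zero conditionally on $\bR_n$, and then to invoke continuity of matrix inversion together with the full-rank part of \cref{asm:moment}, which guarantees that the relevant sub-matrices are invertible for large $n$ almost surely.

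For the conditional LLN, I would fix a generic entry $W_{n,i}$ of the outer product $(Y_{n,i},\tilde X_{n,i}',\tilde Z_{n,i}')'(Y_{n,i},\tilde X_{n,i}',\tilde Z_{n,i}')$. Conditional on $\bR_n$, the matrix $\tilde\Lambda_n$ and the vectors $\tilde Z_{n,i}$ are $\sigma(\bR_n)$-measurable constants, so $\tilde X_{n,i}$ is an affine function of $\tilde T_{n,i}$; moreover \cref{asm:local} makes $T_{n,i}$, $\tilde T_{n,i}$, and hence $Y_{n,i}$ functions of $\{D^*_{n,k}:k\in\mathcal{N}_n(i;K)\}$ only. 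Because the $D^*_{n,k}$ are independent, $W_{n,i}\indep W_{n,j}$ given $\bR_n$ whenever $d_n(i,j)>2K$, so the conditional covariances vanish outside distance $2K$ and
\[
\Var\!\Big(\tfrac1N\textstyle\sum_{i}R_{n,i}W_{n,i}\,\Big|\,\bR_n\Big)=\tfrac{1}{N^2}\sum_{d_n(i,j)\le 2K}R_{n,i}R_{n,j}\Cov(W_{n,i},W_{n,j}\mid\bR_n).
\]
By the boundedness in \cref{asm:moment} each covariance is bounded by a constant, so the right-hand side is at most $C N^{-2}\sum_i|\mathcal{N}_n(i;2K)|=O(n/N^2)$ by \cref{asm:sparsity}. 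Since $N/n\to\rho>0$ almost surely (using \cref{lem:N_pos} and the Bernoulli LLN), this bound vanishes, and conditional Chebyshev yields $\tilde Q_n-\tilde\Omega_n\overset{p^R}{\to}0$ entrywise.

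To pass from the moment matrices to the coefficients, let $\tilde M_n,\bar M_n$ and $\tilde b_n,\bar b_n$ denote the $(X,Z)$ moment matrices and vectors built from $\tilde Q_n$ and $\tilde\Omega_n$. Writing
\[
\begin{pmatrix}\hat\theta_n\\\hat\gamma_n\end{pmatrix}-\begin{pmatrix}\theta_n^{\causample}\\\gamma_n^{\causample}\end{pmatrix}=\tilde M_n^{-1}(\tilde b_n-\bar b_n)-\tilde M_n^{-1}(\tilde M_n-\bar M_n)\begin{pmatrix}\theta_n^{\causample}\\\gamma_n^{\causample}\end{pmatrix},
\]
the difference vanishes once $\|\tilde M_n^{-1}\|$ and $\|(\theta_n^{\causample},\gamma_n^{\causample})\|$ are bounded along the sequence, since both $\tilde b_n-\bar b_n$ and $\tilde M_n-\bar M_n$ converge to zero conditionally. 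I expect this inversion step to be the main obstacle: it requires the normalized moment matrix $\bar M_n$ to have smallest eigenvalue bounded away from zero almost surely, not merely to be invertible at each $n$, so that $\bar M_n^{-1}$ is bounded, the estimand $\bar M_n^{-1}\bar b_n$ is bounded by the data bounds, and $\tilde M_n$ is invertible with bounded inverse for large $n$. This uniform nonsingularity is the content I would extract from the full-rank statements in \cref{asm:moment}, if necessary through a supporting lemma in \cref{app:lemmas}.

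Finally, the unconditional statement follows by integrating out $\bR_n$. By definition of $\overset{p^R}{\to}$ we have $\mathbb{P}(\|\hat\theta_n-\theta_n^{\causample}\|>\varepsilon\mid\bR_n)\to0$ almost surely; since these conditional probabilities are bounded by one, dominated convergence gives $\mathbb{P}(\|\hat\theta_n-\theta_n^{\causample}\|>\varepsilon)=\mathbb{E}[\mathbb{P}(\|\hat\theta_n-\theta_n^{\causample}\|>\varepsilon\mid\bR_n)]\to0$, which is exactly $\hat\theta_n-\theta_n^{\causample}\overset{p}{\to}0$.
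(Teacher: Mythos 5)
Your proposal is correct and takes essentially the same route as the paper: the paper's \cref{lem:LLN} is exactly your conditional Chebyshev law of large numbers (a variance term plus $2K$-neighborhood covariance terms controlled by the boundedness in \cref{asm:moment} and by \cref{asm:sparsity}, with the normalization handled through $N/(n\rho_n)\overset{a.s.}{\longrightarrow}1$, the paper's \cref{lem:nNconv}), after which the paper passes to the coefficients by matrix continuity and gets the unconditional claim by dominated convergence, just as you do. The only cosmetic differences are that the paper first invokes \cref{lem:as_rep} to make the off-diagonal blocks $\tilde{Q}_n^{XZ},\tilde{Q}_n^{ZX}$ vanish asymptotically instead of carrying the full $(X,Z)$ block identity, and the uniform-nonsingularity point you rightly flag is treated there as implicit in the invertibility statements of \cref{asm:moment}.
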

\cref{thm:consistency} establishes the internal validity of our network experiment. However, in general, $\hat{\theta}_{n}-\theta_{n}^{\causal}\not\overset{p}{\to}0$ because $\theta^{\causal}_{n}-\theta_{n}^{\causample}\not\overset{p}{\to}0$ due to misspecification of the exposure mapping. Moreover, as shown in \cref{cor:causal_element}, $\theta_{n}^{\causample}$ does not have a clear causal interpretation. Consequently, \cref{thm:consistency} does not guarantee the external validity of our network experiment.

Ideally, our network experiment would satisfy $\hat{\theta}_{n}-\theta_{n}^{\causal}\overset{p}{\to}0$ so that each element of $\hat{\theta}_{n}$ can be interpreted as a causal spillover effect. 
We show that this consistency is achieved when there is no misspecification and no mismeasurement ($\tilde{T}_{n,i}=T_{n,i}$ for each $i\in\mathcal{N}_{n}$) and the observed covariates coincide with those in the population ($\tilde{Z}_{n,i}=Z_{n,i}$ for each $i\in\mathcal{N}_{n})$. 
We are essentially assuming that each $\tilde{T}_{n,i}$ is computed by $g(i,\bD_{n},\bA_{n})=T_{n,i}$ where we replace $\tilde{g}$ with $g$ and $\tilde{\bA}_{n}$ with $\bA_{n}$. 
Under the linear propensity scores, we can show that $X_{n,i}=\tilde{X}_{n,i}$ a.s. (\cref{lem:Lambda}).

\begin{assumption}
    \label{asm:exposure}
    \
    \begin{enumerate}
        \item We have the following equalities almost surely for $R_{n,i}=1$: $\tilde{T}_{n,i}=T_{n,i}$ and $\tilde{Z}_{n,i}=Z_{n,i}$ for all $i\in\mathcal{N}_{n}$ and $n\in\mathbb{N}$.
        \item Each element of $T_{n,i}$ and $Z_{n,i}$ either does not depend on $R_{n,i}$, or depends on it only through a multiplicative form.
        \item At most one element of $T_{n,i}$ depends on $i$'s own treatment $R_{n,i}D^*_{n,i}$ and the element does not depend on $R_{n,j}$ and $D_{n,j}$ for any $j\neq i$.
    \end{enumerate}
\end{assumption}

\cref{asm:exposure} (i) holds when there is no misspecification and no mismeasurement for sampled units, i.e., $g = \tilde{g}$ and $\tilde{\bA}_n = \bA_n$ locally. For example, under star sampling with an exposure mapping restricted to the first neighborhood, all relevant links are correctly observed for sampled units. However, \cref{asm:exposure} (i) may not hold for exposure mappings with higher-order dependence, since more global measurement of the network is then required up to the relevant order. Nonetheless, in such cases, researchers can apply our results to $\theta_{n}^{\causample}$ instead of $\theta_{n}^{\causal}$ and interpret it as a convex combination of heterogeneous treatment effects.
\cref{asm:exposure} (ii) means some components of the covariate vector are independent of the sampling indicator, while others incorporate $R_{n, i}$ in a multiplicative way---for example, $Z_{n,i,(k)}=R_{n,i}p_{n,i}$.
We can always pick covariates $Z_{n,i}$ having \cref{asm:exposure} (ii) since $R_{n,i}$ enters only multiplicatively for $T_{n,i}$ by \cref{asm:sampling} if we include the direct effect without any transformation. Thus, we can choose covariates $Z_{n,i}$ satisfying \cref{asm:exposure} and \cref{asm:linear_propensity} simultaneously.
\cref{asm:exposure} (iii) is satisfied if we do not include the cross term of the direct effect $R_{n,i}D^*_{n,i}$ and a spillover effect.
Excluding the cross term is also used to guarantee no contamination (\cref{cor:no_contamination}).\footnote{We can allow the violation of \cref{asm:exposure} (iii) if we modify $\hat{\theta}_n$ in the same manner as $\tilde{\gamma}_n$ in \cref{app:gamma}.}

Under \cref{asm:exposure}, we can show the consistency of the OLS estimator $\hat{\theta}_{n}$ for the population-level causal estimand $\theta_{n}^{\causal}$:
\begin{theorem}
    \label{thm:causal_consistency}
    Under \cref{asm:linear,asm:sampling,asm:moment,asm:linear_propensity,asm:local,asm:sparsity,asm:exposure}, $$\hat{\theta}_{n}-\theta_{n}^{\causal}\overset{p}{\to}0.$$
\end{theorem}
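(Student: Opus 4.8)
The plan is to build on \cref{thm:consistency} and reduce the claim to a comparison between the sample-level and population-level estimands. Since \cref{thm:consistency} already gives $\hat{\theta}_{n}-\theta_{n}^{\causample}\overset{p}{\to}0$, by the triangle inequality it suffices to prove that $\theta_{n}^{\causample}-\theta_{n}^{\causal}\overset{p}{\to}0$, where $\theta_{n}^{\causal}$ is nonrandom and $\theta_{n}^{\causample}$ is a function of $\bR_{n}$. The entire argument therefore concerns the discrepancy between these two estimands, which vanishes only because \cref{asm:exposure} is now in force.

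First I would invoke \cref{lem:Lambda}: under \cref{asm:exposure} and the linear-propensity condition \cref{asm:linear_propensity}, we have $X_{n,i}=\tilde{X}_{n,i}$ almost surely on $\{R_{n,i}=1\}$, so the expression for $\theta_{n}^{\causample}$ from \cref{thm:weakly_causal} collapses to
\[
\theta_{n}^{\causample}=\Bigl(\sum_{i=1}^{n}R_{n,i}\mathbb{E}[X_{n,i}X_{n,i}'\mid\bR_{n}]\Bigr)^{-1}\sum_{i=1}^{n}R_{n,i}\mathbb{E}[X_{n,i}X_{n,i}'\mid\bR_{n}]\theta_{n,i}.
\]
Written this way, the only differences from $\theta_{n}^{\causal}=(\sum_i\mathbb{E}[X_{n,i}X_{n,i}'])^{-1}\sum_i\mathbb{E}[X_{n,i}X_{n,i}']\theta_{n,i}$ are (i) the restriction to sampled units through the weights $R_{n,i}$ and (ii) the replacement of the unconditional moment by the conditional moment $\mathbb{E}[\cdot\mid\bR_{n}]$.

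Next I would establish a conditional law of large numbers for the random sample moments $\hat{M}_{n}=n^{-1}\sum_i R_{n,i}\mathbb{E}[X_{n,i}X_{n,i}'\mid\bR_{n}]$ and $\hat{b}_{n}=n^{-1}\sum_i R_{n,i}\mathbb{E}[X_{n,i}X_{n,i}'\mid\bR_{n}]\theta_{n,i}$. By \cref{asm:local} each summand is a bounded function of the local sampling configuration $(R_{n,j})_{j\in\mathcal{N}_{n}(i;K)}$, so two summands whose index sets lie at network distance greater than $2K$ are independent, which is exactly the conditional $\psi$-dependence structure set up above. Expanding $\Var(\hat{M}_{n})$ as a double sum of covariances, only pairs with $d_{n}(i,i')\le 2K$ contribute, and \cref{asm:sparsity} bounds the number of such pairs by $O(n)$, so $\Var(\hat{M}_{n})=O(1/n)$ and likewise for $\hat{b}_{n}$. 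Hence $\hat{M}_{n}-\mathbb{E}[\hat{M}_{n}]\overset{p}{\to}0$ and $\hat{b}_{n}-\mathbb{E}[\hat{b}_{n}]\overset{p}{\to}0$, with $\mathbb{E}[\hat{M}_{n}]=n^{-1}\sum_i\mathbb{E}[R_{n,i}X_{n,i}X_{n,i}']$ and analogously for $\hat{b}_{n}$.

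The hard part is the final bookkeeping step: showing that $\mathbb{E}[\hat{M}_{n}]$ and $\mathbb{E}[\hat{b}_{n}]$ recombine in the ratio to reproduce $\theta_{n}^{\causal}$. This is where \cref{asm:exposure} does the work. Part (ii) forces $R_{n,i}$ to enter each covariate multiplicatively, so $Z_{n,i}$ vanishes off $\{R_{n,i}=1\}$; part (iii) isolates the single own-treatment coordinate $R_{n,i}D^*_{n,i}$ and decouples it from the spillover coordinates, which depend only on $(R_{n,j}D^*_{n,j})_{j\neq i}$. Tracking the sampling indicator through the residualized exposure mapping then shows that the own-treatment block of $\mathbb{E}[R_{n,i}X_{n,i}X_{n,i}']$ is effectively unweighted, because that coordinate already vanishes when $R_{n,i}=0$, while the spillover block carries a common scalar factor, and that the two blocks are orthogonal; since a common scalar within a block cancels between the Gram matrix and the cross-moment vector, the ratio is left invariant. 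Verifying this cancellation block by block—rather than hoping for a uniform proportionality $\mathbb{E}[R_{n,i}X_{n,i}X_{n,i}']=\rho_n\mathbb{E}[X_{n,i}X_{n,i}']$, which fails precisely for the own-treatment coordinate—is the main obstacle, and it is the reason \cref{asm:exposure} is stated exactly as it is. Once the block structure gives $\mathbb{E}[\hat{M}_{n}]^{-1}\mathbb{E}[\hat{b}_{n}]=\theta_{n}^{\causal}$, combining it with the concentration just established, the uniform invertibility in \cref{asm:moment}, and continuity of matrix inversion yields $\theta_{n}^{\causample}-\theta_{n}^{\causal}\overset{p}{\to}0$, which completes the proof.
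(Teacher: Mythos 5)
Your proposal is correct and takes essentially the same route as the paper: the paper also reduces the claim to $\theta_{n}^{\causample}-\theta_{n}^{\causal}\overset{p}{\to}0$, collapses both estimands via \cref{lem:as_rep}, separates the own-treatment coordinate from the spillover coordinates using \cref{asm:exposure} (i) and (iii), and lets the sampling factor cancel block by block (the $1/\rho_n$ factor in the own-treatment block, the common $\rho_n$ in the spillover block). Your inline Chebyshev-plus-sparsity concentration step is exactly the content of the paper's \cref{lem:LLN_W,lem:LLN_RW}, which the paper applies to the two blocks separately before stacking.
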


It is worth noting that \cref{thm:causal_consistency} does not hold if $\tilde{Z}_{n,i}\neq Z_{n,i}$, since we cannot ensure $\tilde{X}_{n,i}\sim X_{n,i}$ asymptotically. Instead, under no misspecification, \cref{cor:causal_element} implies
\begin{align*}
    \theta_{n,(k)}^{\causample}=\frac{\sum_{i=1}^nR_{n,i}\mathbb{E}[\tilde{U}_{n,i,(k)}^{2}|\bR_{n}]\theta_{n,i}}{\sum_{i=1}^nR_{n,i}\mathbb{E}[\tilde{U}_{n,i,(k)}^{2}|\bR_{n}]}
\end{align*}
for each $k=1,...,d_{\tilde{T}}$. Thus, although the consistency for $\theta_{n}^{\causal}$ may fail in this setting, the absence of misspecification alone recovers the causal interpretability of $\theta_{n}^{\causample}$, and by extension, that of $\hat{\theta}_{n}$.

Next, we consider the asymptotic distribution of $\hat{\theta}_{n}$. 
Now, we introduce additional dependence measures of the network.
Define $\Delta_{n}(s,m;k)=\frac{1}{n}\sum_{i\in \mathcal{N}_{n}}\max_{j\in \mathcal{N}_{n}^{\partial}(i;s)}|\mathcal{N}_{n}(i;m)\setminus \mathcal{N}_{n}(j;s-1)|^{k}$, and $c_{n}(s,m;k)=\inf_{\alpha >1}[\Delta_{n}(s,m;k\alpha)]^{1/\alpha}\left[\delta^{\partial}_{n}\left(s;\frac{\alpha}{\alpha-1}\right)\right]^{1-1/\alpha}$.
$c_{n}(s,m;k)$ measures the density of the network and is used as a sufficient condition for the CLT.

Define
\begin{align*}
    \tilde{\varepsilon}_{n,i}&=Y_{n,i}-\tilde{X}_{n,i}'\theta_{n}^{\causample}-\tilde{Z}_{n,i}'\gamma_{n}^{\causample},\\
    \varepsilon_{n,i}&=Y_{n,i}-X_{n,i}'\theta_{n}^{\causal}-Z_{n,i}'\gamma_{n}^{\causal},
\end{align*}
and 
\begin{align*}
    \tilde{\Sigma}_{n}
    =\Var\left(\sum_{i=1}^{n}R_{n,i}\tilde{X}_{n,i}\tilde{\varepsilon}_{n,i}\mid\bR_n\right),\qquad
    \Sigma_{n}
    =\Var\left(\sum_{i=1}^{n}R_{n,i}X_{n,i}\varepsilon_{n,i}\right).
\end{align*}

We impose the following assumption, which requires a weak dependence structure in the network and rules out overly dense networks.

\begin{assumption}
    \label{asm:dependence}
    There exists a positive sequence $m_{n}\to\infty$ such that for $p=1,2$,
    \begin{align*}
        n\tilde{\Sigma}_{n}^{-(1+p/2)}\sum_{s=0}^{2K}c_{n}(s,m_{n};p)\overset{a.s.}{\longrightarrow} 0,\qquad
        n\Sigma_{n}^{-(1+p/2)}\sum_{s=0}^{2K}c_{n}(s,m_{n};p)\to 0.
    \end{align*}
\end{assumption}
Then, we show that $\hat{\theta}_{n}$ is asymptotically normal relative to $\theta_{n}^{\causample}$:
\begin{theorem}
    \label{thm:normality}
    Under \cref{asm:linear,asm:sampling,asm:moment,asm:linear_propensity,asm:local,asm:sparsity,asm:dependence},
    \begin{align*}
        \tilde{\Sigma}_{n}^{-1/2}\tilde{Q}_n^{XX}(\hat{\theta}_{n}-\theta_{n}^{\causample})\overset{d^R}{\longrightarrow}\mathrm{N}(0,I_{d_{\tilde{T}}})
        \quad\text{and}\quad
        \tilde{\Sigma}_{n}^{-1/2}\tilde{Q}_n^{XX}(\hat{\theta}_{n}-\theta_{n}^{\causample})\overset{d}{\to}\mathrm{N}(0,I_{d_{\tilde{T}}}),
    \end{align*}
    where $\overset{d^R}{\to}$ denotes convergence in distribution conditional on $\bR_n$, that is,
    $$
        \left|\mathbb{P}\left(\tilde{\Sigma}_{n}^{-1/2}\tilde{Q}_n^{XX}(\hat{\theta}_{n}-\theta_{n}^{\causample}) \leq t\mid\bR_n\right)-F(t)\right|\overset{a.s.}{\longrightarrow}0
    $$
    as $n\to\infty$ for any $t\in\mathbb{R}^{d_{\tilde{T}}}$ letting $F(t)$ be the distribution function of $\mathrm{N}(0,I_{d_{\tilde{T}}})$.
\end{theorem}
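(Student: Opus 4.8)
The plan is to reduce the statement to a conditional central limit theorem (CLT) for a $\psi$-dependent triangular array and then transfer it to the unconditional statement. Throughout I work conditionally on $\bR_n$, so that the only remaining randomness is the latent treatment vector $\bD_n^*$, and every intermediate step is to be read as holding on a single probability-one set of sampling realizations.

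First I would linearize the estimation error. Subtracting the sample moment condition \cref{eq:moment_causalsample} that defines $(\theta_n^{\causample},\gamma_n^{\causample})$ from the OLS normal equations and collecting the cross-product blocks of $\tilde{Q}_n$, one obtains, with $\tilde{\varepsilon}_{n,i}=Y_{n,i}-\tilde{X}_{n,i}'\theta_n^{\causample}-\tilde{Z}_{n,i}'\gamma_n^{\causample}$,
\[
\tilde{Q}_n^{XX}(\hat{\theta}_n-\theta_n^{\causample})=\frac{1}{N}\sum_{i=1}^n R_{n,i}\tilde{X}_{n,i}\tilde{\varepsilon}_{n,i}-\tilde{Q}_n^{XZ}(\hat{\gamma}_n-\gamma_n^{\causample}).
\]
The population residualization built into $\tilde{X}_{n,i}$ gives $\tilde{\Omega}_n^{XZ}=0$, so $\tilde{Q}_n^{XZ}$ has conditional mean zero and is negligible relative to $\tilde{Q}_n^{XX}$ by the same weak law for $\psi$-dependent arrays used in \cref{thm:consistency}; combined with $\hat{\gamma}_n-\gamma_n^{\causample}=o_p(1)$, the correction term is of smaller order than the leading score. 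Writing $S_n:=\sum_{i=1}^n R_{n,i}\tilde{X}_{n,i}\tilde{\varepsilon}_{n,i}$ and inserting the factor of $N$ that reconciles the $1/N$ in $\tilde{Q}_n$ with the unnormalized variance $\tilde{\Sigma}_n=\Var(S_n\mid\bR_n)$, I would show $N\tilde{Q}_n^{XX}(\hat{\theta}_n-\theta_n^{\causample})=S_n+o_p(\|\tilde{\Sigma}_n^{1/2}\|)$, so it suffices to establish a CLT for $\tilde{\Sigma}_n^{-1/2}S_n$.

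The core step is this conditional CLT. By \cref{asm:local}, each summand $R_{n,i}\tilde{X}_{n,i}\tilde{\varepsilon}_{n,i}$ is a function of treatments in the $K$-neighborhood $\mathcal{N}_n(i;K)$; since the $D_{n,i}^*$ are conditionally independent across $i$ given $\bR_n$, two summands are conditionally independent whenever $d_n(i,j)>2K$, so the array is conditionally $\psi$-dependent with coefficients $\xi_{n,s}$ vanishing for $s>2K$ and is uniformly bounded by \cref{asm:moment}. I would then invoke the CLT for $\psi$-dependent arrays of \cite{kojevnikov2021limit} through the Cram\'er--Wold device, applying the scalar result to $\lambda'\tilde{\Sigma}_n^{-1/2}S_n$ for arbitrary fixed $\lambda\in\mathbb{R}^{d_{\tilde{T}}}$. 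Here \cref{asm:sparsity} controls the neighborhood-size moments $\delta_n^{\partial}(\cdot;\cdot)$, while \cref{asm:dependence}, namely $n\tilde{\Sigma}_n^{-(1+p/2)}\sum_{s\leq 2K}c_n(s,m_n;p)\to 0$ for $p=1,2$, supplies exactly the variance-normalized density bounds that the third- and fourth-order cumulant estimates in the KSS argument require. This yields $\tilde{\Sigma}_n^{-1/2}S_n\overset{d^R}{\to}\mathrm{N}(0,I_{d_{\tilde{T}}})$, and hence the conditional claim. For the unconditional statement I would integrate: the conditional CLT gives $\mathbb{P}(\tilde{\Sigma}_n^{-1/2}\tilde{Q}_n^{XX}(\hat{\theta}_n-\theta_n^{\causample})\leq t\mid\bR_n)\to F(t)$ almost surely, and since these conditional probabilities are bounded by $1$, bounded convergence delivers $\mathbb{E}[\mathbb{P}(\cdot\mid\bR_n)]\to F(t)$, i.e. the $\overset{d}{\to}$ convergence.

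I expect the main obstacle to be the conditional CLT, and within it two points. The first is the variance floor: $\tilde{\Sigma}_n^{-1/2}$ must be well-behaved, so one needs a lower bound on the smallest eigenvalue of $\tilde{\Sigma}_n$ holding almost surely along the sampling sequence, which is the delicate non-degeneracy requirement implicitly carried by \cref{asm:dependence}. The second is the bookkeeping of the ``almost surely in $\bR_n$'' qualifier: one must verify that the KSS conditions \emph{and} the negligibility of the linearization remainder hold simultaneously on a single probability-one set of realizations of $\bR_n$, so that the conditional convergence is legitimately integrated in the final step.
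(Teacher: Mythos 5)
Your proposal takes essentially the same route as the paper's proof: linearize through the OLS normal equations, kill the cross block using the residualization identity $\tilde{\Omega}_n^{XZ}=0$ from \cref{lem:as_rep}, prove a conditional CLT for the score $S_n=\sum_{i=1}^nR_{n,i}\tilde{X}_{n,i}\tilde{\varepsilon}_{n,i}$ via \cite{kojevnikov2021limit} and the Cram\'er--Wold device (the paper's \cref{lem:psi_dependency,lem:linear,lem:CLT}), and integrate by bounded convergence for the unconditional claim. Your explicit insertion of the factor $N$ reconciling $\tilde{Q}_n^{XX}$ with the unnormalized $\tilde{\Sigma}_n$ is also faithful to what the paper's final display actually carries (there it appears as $1/(n\rho_n)$, with $N/(n\rho_n)\overset{a.s.}{\longrightarrow}1$ by \cref{lem:nNconv}).

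The genuine gap is in the correction term $\tilde{Q}_n^{XZ}(\hat{\gamma}_n-\gamma_n^{\causample})$. You derive its negligibility from two qualitative facts --- $\tilde{Q}_n^{XZ}=o_{p^R}(1)$ by the weak law, and $\hat{\gamma}_n-\gamma_n^{\causample}=o_p(1)$ --- but their product is only $o_{p^R}(1)$, while your own target $N\tilde{Q}_n^{XX}(\hat{\theta}_n-\theta_n^{\causample})=S_n+o_p(\|\tilde{\Sigma}_n^{1/2}\|)$ forces the correction to be $o_{p^R}((n\rho_n)^{-1/2})$, because $N\asymp n\rho_n$ while $\|\tilde{\Sigma}_n^{1/2}\|=O_{\text{a.s.}}((n\rho_n)^{1/2})$ under \cref{asm:moment,asm:sparsity}. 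A product of two vanishing sequences need not vanish at that rate, so the step fails as justified; a rate on one of the two factors is required. The paper supplies precisely this: its bounds \cref{eq:tilde_X_eps,eq:tilde_Z_eps,eq:tilde_Sigma_bound}, proved by a conditional Chebyshev argument under \cref{asm:moment,asm:sparsity} and \cref{lem:eps_bound}, show that the $\sqrt{n\rho_n}$-normalized scores are $O_{p^R}(1)$, and the cross contribution then enters as an $o_{p^R}(1)$ off-diagonal block of the inverted moment matrix multiplied by an $O_{p^R}(1)$ normalized $Z$-score. The identical fix in your decomposition is to upgrade the weak law for $\tilde{Q}_n^{XZ}$ to the rate $\tilde{Q}_n^{XZ}=O_{p^R}((n\rho_n)^{-1/2})$, which follows from conditional mean zero plus a conditional variance of order $(n\rho_n)^{-1}$ under local dependence and \cref{asm:sparsity}; paired with $\hat{\gamma}_n-\gamma_n^{\causample}=o_{p^R}(1)$, this gives the required order. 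Separately, when you invoke the KSS CLT, note that the summands $R_{n,i}\tilde{X}_{n,i}\tilde{\varepsilon}_{n,i}$ are \emph{not} individually conditionally centered when the $\theta_{n,i}$ are heterogeneous; only their sum is, by \cref{eq:moment_causalsample}. You must center each summand as in the paper's \cref{lem:CLT} (the centering terms sum to zero, so $S_n$ and $\tilde{\Sigma}_n$ are unaffected), since mean-zero summands are part of the hypotheses of the CLT you cite.
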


We also show that the absence of misspecification and access to the variables in the population yield asymptotic normality of $\hat{\theta}_{n}$ relative to $\theta_{n}^{\causal}$:
\begin{theorem}
    \label{thm:normality_causal}
    Under \cref{asm:exposure,asm:linear,asm:sampling,asm:moment,asm:linear_propensity,asm:local,asm:sparsity,asm:dependence}, we have
    \begin{align*}
        \Sigma_{n}^{-1/2}\tilde{Q}^{XX}_{n}(\hat{\theta}_{n}-\theta_{n}^{\causal})\overset{d}{\to}\mathrm{N}(0,I_{d_{T}}).
    \end{align*}
\end{theorem}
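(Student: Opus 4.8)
The plan is to reduce the statement to an \emph{unconditional} central limit theorem for the population-residual score $\sum_{i=1}^n R_{n,i}X_{n,i}\varepsilon_{n,i}$, rather than to deduce it from \cref{thm:normality} by a Slutsky argument. The naive route fails: centering at $\theta_n^{\causal}$ instead of $\theta_n^{\causample}$ introduces the sampling fluctuation $\theta_n^{\causample}-\theta_n^{\causal}$, which is of the same (root-$N$) order as the design-based noise $\hat{\theta}_n-\theta_n^{\causample}$ and is therefore first-order. Correspondingly, $\Sigma_n$ exceeds the average of $\tilde{\Sigma}_n$ by exactly the sampling-induced variance component, so $\Sigma_n^{-1/2}\tilde{\Sigma}_n^{1/2}\not\to I$ and the difference term does not vanish after normalization. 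The right reading is that both \cref{thm:normality} and this theorem are instances of the same $\psi$-dependence CLT developed in \cref{app:lemmas}, applied conditionally there and unconditionally here.

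First I would invoke \cref{asm:exposure} together with \cref{lem:Lambda} to identify the sample and population objects on the sampled set: $\tilde{X}_{n,i}=X_{n,i}$ and $\tilde{Z}_{n,i}=Z_{n,i}$ almost surely whenever $R_{n,i}=1$, and $d_{\tilde{T}}=d_T$. This lets me express every sample moment entering $\hat{\theta}_n$ through the population residuals. Writing $Y_{n,i}=X_{n,i}'\theta_n^{\causal}+Z_{n,i}'\gamma_n^{\causal}+\varepsilon_{n,i}$ and using the sample normal equations then yields the linearization
\[
\tilde{Q}_n^{XX}(\hat{\theta}_n-\theta_n^{\causal})=\frac{1}{N}\sum_{i=1}^n R_{n,i}X_{n,i}\varepsilon_{n,i}-\tilde{Q}_n^{XZ}(\hat{\gamma}_n-\gamma_n^{\causal}).
\]
The remainder is negligible relative to $\Sigma_n^{1/2}$: by the construction of $\tilde{\Lambda}_n$ the conditional mean of $\tilde{Q}_n^{XZ}$ given $\bR_n$ vanishes, i.e. $\sum_i R_{n,i}\mathbb{E}[X_{n,i}\mid\bR_n]Z_{n,i}'=0$ under \cref{asm:exposure}, so $\tilde{Q}_n^{XZ}=O_p(N^{-1/2})$, while $\hat{\gamma}_n-\gamma_n^{\causal}=O_p(N^{-1/2})$ by the $\gamma$-analogue of \cref{thm:causal_consistency}; their product is of smaller order than the leading term. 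Hence the asymptotics are governed by $\sum_i R_{n,i}X_{n,i}\varepsilon_{n,i}$, whose unconditional variance is exactly $\Sigma_n$.

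Next I would establish the core statement $\Sigma_n^{-1/2}\sum_{i=1}^n R_{n,i}X_{n,i}\varepsilon_{n,i}\overset{d}{\to}\mathrm{N}(0,I_{d_T})$. By \cref{asm:sampling,asm:local}, each summand $R_{n,i}X_{n,i}\varepsilon_{n,i}$ is a bounded function of the treatments and sampling indicators in the $2K$-neighborhood of $i$, so the array is $\psi$-dependent jointly in $(\bR_n,\bD_n^*)$, with $R_{n,i}X_{n,i}\varepsilon_{n,i}\indep R_{n,j}X_{n,j}\varepsilon_{n,j}$ whenever $d_n(i,j)>2K$. I would then apply the Nagaev-type CLT of \cref{app:lemmas} \emph{unconditionally}, using \cref{asm:sparsity} and the $\Sigma_n$-branch of \cref{asm:dependence} to verify the variance and moment conditions. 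This parallels the derivation of the unconditional part of \cref{thm:normality}, but centers at the population estimand and normalizes by the total (design-plus-sampling) variance $\Sigma_n$. Combining the linearization with this CLT and absorbing the factor $\tilde{Q}_n^{XX}$, which converges to its limit as in the consistency argument, into the normalizer gives the claim by Slutsky.

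The main obstacle is the core CLT: one must check that treating $\bR_n$ and $\bD_n^*$ jointly as the randomness preserves the locality and $\psi$-dependence structure so that the summands remain independent beyond graph distance $2K$, and that $\Sigma_n$ genuinely aggregates both variance components while still satisfying the Nagaev condition in \cref{asm:dependence}. A secondary, more routine difficulty is controlling the linearization remainder, which hinges on the exact conditional orthogonality $\sum_i R_{n,i}\mathbb{E}[X_{n,i}\mid\bR_n]Z_{n,i}'=0$ afforded by \cref{asm:exposure} and the definition of $\tilde{\Lambda}_n$.
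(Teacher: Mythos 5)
Your proposal follows essentially the same route as the paper's proof: use \cref{asm:exposure} (via \cref{lem:Lambda}) to replace $\tilde{X}_{n,i},\tilde{Z}_{n,i}$ by $X_{n,i},Z_{n,i}$ on the sampled units, linearize the sample normal equations around $(\theta_n^{\causal},\gamma_n^{\causal})$, show the $X$--$Z$ cross term is negligible, and conclude from an \emph{unconditional} $\psi$-dependence CLT for $\sum_{i}R_{n,i}X_{n,i}\varepsilon_{n,i}$ normalized by $\Sigma_n^{-1/2}$ (this is exactly the paper's \cref{lem:CLT_causal}) plus Slutsky. Your opening observation --- that the result cannot be deduced from \cref{thm:normality} by Slutsky because $\theta_n^{\causample}-\theta_n^{\causal}$ is first order and $\Sigma_n$ differs from $\tilde{\Sigma}_n$ by the sampling variance --- is precisely why the paper proves a separate unconditional CLT rather than transferring the conditional one. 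Your remainder control is organized slightly differently (an explicit bound on $\tilde{Q}_n^{XZ}(\hat{\gamma}_n-\gamma_n^{\causal})$ rather than the paper's block-matrix inversion with vanishing off-diagonal blocks), and your appeal to a root-$N$ rate for $\hat{\gamma}_n-\gamma_n^{\causal}$ overstates what \cref{thm:causal_consistency} delivers (only consistency), but your own $O_p(N^{-1/2})$ bound on $\tilde{Q}_n^{XZ}$ makes consistency of $\hat{\gamma}_n$ sufficient, so this is cosmetic.

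There is, however, one genuine gap in the core CLT step: centering. The $\psi$-dependence CLT in \cref{app:lemmas} (see \cref{cond:clt}) requires mean-zero summands, and the summands here are $R_{n,i}X_{n,i}\varepsilon_{n,i}$, not $X_{n,i}\varepsilon_{n,i}$. The population moment condition \cref{eq:moment_causal} gives only $\sum_{i}\mathbb{E}[X_{n,i}\varepsilon_{n,i}]=0$; it does not by itself give $\sum_{i}\mathbb{E}[R_{n,i}X_{n,i}\varepsilon_{n,i}]=0$, because $R_{n,i}$ is not independent of $X_{n,i}\varepsilon_{n,i}$ (both the exposure mapping and the outcome depend on $\bR_n$). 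Note also that conditionally on $\bR_n$ this mean is generically nonzero --- that is exactly why $\theta_n^{\causample}\neq\theta_n^{\causal}$ --- so the centering must be established unconditionally, and it is not automatic. The paper's \cref{lem:CLT_causal} does the work by invoking \cref{asm:sampling} and \cref{asm:exposure} (ii)--(iii) to show that every element of $X_{n,i}\varepsilon_{n,i}$ is multiplicative in $R_{n,i}$, so that $R_{n,i}X_{n,i}\varepsilon_{n,i}=X_{n,i}\varepsilon_{n,i}$ and the sampled score inherits the population orthogonality; it then centers each summand at its own mean and uses the fact that these means sum to zero. Your plan invokes \cref{asm:exposure} only for the identification $\tilde{X}_{n,i}=X_{n,i}$, $\tilde{Z}_{n,i}=Z_{n,i}$ and for the cross-term orthogonality, and flags locality and the variance condition as the main obstacles; without the multiplicative-$R_{n,i}$ argument, the normalized score could carry a nonvanishing drift and the limit would not be centered at zero.
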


\begin{remark}
    When we have a homogeneous effect $\theta_{n,i}=\theta_{n}$, we have $\theta_{n}^{\causample}=\theta_{n}^{\causal}\quad\text{a.s.}$
    for large enough $n$ under $X_{n,i}=\tilde{X}_{n,i}$.
    Hence, we can use the same asymptotic distribution among them.
\end{remark}

\section{Variance Estimation}
\label{sec:variance}
In this section, we provide a conservative network heteroskedasticity- and autocorrelation-consistent (HAC) variance estimator for $\hat{\theta}_{n}$.
Note that even when treatments and samples are randomly assigned and drawn, dependence can persist within a $2K$-neighborhood because exposure mappings $T_{n,i}$ may share elements of $\bD_{n}$ and $\bR_{n}$. As a result, the variance estimator must account for this local dependence structure. However, for any pair $i, j$ with $d_n(i, j) > 2K$, the exposure mappings $T_{n,i}$ and $T_{n,j}$ are independent. When the exposure mapping is correctly specified ($\tilde{g}=g$), the researcher can directly choose a finite $K$ based on the functional form. If there is potential misspecification in $\tilde{g}$, $K$ should be selected conservatively, reflecting the maximum range over which the exposure mapping may induce dependence.

Define
\begin{align*}
    \tilde{\mathcal{N}}_{n}(i;s) =\{j\in \mathcal{N}_{n}:\tilde{d}_{n}(i,j)\leq s\},
\end{align*}
which is the set of $i$'s neighborhood within $s$-distance on a sampled network $\tilde{\bA}_{n}$.
Note that $\tilde{\mathcal{N}}_{n}(i;s)$ is a random set because $\tilde{d}_{n}(i,j)$ is a random variable depending on $\bR_n$.
On the other hand, $d_{n}(i,j)$ and $\mathcal{N}_{n}(i;s)$ are non-random.
Recall that we also have $\tilde{d}_{n}(i,j)\geq d_{n}(i,j)$ a.s., thus, $\tilde{\mathcal{N}}_{n}(i;s)\subseteq \mathcal{N}_{n}(i;s)$ a.s.

Let 
\begin{align*}
    \hat{\varepsilon}_{n,i}&=Y_{n,i}-\tilde{X}_{n,i}^{\prime} \hat{\theta}_n-\tilde{Z}_{n,i}^{\prime}\tilde{\gamma}_n,
\end{align*}
$\Psi_{n,i}=X_{n,i}\varepsilon_{n,i}$, $\tilde{\Psi}_{n,i}=\tilde{X}_{n,i}\tilde{\varepsilon}_{n,i}$, and $\hat{\Psi}_{n,i}=\tilde{X}_{n,i}\hat{\varepsilon}_{n,i}$, where we define $\tilde{\gamma}_n$ later in \cref{thm:var_consistency,thm:var_consistency_eig}.
By orthogonality conditions, $\sum_{i=1}^{n}\mathbb{E}\left[\Psi_{n,i}\right]=0$, $\sum_{i=1}^{n}R_{n,i}\mathbb{E}\left[\tilde{\Psi}_{n,i}\mid\bR_n\right]=0$, and $\sum_{i=1}^{n}R_{n,i}\hat{\Psi}_{n,i}=0$.

Then, the variances of interest can be written as
\begin{align*}
    \frac{1}{n\rho_n}\tilde{\Sigma}_{n}
    =&\Var\left(\frac{1}{\sqrt{n\rho_n}}\sum_{i=1}^{n}R_{n,i}\tilde{X}_{n,i}\tilde{\varepsilon}_{n,i}\mid\bR_n\right)\\
    =&\frac{1}{n\rho_n}\sum_{i=1}^{n}\sum_{j\in\tilde{\mathcal{N}}_{n}(i,2K)}R_{n,i}R_{n,j}\mathbb{E}\left[\left(\tilde{\Psi}_{n,i}-\mathbb{E}\left[\tilde{\Psi}_{n,i}\mid\bR_n\right]\right)\left(\tilde{\Psi}_{n,j}-\mathbb{E}\left[\tilde{\Psi}_{n,j}\mid\bR_n\right]\right)^{\prime}\mid\bR_n\right],
\end{align*}
and
\begin{align*}
    \frac{1}{n\rho_n}\Sigma_{n}
    =&\Var\left(\frac{1}{\sqrt{n\rho_n}}\sum_{i=1}^{n}R_{n,i}X_{n,i}\varepsilon_{n,i}\right)\\
    =&\frac{1}{n\rho_n}\sum_{i=1}^{n}\sum_{j\in\mathcal{N}_{n}(i,2K)}\mathbb{E}\left[\left(R_{n,i}\Psi_{n,i}-\rho_n\mathbb{E}\left[\Psi_{n,i}\right]\right)\left(R_{n,j}\Psi_{n,j}-\rho_n\mathbb{E}\left[\Psi_{n,j}\right]\right)^{\prime}\right].
\end{align*}

We consider the following feasible estimator:
\begin{align*}
    \frac{1}{N}\hat{\Sigma}_n
    =&\frac{1}{N}\sum_{i=1}^{n}\sum_{j\in\tilde{\mathcal{N}}_{n}(i,2K)}R_{n,i}R_{n,j}\hat{\Psi}_{n,i}\hat{\Psi}_{n,j}^{\prime}.
\end{align*}

To show the consistency of the variance estimator, we assume an additional sparsity condition.
The assumption requires a few more notations. 
Let $\delta_{n}(s;p)$ be the $p$-th sample moment of the set of $i$'s neighborhood within $s$-distance: $\delta_{n}(s;p)=n^{-1}\sum_{i=1}^n|\mathcal{N}_{n}(i;s)|^{p}$.
We also define $\mathcal{J}_n(s,m)$ as the set of quadruples $(i,j,i^{\prime},j^{\prime})$ such that $i^{\prime}$ and $j^{\prime}$ are $m$-neighbors of $i$ and $j$, respectively, and the distance between $i$ and $j$ is exactly $s$:
\begin{align*}
    \mathcal{J}_n(s,m)=\left\{(i,j,i^{\prime},j^{\prime})\in\mathcal{N}_n^4:i^{\prime}\in\mathcal{N}_n(i,m),j^{\prime}\in\mathcal{N}_n(j,m), d_n(i,j)=s\right\}.
\end{align*}
\begin{assumption}
    \label{asm:var_dependence}
    (i) $\delta_{n}(2K;2)=o(n)$. 
    (ii) $\sum_{s=0}^{2K}|\mathcal{J}_n(s,2K)|=o(n^2)$.
\end{assumption}
\cref{asm:var_dependence} is a version of Assumptions 7c and 7d of \cite{leung2022causal}. This assumption is satisfied if network links are not too dense.

\begin{theorem}
    \label{thm:var_consistency}
    Let $\tilde{\gamma}_n=\hat{\gamma}_n$.
    Under \cref{asm:linear,asm:sampling,asm:moment,asm:linear_propensity,asm:local,asm:sparsity,asm:dependence,asm:var_dependence}, we have
    \begin{align*}
        \frac{1}{N}\hat{\Sigma}_n
        &=\frac{1}{n\rho_n}\tilde{\Sigma}_n+\tilde{B}_{n}+o_{p^{R}}(1),
    \end{align*}
    where $U_n=o_{p^{R}}(1)$ means $U_n\overset{p^R}{\longrightarrow}0$, and
    \begin{align*}
        \tilde{B}_{n}
        =&\frac{1}{n\rho_n}\sum_{i=1}^{n}\sum_{j\in\tilde{\mathcal{N}}_{n}(i,2K)}R_{n,i}R_{n,j}\mathbb{E}\left[\tilde{\Psi}_{n,i}\mid\bR_n\right]\mathbb{E}\left[\tilde{\Psi}_{n,j}\mid\bR_n\right]^{\prime}.
    \end{align*}

     Let $\tilde{\gamma}_n=\gamma_n^{\causal}+o_p(1)$. If, in addition, we assume \cref{asm:exposure} and $\tilde{d}_n(i,j)=d_n(i,j)$ a.s. for all $(i,j)\in\mathcal{N}_n^2$ with $R_{n,i}=1$ and $R_{n,j}=1$ and for all $n\in\mathbb{N}$, then,

    \begin{align*}
        \frac{1}{N}\hat{\Sigma}_n
        &=\frac{1}{n\rho_n}\Sigma_n+\hat{B}_{n}+o_p(1),
    \end{align*}
    where 
    \begin{align*}
        \hat{B}_{n}
        =&\frac{1}{n}\sum_{i=1}^{n}\sum_{j\in\mathcal{N}_{n}(i,2K)}\rho_n\mathbb{E}\left[\Psi_{n,i}\right]\mathbb{E}\left[\Psi_{n,j}\right]^{\prime}.
    \end{align*}
\end{theorem}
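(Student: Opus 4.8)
The plan is to establish the two displays separately: first the conditional statement with $\tilde\gamma_n=\hat\gamma_n$, then the unconditional one after adding \cref{asm:exposure}. In both cases the argument decomposes into a \emph{plug-in} step that replaces the estimated residual products $\hat\Psi_{n,i}\hat\Psi_{n,j}'$ by infeasible ones, a \emph{bias} step that identifies the centering terms $\tilde B_n$ and $\hat B_n$, and a \emph{variance} step that delivers concentration. For the conditional statement I first write $\hat\Psi_{n,i}-\tilde\Psi_{n,i}=\tilde X_{n,i}(\hat\varepsilon_{n,i}-\tilde\varepsilon_{n,i})=-\tilde X_{n,i}\tilde X_{n,i}'(\hat\theta_n-\theta_n^{\causample})-\tilde X_{n,i}\tilde Z_{n,i}'(\hat\gamma_n-\gamma_n^{\causample})$. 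Expanding $\hat\Psi_{n,i}\hat\Psi_{n,j}'$ around $\tilde\Psi_{n,i}\tilde\Psi_{n,j}'$ produces three remainder terms, each carrying a factor $\hat\theta_n-\theta_n^{\causample}$ or $\hat\gamma_n-\gamma_n^{\causample}$; these are $o_{p^R}(1)$ by \cref{thm:consistency}, the companion consistency of $\hat\gamma_n$ from \cref{app:gamma}, the uniform boundedness in \cref{asm:moment}, and the bound $\frac1N\sum_i R_{n,i}|\tilde{\mathcal N}_n(i;2K)|=O_{p^R}(1)$, which follows from \cref{asm:sparsity} together with $N/(n\rho_n)\to1$ a.s. After absorbing the factor $n\rho_n/N\to1$, it suffices to analyse $\frac{1}{n\rho_n}\sum_i\sum_{j\in\tilde{\mathcal N}_n(i;2K)}R_{n,i}R_{n,j}\tilde\Psi_{n,i}\tilde\Psi_{n,j}'$.

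Next I decompose $\tilde\Psi_{n,i}=\tilde\xi_{n,i}+\mathbb{E}[\tilde\Psi_{n,i}\mid\bR_n]$ with $\tilde\xi_{n,i}=\tilde\Psi_{n,i}-\mathbb{E}[\tilde\Psi_{n,i}\mid\bR_n]$ and expand the product into four pieces. The (mean)$\times$(mean) piece is exactly $\tilde B_n$. The (centered)$\times$(centered) piece has conditional mean $\frac{1}{n\rho_n}\tilde\Sigma_n$ by the representation of $\tilde\Sigma_n$ recorded in \cref{sec:variance}, while the two cross pieces have conditional mean zero, since $\tilde\xi_{n,i}$ is conditionally centered and the weights $R_{n,i}R_{n,j}$, the membership $j\in\tilde{\mathcal N}_n(i;2K)$, and $\mathbb{E}[\tilde\Psi_{n,j}\mid\bR_n]$ are all $\sigma(\bR_n)$-measurable. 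It remains to show that the (centered)$\times$(centered) piece concentrates at its conditional mean $\frac{1}{n\rho_n}\tilde\Sigma_n$ and that the cross pieces are $o_{p^R}(1)$, both of which reduce to bounding a conditional variance.

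This variance computation is the crux. The conditional variance of $\frac{1}{n\rho_n}\sum_i\sum_{j\in\tilde{\mathcal N}_n(i;2K)}R_{n,i}R_{n,j}\tilde\xi_{n,i}\tilde\xi_{n,j}'$ is a sum over quadruples $(i,j,i',j')$ of fourth-order conditional covariances $\Cov(\tilde\xi_{n,i}\tilde\xi_{n,j}',\tilde\xi_{n,i'}\tilde\xi_{n,j'}'\mid\bR_n)$. By \cref{asm:local,asm:sampling}, $\tilde\xi_{n,i}$ depends on $\bD_n^*$ and $\bR_n$ only through a $K$-neighborhood, so such a covariance vanishes unless the block $\{i,j\}$ lies within distance $2K$ of $\{i',j'\}$; the number of contributing quadruples is therefore controlled by $\sum_{s=0}^{2K}\#|\mathcal J_n(s,2K)|=o(n^2)$ from \cref{asm:var_dependence}(ii), with \cref{asm:var_dependence}(i) handling the diagonal. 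Together with the uniform boundedness of $\tilde\xi_{n,i}$ and the normalisation $1/(n\rho_n)^2$, this sends the conditional variance to $0$, yielding the desired concentration; the cross pieces are treated identically, with one centered factor replaced by the bounded $\sigma(\bR_n)$-measurable weight $\mathbb{E}[\tilde\Psi_{n,j}\mid\bR_n]$, which only tightens the count. \textbf{The main obstacle is precisely this quadruple-counting bound and its interaction with the random sampled neighborhood $\tilde{\mathcal N}_n(i;2K)$ and the weights $R_{n,i}R_{n,j}$:} one must check that conditioning on $\bR_n$ preserves the dependence locality that the count exploits, using $\tilde d_n(i,j)\ge d_n(i,j)$ so that $\tilde{\mathcal N}_n(i;2K)\subseteq\mathcal N_n(i;2K)$ a.s.

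For the unconditional statement I first invoke \cref{asm:exposure} with \cref{lem:Lambda} to obtain $X_{n,i}=\tilde X_{n,i}$ a.s.\ for sampled units, and \cref{thm:causal_consistency} to obtain $\hat\theta_n-\theta_n^{\causal}\to0$; with $\tilde\gamma_n=\gamma_n^{\causal}+o_p(1)$ the plug-in step now replaces $\hat\Psi_{n,i}$ by $\Psi_{n,i}$ at cost $o_p(1)$. The remaining object $\frac{1}{n\rho_n}\sum_i\sum_j R_{n,i}R_{n,j}\Psi_{n,i}\Psi_{n,j}'\mathds 1\{\tilde d_n(i,j)\le2K\}$ is analysed unconditionally: splitting $R_{n,i}\Psi_{n,i}=(R_{n,i}\Psi_{n,i}-\rho_n\mathbb E[\Psi_{n,i}])+\rho_n\mathbb E[\Psi_{n,i}]$, the centered$\times$centered contribution reproduces $\frac{1}{n\rho_n}\Sigma_n$ through the representation in \cref{sec:variance}, while the mean$\times$mean contribution equals $\frac{\rho_n}{n}\sum_i\sum_{j\in\mathcal N_n(i;2K)}\mathbb E[\Psi_{n,i}]\mathbb E[\Psi_{n,j}]'\,\mathbb P(\tilde d_n(i,j)\le2K)=\hat B_n$ once the random indicator $\mathds 1\{\tilde d_n(i,j)\le2K\}$ is integrated against $\bR_n$ into $\mathbb P(\tilde d_n(i,j)\le2K)$. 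Concentration around this unconditional mean again rests on the quadruple count of \cref{asm:var_dependence}, now applied to unconditional fourth-order covariances and additionally absorbing the randomness of $\tilde d_n$ and of $R_{n,i}R_{n,j}$; the locality from \cref{asm:local} still confines nonzero covariances to the counted quadruples, and $N/(n\rho_n)\to1$ a.s.\ converts the $1/N$ normalisation to $1/(n\rho_n)$, completing the proof.
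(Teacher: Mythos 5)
Your proposal is correct, and its architecture coincides with the paper's proof: the same plug-in step (replace $\hat{\varepsilon}_{n,i}$ by $\tilde{\varepsilon}_{n,i}$ using \cref{thm:consistency}, consistency of the $\gamma$-estimator, uniform boundedness, sparsity, and $N/(n\rho_n)\overset{a.s.}{\longrightarrow}1$), the same centering decomposition that isolates $\tilde{B}_n$ and $\hat{B}_n$, and the same Jensen-plus-covariance-counting treatment of the cross terms and of the unconditional statement. The one point where you genuinely depart from the paper is the concentration of the centered-product sum around $\frac{1}{n\rho_n}\tilde{\Sigma}_n$ in the conditional part: the paper delegates that step to \cref{lem:var_consistency} (Proposition 4.1 of Kojevnikov et al., whose hypotheses are supplied by \cref{asm:dependence} together with the $2K$-local $\psi$-dependence from \cref{lem:psi_dependency}), reserving \cref{asm:var_dependence} exclusively for the cross terms, whereas you bound the conditional variance of the centered-product sum directly by a fourth-moment quadruple count under \cref{asm:var_dependence}. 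Your route does go through, but only with the right bookkeeping: a contributing quadruple satisfies $d_n(i,j)\le 2K$, $d_n(i',j')\le 2K$, and some cross-distance at most $2K$, so if you kept $(i,j)$ as the base pair of $\mathcal{J}_n(s,2K)$ you would need satellite neighborhoods of radius up to $6K$, which \cref{asm:var_dependence} does not control; the bound $4\sum_{s=0}^{2K}\#|\mathcal{J}_n(s,2K)|=o(n^2)$ emerges only after re-indexing so that the base pair is the cross pair realizing the small distance and the remaining two indices are its $2K$-satellites. With that reorganization your argument is self-contained (it essentially reproves the cited variance-consistency result in this special case), at the cost of making \cref{asm:var_dependence} do double duty; the paper's version is more modular and keeps the roles of \cref{asm:dependence} (concentration of the centered quadratic form) and \cref{asm:var_dependence} (cross/bias terms) separate. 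One small citation repair: the consistency of $\hat{\gamma}_n$ for $\gamma_n^{\causample}$ used in the plug-in step is \cref{lem:gamma_consistency}, not the construction in \cref{app:gamma}, whose modified estimator $\tilde{\gamma}_n$ targets $\gamma_n^{\causal}$ and is what the second (unconditional) statement requires.
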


An estimator satisfying $\tilde{\gamma}_n=\gamma_n^{\causal}+o_p(1)$ is given in \cref{app:gamma}. In general, $\hat{\gamma}_n\neq\gamma_n^{\causal}+o_p(1)$, and we need a modification on $\hat{\gamma}_n$. The condition $\tilde{d}_n(i,j)=d_n(i,j)$ a.s. for units with $R_{n,i}=1$ and $R_{n,j}=1$ is satisfied, for example, when the network is sampled using star sampling and the exposure mapping is restricted to the first neighborhood.

\cref{thm:var_consistency} implies that we can only estimate the variance up to the ones with bias terms $\tilde{B}_{n}$ and $\hat{B}_{n}$ since there is no hope to estimate each heterogeneous expectation consistently.
This bias is inevitable in heterogeneous treatment effect settings (\citealp{abadie2020sampling}; \citealp{leung2020treatment}; \citealp{gao2023causal}).
Combining this convergence and the asymptotic normality, we can estimate the variance of $\hat{\theta}_{n}$ by
\begin{equation}\label{eq:se}
    {\left(\tilde{Q}^{XX}_{n}\right)}^{-1}\left(\frac{1}{N}\hat{\Sigma}_n\right){\left(\tilde{Q}^{XX}_{n}\right)}^{-1}.
\end{equation}

The above variance estimator has a problem because we cannot guarantee conservativeness.
Indeed, bias matrices $\hat{B}_{n}$ and $\tilde{B}_{n}$ are not necessarily positive semi-definite.\footnote{Alternatively, we can implement the randomized inference as \cite{borusyak2023nonrandom}. For multidimensional $\hat{\theta}_n$, the randomized inference do not guarantee conservativeness, too.}
Conservative guarantee modification is possible.
We can write
$(1/N)\hat{\Sigma}_n=(1/N)\hat{R\Psi}^{\prime}_n\tilde{K}_n\hat{R\Psi}_n$,
where 
\begin{align*}
    \hat{R\Psi}_n&=\left(R_{n,1}\tilde{X}_{n,1}\hat{\varepsilon}_{n,1},\cdots,R_{n,n}\tilde{X}_{n,n}\hat{\varepsilon}_{n,n}\right)^{\prime},\\
    \tilde{K}_{n}&=[\mathds{1}\{\tilde{d}_n(i, j)\leq2K\}]_{i,j}.
\end{align*}
Eigendecomposition gives $\tilde{K}_n=\mathcal{Q}_n\Xi_n\mathcal{Q}_n^{\prime}$.
By replacing $\tilde{K}_n$ by $\tilde{K}_n^+=\mathcal{Q}_n\max\{0,\Xi_n\}\mathcal{Q}_n^{\prime}$ ($\max$ is taken element-wise), the variance matrix estimator 
\begin{align*}
    \frac{1}{N}\hat{\Sigma}_n^+
    =\frac{1}{N}\hat{R\Psi}^{\prime}_n\tilde{K}_n^{+}\hat{R\Psi}_n=\frac{1}{N}\sum_{i=1}^{n}\sum_{j=1}^{n}R_{n,i}R_{n,j}\hat{\Psi}_{n,i}\hat{\Psi}_{n,j}^{\prime}\tilde{K}_{n,i,j}^{+}.
\end{align*}
becomes positive semi-definite.
We also have $\tilde{K}_n^{-}=\mathcal{Q}_n|\min\{0,\Xi_n\}|\mathcal{Q}_n^{\prime}=\tilde{K}_n^{+}-\tilde{K}_n$.
This modification is provided by \cite{gao2023causal}. The modified variance estimator is given by
\begin{equation}\label{eq:se_mod}
    {\left(\tilde{Q}^{XX}_{n}\right)}^{-1}\left(\frac{1}{N}\hat{\Sigma}^{+}_n\right){\left(\tilde{Q}^{XX}_{n}\right)}^{-1}.
\end{equation}

Define $K_n=[\mathds{1}\{d_n(i, j)\leq2K\}]_{i,j}$, and define $K_n^{+}$ and $K_n^{-}$ in a similar manner to $\tilde{K}_n^{+}$ and $\tilde{K}_n^{-}$.
Define
\begin{align*}
    \tilde{\delta}_{n}^{-}(2K;p)=\frac{1}{n}\sum_{i=1}^{n}\left(\sum_{j=1}^{n}|\tilde{K}_{n,i,j}^{-}|\right)^{p},
    \qquad
    \delta_{n}^{-}(2K;p)=\frac{1}{n}\sum_{i=1}^{n}\left(\sum_{j=1}^{n}|K_{n,i,j}^{-}|\right)^{p},
\end{align*}
and
\begin{align*}
    &|\tilde{\mathcal{J}}_n^{-}(s,2K)|=\sum_{i=1}^{n}\sum_{j=1}^{n}\mathds{1}\{d_n(i, j)=s\}\left(\sum_{i^{\prime}=1}^{n}|\tilde{K}_{n,i,i^{\prime}}^{-}|\right)\left(\sum_{j^{\prime}=1}^{n}|\tilde{K}_{n,j,j^{\prime}}^{-}|\right),\\
    &|\mathcal{J}_n^{-}(s,2K)|=\sum_{i=1}^{n}\sum_{j=1}^{n}\mathds{1}\{d_n(i, j)=s\}\left(\sum_{i^{\prime}=1}^{n}|K_{n,i,i^{\prime}}^{-}|\right)\left(\sum_{j^{\prime}=1}^{n}|K_{n,j,j^{\prime}}^{-}|\right).
\end{align*}

\begin{assumption}
    \label{asm:var_dependence_eig}
    (i) $\tilde{\delta}_{n}^{-}(2K;1)=O_{\text{a.s.}}(1)$ and $\delta_{n}^{-}(2K;1)=O(1)$.
    (ii) $\tilde{\delta}_{n}^{-}(2K;2)=O_{\text{a.s.}}(n)$ and $\delta_{n}^{-}(2K;2)=O(n)$.
    (iii) $\sum_{s=0}^{2K}|\tilde{\mathcal{J}}_n^{-}(s,2K)|=O_{\text{a.s.}}(n^2)$ and $\sum_{s=0}^{2K}|\mathcal{J}_n^{-}(s,2K)|=O(n^2)$.
\end{assumption}
\cref{asm:var_dependence_eig} is a version of Assumptions 7b-7d of \cite{gao2023causal}. The assumption is a modified version of \cref{asm:var_dependence} for the eigenvalue modification.

\begin{theorem}
    \label{thm:var_consistency_eig}
    Let $\tilde{\gamma}_n=\hat{\gamma}_n$.
    Under \cref{asm:linear,asm:sampling,asm:moment,asm:linear_propensity,asm:local,asm:sparsity,asm:dependence,asm:var_dependence_eig}, we have
    \begin{align*}
        \frac{1}{N}\hat{\Sigma}_n^{+}
        &=\frac{1}{n\rho_n}\tilde{\Sigma}_n+\tilde{B}_{n}^{+}+o_p^R(1),
    \end{align*}
    where 
    \begin{align*}
        \tilde{B}_{n}^{+}
        =&\frac{1}{n\rho_n}\sum_{i=1}^{n}\sum_{j=1}^{n}R_{n,i}R_{n,j}\mathbb{E}\left[\tilde{\Psi}_{n,i}\mid\bR_n\right]\mathbb{E}\left[\tilde{\Psi}_{n,j}\mid\bR_n\right]^{\prime}\tilde{K}_{n,i,j}^{+}\\
        &+\frac{1}{n\rho_n}\sum_{i=1}^{n}\sum_{j=1}^{n}R_{n,i}R_{n,j}\mathbb{E}\left[\left(\tilde{\Psi}_{n,i}-\mathbb{E}\left[\tilde{\Psi}_{n,i}\mid\bR_n\right]\right)\left(\tilde{\Psi}_{n,j}-\mathbb{E}\left[\tilde{\Psi}_{n,j}\mid\bR_n\right]\right)^{\prime}\mid\bR_n\right]\tilde{K}_{n,i,j}^{-}
    \end{align*}

    Let $\tilde{\gamma}_n=\gamma_n^{\causal}+o_p(1)$.
    If, in addition, we assume \cref{asm:exposure} and $\tilde{d}_n(i,j)=d_n(i,j)$ a.s. for all $(i,j)\in\mathcal{N}_n^2$ with $R_{n,i}=1$ and $R_{n,j}=1$ and for all $n\in\mathbb{N}$, then,
    \begin{align*}
        \frac{1}{N}\hat{\Sigma}_n^{+}
        &=\frac{1}{n\rho_n}\Sigma_n+\hat{B}_{n}^{+}+o_p(1),
    \end{align*}
    where 
    \begin{align*}
        \hat{B}_{n}^{+}
        =&\frac{1}{n}\sum_{i=1}^{n}\sum_{j=1}^{n}\rho_n\mathbb{E}\left[\Psi_{n,i}\right]\mathbb{E}\left[\Psi_{n,j}\right]^{\prime}K_{n,i,j}^{+}\\
        &+\frac{1}{n\rho_n}\sum_{i=1}^{n}\sum_{j=1}^{n}\mathbb{E}\left[\left(R_{n,i}\Psi_{n,i}-\rho_n\mathbb{E}\left[\Psi_{n,i}\right]\right)\left(R_{n,j}\Psi_{n,j}-\rho_n\mathbb{E}\left[\Psi_{n,j}\right]\right)^{\prime}\right]K_{n,i,j}^{-}.
    \end{align*}
    
\end{theorem}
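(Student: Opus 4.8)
The plan is to mirror the proof of \cref{thm:var_consistency}, replacing the neighborhood kernel $\tilde{K}_n$ by the eigenvalue-corrected kernel $\tilde{K}_n^{+}$ throughout, and to let \cref{asm:var_dependence_eig} play the role that \cref{asm:var_dependence} plays there. The organizing identity is $\tilde{K}_n^{+}=\tilde{K}_n+\tilde{K}_n^{-}$, so that
\[
    \frac{1}{N}\hat{\Sigma}_n^{+}=\frac{1}{N}\hat{R\Psi}_n^{\prime}\tilde{K}_n^{+}\hat{R\Psi}_n=\frac{1}{N}\sum_{i=1}^{n}\sum_{j=1}^{n}R_{n,i}R_{n,j}\hat{\Psi}_{n,i}\hat{\Psi}_{n,j}^{\prime}\tilde{K}_{n,i,j}^{+}.
\]
Using the decomposition $\mathbb{E}[\tilde{\Psi}_{n,i}\tilde{\Psi}_{n,j}^{\prime}\mid\bR_n]=\mathbb{E}[\tilde{\Psi}_{n,i}\mid\bR_n]\mathbb{E}[\tilde{\Psi}_{n,j}\mid\bR_n]^{\prime}+\Cov(\tilde{\Psi}_{n,i},\tilde{\Psi}_{n,j}\mid\bR_n)$ together with $\tilde{K}^{+}=\tilde{K}+\tilde{K}^{-}$, the two terms defining $\tilde{B}_n^{+}$ and the term $\frac{1}{n\rho_n}\tilde{\Sigma}_n$ collapse into the single object $\frac{1}{n\rho_n}\sum_{i,j}R_{n,i}R_{n,j}\mathbb{E}[\tilde{\Psi}_{n,i}\tilde{\Psi}_{n,j}^{\prime}\mid\bR_n]\tilde{K}_{n,i,j}^{+}$, because the covariance pieces weighted by $\tilde{K}$ and $\tilde{K}^{-}$ recombine to $\tilde{K}^{+}$. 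Hence the first claim is equivalent to showing that the $\tilde{K}^{+}$-weighted empirical second-moment matrix converges, conditionally on $\bR_n$, to its own conditional mean with error $o_{p^{R}}(1)$.

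I would establish this in two steps. First, I would replace the plug-in residual products $\hat{\Psi}_{n,i}\hat{\Psi}_{n,j}^{\prime}$ by the infeasible $\tilde{\Psi}_{n,i}\tilde{\Psi}_{n,j}^{\prime}$. Writing $\hat{\varepsilon}_{n,i}-\tilde{\varepsilon}_{n,i}=-\tilde{X}_{n,i}^{\prime}(\hat{\theta}_n-\theta_n^{\causample})-\tilde{Z}_{n,i}^{\prime}(\tilde{\gamma}_n-\gamma_n^{\causample})$ and invoking consistency of $\hat{\theta}_n$ (\cref{thm:consistency}) and of $\tilde{\gamma}_n=\hat{\gamma}_n$ for $\gamma_n^{\causample}$, the substitution error is a finite sum of cross terms, each bounded via the uniform boundedness of $\tilde{X}_{n,i},\tilde{Z}_{n,i}$ in \cref{asm:moment} by $o_{p^{R}}(1)$ times $\frac{1}{N}\sum_{i,j}R_{n,i}R_{n,j}|\tilde{K}_{n,i,j}^{+}|$; the latter is $O_{p^{R}}(1)$ since $\sum_{j}|\tilde{K}_{n,i,j}^{+}|\le\sum_{j}|\tilde{K}_{n,i,j}|+\sum_{j}|\tilde{K}_{n,i,j}^{-}|$, the first summand controlled by \cref{asm:sparsity} and the second by $\delta_n^{-}(2K;1)=O_{\text{a.s.}}(1)$. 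Second, I would prove the conditional law of large numbers for the infeasible object by bounding its conditional variance, which is a sum over quadruples $(i,j,i^{\prime},j^{\prime})$ of $\Cov(\tilde{\Psi}_{n,i}\tilde{\Psi}_{n,j}^{\prime},\tilde{\Psi}_{n,i^{\prime}}\tilde{\Psi}_{n,j^{\prime}}^{\prime}\mid\bR_n)$ weighted by $\tilde{K}_{n,i,j}^{+}\tilde{K}_{n,i^{\prime},j^{\prime}}^{+}$. By \cref{asm:local} and conditional $\psi$-dependence these covariances vanish unless the two index pairs lie within a bounded network distance, so the number of contributing quadruples, weighted by $|\tilde{K}^{+}|$, is governed by $\delta_n^{-}(2K;2)=O_{\text{a.s.}}(n)$ and $\sum_{s}\#|\mathcal{J}_n^{-}(s,2K)|=O_{\text{a.s.}}(n^{2})$; dividing by $N^{2}\asymp(n\rho_n)^{2}$ (using $N/(n\rho_n)\to1$ a.s.) sends the conditional variance to zero.

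For the second, population-level claim I would repeat the argument but pass through the unconditional probability. Here \cref{asm:exposure} (via \cref{lem:Lambda}) gives $\tilde{X}_{n,i}=X_{n,i}$ and $\tilde{Z}_{n,i}=Z_{n,i}$ for sampled units, so $\tilde{\Psi}_{n,i}$ coincides with $\Psi_{n,i}=X_{n,i}\varepsilon_{n,i}$ up to the centering induced by the respective estimands, and the required input $\tilde{\gamma}_n=\gamma_n^{\causal}+o_p(1)$ is supplied by the estimator of \cref{app:gamma}. The extra layer of randomness is that the kernel $\tilde{K}_n^{+}$ and the indicators $R_{n,i}$ are themselves random in the target; taking the sampling expectation reproduces the $\rho_n$ factors and the $\mathbb{E}[\tilde{K}_{n,i,j}^{+}]$ and $\tilde{K}_{n,i,j}^{-}$ weightings that appear in $\hat{B}_n^{+}$, while the unconditional LLN follows from the same $\psi$-dependence and \cref{asm:var_dependence_eig} bounds.

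The main obstacle is that, unlike $\tilde{K}_n$, the corrected kernel $\tilde{K}_n^{+}$ (equivalently $\tilde{K}_n^{-}$) is a data-dependent and generically dense signed matrix whose nonzero entries are not confined to geometric $2K$-neighborhoods. Consequently none of the neighborhood-counting bookkeeping from \cref{thm:var_consistency} applies verbatim; every sparsity bound must instead be routed through the purpose-built quantities $\delta_n^{-}(2K;p)$ and $\#|\mathcal{J}_n^{-}(s,2K)|$. The delicate point is the fourth-order variance bound: one must argue that although $\tilde{K}^{+}$ has long-range entries, the covariance factor still forces locality, so that the product weights $|\tilde{K}_{n,i,j}^{+}\tilde{K}_{n,i^{\prime},j^{\prime}}^{+}|$ enter only through the controlled quadruple count, and that the cross-interaction between the genuinely sparse $\tilde{K}$ part and the dense $\tilde{K}^{-}$ part does not generate uncontrolled terms.
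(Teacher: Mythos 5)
Your skeleton matches the paper's: the identity $\tilde{K}_n^{+}=\tilde{K}_n+\tilde{K}_n^{-}$, the algebraic observation that $\frac{1}{n\rho_n}\tilde{\Sigma}_n+\tilde{B}_n^{+}$ collapses into the single $\tilde{K}^{+}$-weighted conditional second moment $\frac{1}{n\rho_n}\sum_{i,j}R_{n,i}R_{n,j}\mathbb{E}[\tilde{\Psi}_{n,i}\tilde{\Psi}_{n,j}'\mid\bR_n]\tilde{K}_{n,i,j}^{+}$, and the plug-in step replacing $\hat{\Psi}$ by $\tilde{\Psi}$ are all correct and are exactly the ingredients of the paper's argument. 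The genuine gap is in your law-of-large-numbers step. Because you treat the $\tilde{K}^{+}$-weighted sum as a single object, your conditional-variance bound involves quadruple sums weighted by $|\tilde{K}^{+}_{n,i,j}\tilde{K}^{+}_{n,i',j'}|$; after the triangle inequality $|\tilde{K}^{+}|\le|\tilde{K}|+|\tilde{K}^{-}|$ these split into three types: indicator--indicator quadruples (covered by $\#|\mathcal{J}_n(s,2K)|$ in \cref{asm:var_dependence}), $\tilde{K}^{-}$--$\tilde{K}^{-}$ quadruples (covered by $\#|\mathcal{J}_n^{-}(s,2K)|$ in \cref{asm:var_dependence_eig}), and mixed indicator--$\tilde{K}^{-}$ quadruples, which are covered by neither definition. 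You name precisely this cross-interaction as ``the delicate point'' but give no argument for it, so the proof as proposed cannot be closed from the stated assumptions. The paper never meets this term: it splits the estimator first, $\frac{1}{N}\hat{\Sigma}_n^{+}=\frac{1}{N}\hat{\Sigma}_n+\frac{1}{N}\hat{\Sigma}_n^{-}$, quotes \cref{thm:var_consistency} for the first summand, and reruns that proof for the second summand with $\tilde{K}^{-}$ in place of $\mathds{1}\{\tilde{d}_n(i,j)\le2K\}$ and \cref{asm:var_dependence_eig} in place of \cref{asm:var_dependence}. Since the two $o_{p^{R}}(1)$ errors simply add, each variance computation sees only one kernel at a time, and the two limits recombine into $\tilde{B}_n^{+}$ by writing $\tilde{K}=\tilde{K}^{+}-\tilde{K}^{-}$ inside $\tilde{B}_n$. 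Adopting this additive split (prove two separate LLNs rather than one merged LLN) is the repair of your argument, and the same remark applies to your unconditional, population-level part.

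A secondary slip: you assert that dividing by $N^{2}\asymp(n\rho_n)^{2}$ ``sends the conditional variance to zero,'' but $\delta_n^{-}(2K;2)=O_{\text{a.s.}}(n)$ and $\sum_{s}\#|\mathcal{J}_n^{-}(s,2K)|=O_{\text{a.s.}}(n^{2})$, as literally stated, only yield an $O_{\text{a.s.}}(\rho_n^{-2})$ bound, not $o(1)$. To be fair, the paper's own ``same logic'' step carries the identical tension---\cref{asm:var_dependence} imposes $o(n)$ and $o(n^{2})$ exactly where \cref{asm:var_dependence_eig} states $O_{\text{a.s.}}(\cdot)$ rates---so those rates are evidently meant to be read as $o_{\text{a.s.}}(\cdot)$; but your write-up repeats, rather than flags, this inconsistency, and your argument genuinely needs the $o_{\text{a.s.}}(\cdot)$ version.
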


\section{Simulation}\label{sec:simulation}
In this section, we conduct a simulation exercise to illustrate the potential severity of contamination bias. We focus on a case where $T_{n,i} \neq \tilde{T}_{n,i}$ and contamination bias can arise. See \cref{app:sim_additional} for results when $T_{n,i} = \tilde{T}_{n,i}$, where no contamination bias occurs and our inference procedure is valid under correct model specification.

In the following exercise, we use network link information from \cite{banerjee2013diffusion} to simulate variables based on a real-world network structure, rather than on an artificially generated population network.
That study conducted a network survey among randomly selected respondents across 75 villages in rural southern India, where respondents were asked to name 5 to 8 contacts across 12 interaction dimensions (e.g., house visits, borrowing goods). We focus on the borrowing network among individuals, specifically whether a person borrows rice or kerosene from others.\footnote{\cite{banerjee2013diffusion} also collected household-level network data; we use individual-level network data, which is sparser than the household-level networks.} To illustrate the applicability of our framework to a single large network without relying on many clusters, we focus on the largest village and use its borrowing network as the population network $\bA_{n}$. Basic network statistics for this village are presented in \cref{tab:network_info}:

\begin{table}[htbp]
    \caption{Network Information}
    \label{tab:network_info}
    \centering
    \begin{tabular}{cccc}
      \hline
      \hline
      \textbf{Nodes} & \textbf{Edges} & \textbf{Mean Degree} & \textbf{Mean 2nd Order Degree} \\\hline
        1770 &  5556 & 6.28 & 11.44 \\\hline
    \end{tabular}
    \caption*{\footnotesize \textit{Notes}: \textbf{Nodes} reports the number of individuals in the village; \textbf{Edges} reports the number of links based on borrowing relationships; \textbf{Mean Degree} reports the mean degree; \textbf{Mean 2nd Order Degree} reports the mean count of friends-of-friends not directly connected to node $i$.}
\end{table}

In this exercise, we consider a scenario in which the true and observed exposure mappings differ. The main objective is to quantify the severity of contamination bias. Specifically, we focus on a case where there is no contamination bias at the population level, but bias can arise due to the choice of $\tilde{g}$. We specify the exposure mapping as in \cref{ex:cai}:
\begin{align*}
    T_{n,i}&=\left(R_{n,i}D^*_{n,i},\frac{\sum_{j\neq i}A_{n,i,j}R_{n,j}D^*_{n,j}}{\sum_{j\neq i}A_{n,i,j}},\frac{\sum_{j\neq i}\sum_{k\neq i,j}A_{n,i,j}A_{n,j,k}(1-A_{n,i,k})R_{n,k}D^*_{n,k}}{\sum_{j\neq i}\sum_{k\neq i,j}A_{n,i,j}A_{n,j,k}(1-A_{n,i,k})}\right)\\
    &=:(D_{n,i},\text{net}_{n,i},\text{weak}_{n,i}),
\end{align*}
and $\tilde{T}_{n,i}$ is the same as $T_{n,i}$ except that its second and third elements are replaced by
\begin{align*}
    \tilde{\text{net}}_{n,i}=\frac{\sum_{j\neq i}A_{n,i,j}R_{n,j}D_{n,j}^{*}}{\sum_{j\neq i}R_{n,j}A_{n,i,j}};\quad \tilde{\text{weak}}_{n,i}=\frac{\sum_{j\neq i}\sum_{k\neq i,j}R_{n,j}A_{n,i,j}A_{n,j,k}(1-A_{n,i,k})R_{n,k}D^*_{n,k}}{\sum_{j\neq i}\sum_{k\neq i,j}R_{n,j}A_{n,i,j}R_{n,k}A_{n,j,k}(1-A_{n,i,k})}.
\end{align*}
For comparison, we also consider $\tilde{T}_{n,i}^{\text{overlap}}$, which is the same as $\tilde{T}_{n,i}$ except that each $1-A_{n,i,k}$ in $\tilde{\text{weak}}_{n,i}$ is replaced by $1$. As discussed in \cref{ex:cai}, due to overlaps in the second and third elements, the sample-level causal estimand based on $\tilde{T}_{n,i}^{\text{overlap}}$ will be contaminated. In contrast, the estimands based on $T_{n,i}$ and $\tilde{T}_{n,i}$ are not, as they are free of such overlaps and correlations.

We implement the following simulation design. First, we set individual-specific parameters as
$\theta_{n,i,(1)}\sim\text{Exponential}(1/3)$ i.i.d., $\theta_{n,i,(2)}=M_{n,i}$, $\theta_{n,i,(3)}=0$, and $\nu_{n,i}\sim N(0,2)$ i.i.d., where $M_{n,i}$ is a clustering coefficient given by $M_{n,i}=(100/n)\times \sum_{k\neq i}\left(\sum_{j\neq i,k}A_{n,i,j}A_{n,j,k}\right)^{2}$.
Specifically, we draw these $\theta_{n,i}$ and $\nu_{n,i}$ once and treat them as fixed for each Monte Carlo iteration to simulate design-based and sampling-based uncertainties.
We choose $\theta_{n,i,(2)}=M_{n,i}$ to mechanically maximize contamination bias, as $M_{n,i}$ correlates with the contamination weights appearing in \cref{cor:causal_element}. The average spillover effect from $\text{net}_{n,i}$ (i.e., the average of $M_{n,i}$) is about $1/2$. We also set $\theta_{n,i,(3)}=0$ for all $i$, so any deviation from $0$ can be interpreted as contamination bias.
Given the fixed population adjacency matrix $\bA_{n}$ from \cite{banerjee2013diffusion}, we can calculate the population-based causal estimand $\theta_{n}^{\causal}$.

Next, for each iteration, we draw $D^*_{n,i}\sim \text{Bernoulli}(0.5)$ i.i.d., and $R_{n,i}\sim \text{Bernoulli}(\rho_{n})$ i.i.d. for varying sampling probabilities $\rho_{n}\in\{0.1,0.5,1.0\}$ to see the impact of sampling uncertainty on inference. For each realization of $\bR_{n}$, we compute $\theta_{n}^{\causample}$. Subsequently, using each realization of $\bR_{n}$ and $\bD_{n}$, we estimate $\hat{\theta}_{n}$ from the regression $Y_{n,i}\sim \tilde{X}_{n,i} + \tilde{Z}_{n,i}$,
where $\tilde{Z}_{n,i}=(R_{n,i}p_{n},p_{n}\mathds{1}\{\sum_{j\neq i}R_{n,j}A_{n,i,j}>0\},p_{n}\mathds{1}\{\sum_{j\neq i}\sum_{k\neq i,j}R_{n,j}A_{n,i,j}R_{n,k}A_{n,j,k}(1-A_{n,i,k})>0\})$, restricted to units with $R_{n,i}=1$. Finally, we compute the standard errors based on \cref{eq:se_mod} with $\tilde{\gamma}_{n}=\hat{\gamma}_{n}$ for $\theta^{\causample}$ and with $\tilde{\gamma}_{n}$ from \cref{app:gamma} for $\theta^{\causal}$, as well as the conventional Eicker-Huber-White (EHW) standard errors, which are computed from the following variance estimator:
\begin{align*}
    \left(\tilde{Q}_{n}^{XX}\right)^{-1}\left(\frac{1}{N}\sum_{i=1}^{n}R_{n,i}\tilde{X}_{n,i}\tilde{X}_{n,i}'\hat{\varepsilon}_{n,i}^{2}\right)\left(\tilde{Q}_{n}^{XX}\right)^{-1}.
\end{align*}
When computing the standard errors based on \cref{eq:se_mod}, we use the observed network $\tilde{\bA}_{n}=[R_{n,i}\times R_{n,j}\times A_{n,i,j}]_{i,j}$, which is the sampled network with induced subgraph links. We repeat this process 2,000 times. The overlapping case is implemented in the same manner, except that we use $\tilde{T}_{n,i}^{\text{overlap}}$ instead of $\tilde{T}_{n,i}$, and the third element of $\tilde{Z}_{n,i}$ is replaced by $p_{n}\mathds{1}\{\sum_{j\neq i}\sum_{k\neq i,j}R_{n,j}A_{n,i,j}R_{n,k}A_{n,j,k}>0\}$.

Simulation results for $\rho_{n}\in\{0.1,0.5,1.0\}$ are summarized in \cref{tab:simulationTcontam}. In Panel A, we use $\tilde{T}_{n,i}$ whose $\tilde{\text{weak}}_{n,i}$ does not have an overlap in $D^*_{n,j}$ for any $j$ with $\tilde{\text{net}}_{n,i}$. In Panel B, we use $\tilde{T}^{\text{overlap}}_{n,i}$ whose $\tilde{\text{weak}}^{\text{overlap}}_{n,i}$ does share some $D^*_{n,j}$ with $\tilde{\text{net}}_{n,i}$. Also note that, in both panels, the true exposure mapping is fixed to $T_{n,i}$ defined above. Hence, the population-level causal estimands $\theta_{n}^{\causal}$ are the same regardless of which $\tilde{T}_{n,i}$ or $\tilde{T}_{n,i}^{\text{overlap}}$ is used.

From Panel A of \cref{tab:simulationTcontam} (no overlap case), we can observe that the sample-level estimand and estimator largely deviate from the population-level estimand for $\text{net}_{n,i}$. This deviation is driven not by contamination, but by the difference between $\text{net}_{n,i}$ and $\tilde{\text{net}}_{n,i}$:
\begin{align*}
    \text{net}_{n,i}=\frac{\sum_{j\neq i}A_{n,i,j}R_{n,j}D^*_{n,j}}{\sum_{j\neq i}A_{n,i,j}}\neq\frac{\sum_{j\neq i}A_{n,i,j}R_{n,j}D^*_{n,j}}{\sum_{j\neq i}R_{n,j}A_{n,i,j}}=\tilde{\text{net}}_{n,i}.
\end{align*}
When $\rho_{n}$ is small, the denominator of $\tilde{\text{weak}}_{n,i}$ tends to be smaller than that of $\text{net}_{n,i}$, which results in a downward bias. 

Because of the bias, the coverage probabilities against $\theta_{n}^{\causal}$ are close to $0$ with both EHW standard errors and those based on \cref{eq:se_mod}, especially when $\rho_{n}$ is small. 
However, as $\rho_{n}$ increases, the bias and coverage probabilities tend to improve with our proposed standard errors \cref{eq:se_mod} because the difference between $T_{n,i}$ and $\tilde{T}_{n,i}$ becomes smaller and the standard errors are designed to be conservative. In contrast, the EHW standard errors fail to capture the dependence structure and thus severely under-cover the causal estimands as $\rho_{n}$ increases.

From Panel B of \cref{tab:simulationTcontam} (with overlap case), we can observe a similar pattern as in Panel A when $\rho_{n}$ is small. However, a crucial difference arises when $\rho_{n}=1.0$. We can observe that $\theta_{n,(3)}^{\causample}$ and $\hat{\theta}_{n,(3)}$ are largely biased downward compared with $\theta_{n,(3)}^{\causal}$, with a magnitude similar to that of $\theta_{n,(2)}^{\causal}$. Since the true $\theta_{n,i,(3)}=0$ for all $i$, this bias is mainly driven by contamination, as suggested by \cref{cor:causal_element}. The contamination bias is also reflected in the average absolute deviation of the estimator and the coverage probabilities against $\theta_{n,(3)}^{\causal}$ for both EHW standard errors and those based on \cref{eq:se_mod}, resulting in under-coverage. 

In summary, the simulation results in \cref{tab:simulationTcontam} show that the deviation of $\tilde{T}_{n,i}$ from $T_{n,i}$ can lead to severe bias and under coverage for the population causal estimands. The results also highlight the potential severity of contamination bias when there is a small overlap in elements of $\tilde{T}_{n,i}$, whose size can be comparable to the true spillover effects. This emphasizes the importance of choice of $\tilde{g}$ in practice and calls for caution when interpreting the results based on the linear regression framework. In the next section, we discuss whether the contamination bias is present in the real data application.

\begin{table}[htbp]
  \caption{Simulation Results: $T_{n,i}\neq\tilde{T}_{n,i}$ case}
  \label{tab:simulationTcontam}
  \scalebox{0.9}{
  \begin{threeparttable}
    \centering
    \begingroup
  \begin{tabular}{lccccccccc}
    \hline
    \hline
    \multicolumn{10}{c}{\textbf{Panel A: No Overlaps}} \\
    \hline
    & \multicolumn{3}{c}{\(\rho = 0.1\)} & \multicolumn{3}{c}{\(\rho = 0.5\)} & \multicolumn{3}{c}{\(\rho = 1.0\)} \\
    & D & net & weak & D & net & weak & D & net & weak \\
    \hline
    $\theta^{\causal}$ & 0.348 & 0.567 & 0.0 & 0.348 & 0.567 & 0.0 & 0.348 & 0.567 & 0.0 \\
    $\theta^{\causample}$ & 0.347 & 0.153 & 0.0 & 0.348 & 0.282 & 0.0 & 0.348 & 0.567 & 0.0 \\
    $\hat{\theta}$ & 0.347 & 0.139 & 0.009 & 0.346 & 0.28 & -0.004 & 0.347 & 0.565 & -0.01 \\
    $\text{SE EHW}$ & 0.163 & 0.251 & 0.475 & 0.087 & 0.113 & 0.128 & 0.068 & 0.11 & 0.111 \\
    $\text{SE \eqref{eq:se_mod} } \theta^{\causal}$ & 0.165 & 0.263 & 0.549 & 0.102 & 0.135 & 0.175 & 0.108 & 0.191 & 0.233 \\
    $\text{SE \eqref{eq:se_mod} } \theta^{\causample}$ & 0.163 & 0.248 & 0.398 & 0.1 & 0.133 & 0.153 & 0.104 & 0.198 & 0.173 \\
    $|\hat{\theta}-\theta^{\causal}|$ & 0.182 & 0.47 & 0.696 & 0.08 & 0.292 & 0.159 & 0.058 & 0.141 & 0.153 \\
    $|\hat{\theta}-\theta^{\causample}|$ & 0.18 & 0.289 & 0.696 & 0.08 & 0.124 & 0.159 & 0.058 & 0.141 & 0.153 \\
    $\text{Coverage EHW }\theta^{\causal}$ & 0.844 & 0.56 & 0.703 & 0.908 & 0.335 & 0.797 & 0.938 & 0.775 & 0.74 \\
    $\text{Coverage EHW }\theta^{\causample}$ & 0.846 & 0.819 & 0.703 & 0.909 & 0.836 & 0.797 & 0.938 & 0.775 & 0.74 \\
    $\text{Coverage \eqref{eq:se_mod} }\theta^{\causal}$ & 0.847 & 0.577 & 0.768 & 0.942 & 0.443 & 0.922 & 0.997 & 0.968 & 0.964 \\
    $\text{Coverage \eqref{eq:se_mod} }\theta^{\causample}$ & 0.844 & 0.813 & 0.618 & 0.939 & 0.898 & 0.87 & 0.995 & 0.971 & 0.915 \\
    \hline
    \multicolumn{10}{c}{\textbf{Panel B: With Overlaps}} \\
    \hline
    & \multicolumn{3}{c}{\(\rho = 0.1\)} & \multicolumn{3}{c}{\(\rho = 0.5\)} & \multicolumn{3}{c}{\(\rho = 1.0\)} \\
    & D & net & weak & D & net & weak & D & net & weak \\
    \hline
    $\theta^{\causal}$ & 0.348 & 0.567 & 0.0 & 0.348 & 0.567 & 0.0 & 0.348 & 0.567 & 0.0 \\
    $\theta^{\causample}$ & 0.347 & 0.149 & 0.032 & 0.348 & 0.279 & 0.008 & 0.348 & 0.773 & -0.356 \\
    $\hat{\theta}$ & 0.347 & 0.135 & 0.037 & 0.346 & 0.28 & 0.0 & 0.347 & 0.783 & -0.374 \\
    $\text{SE EHW}$ & 0.163 & 0.269 & 0.447 & 0.087 & 0.155 & 0.176 & 0.068 & 0.249 & 0.264 \\
    $\text{SE \eqref{eq:se_mod} } \theta^{\causal}$ & 0.165 & 0.279 & 0.492 & 0.102 & 0.186 & 0.22 & 0.105 & 0.419 & 0.452 \\
    $\text{SE \eqref{eq:se_mod} } \theta^{\causample}$ & 0.163 & 0.265 & 0.416 & 0.1 & 0.18 & 0.204 & 0.104 & 0.394 & 0.395 \\
    $|\hat{\theta}-\theta^{\causal}|$ & 0.181 & 0.476 & 0.59 & 0.08 & 0.297 & 0.204 & 0.058 & 0.279 & 0.439 \\
    $|\hat{\theta}-\theta^{\causample}|$ & 0.179 & 0.295 & 0.589 & 0.08 & 0.147 & 0.204 & 0.058 & 0.211 & 0.295 \\
    $\text{Coverage EHW }\theta^{\causal}$ & 0.845 & 0.584 & 0.756 & 0.908 & 0.528 & 0.828 & 0.936 & 0.854 & 0.65 \\
    $\text{Coverage EHW }\theta^{\causample}$ & 0.84 & 0.845 & 0.752 & 0.91 & 0.896 & 0.828 & 0.936 & 0.928 & 0.834 \\
    $\text{Coverage \eqref{eq:se_mod} }\theta^{\causal}$ & 0.848 & 0.597 & 0.8 & 0.938 & 0.653 & 0.918 & 0.996 & 0.987 & 0.902 \\
    $\text{Coverage \eqref{eq:se_mod} }\theta^{\causample}$ & 0.84 & 0.837 & 0.714 & 0.936 & 0.933 & 0.886 & 0.995 & 0.995 & 0.96 \\
    \hline
  \end{tabular}
        \begin{tablenotes}
            \footnotesize
            \item {\it Note:} Panel A reports the results when $\tilde{T}_{n,i}$ is used while Panel B reports the results when $\tilde{T}_{n,i}^{\text{overlap}}$ is used. The first three rows report the averages of the population and sample-level causal estimands and the OLS estimator. The fourth and fifth rows report the averages of the EHW standard errors and our proposed standard errors based on \cref{eq:se_mod}. The sixth and seventh rows report the average absolute deviations of the estimator from the two causal estimands. The last four rows report the coverage probabilities of the $95\%$ confidence intervals constructed using the EHW standard errors and the standard errors based on our proposed method \cref{eq:se_mod} for the two causal estimands.
        \end{tablenotes}
    \endgroup
    \end{threeparttable}  
    }
\end{table}

\section{Empirical Illustration}
\label{sec:empirical}
In an influential study, \cite{cai2015social} conducted a large-scale network experiment in which they randomly assigned information sessions on weather insurance products to rice farmers in rural villages in China. Out of 185 randomly selected villages, all rice farmers were invited to participate, and approximately 90\% agreed to attend. The researchers administered both a household survey (to gather farmer characteristics) and a network survey (to collect friendship links). In the network survey, household heads were asked to list their five closest friends with whom they discussed rice production and financial matters, which provides a star sampling network. They were allowed to list friends outside of their village.\footnote{\cite{cai2015social} conducted a pilot network survey in two villages without limiting the number of friends, but found that most farmers listed five or fewer friends. We take this analysis at face value and assume that there is no concern about censoring the number of friends.}

The information sessions were conducted in two rounds (first and second) and with varying intensity (simple or intensive). Farmers were randomly assigned to one of four possible sessions. The main outcome here, $Y_{n,i}$, is a test score measuring understanding of the insurance product, taking 10 values between 0 and 1 (\textbf{test}). The treatment variable, $D_{n,i}$, indicates whether a farmer was assigned to an intensive session (\textbf{intensive}). To measure the spillover/diffusion effects of the information sessions on farmers' knowledge, the researchers focused on a subsample of farmers who were not invited in the first round and defined (i) the fraction of a farmer's friends who attended an intensive session in the first round (\textbf{net}) and (ii) the fraction of those friends' friends who attended an intensive session in the first round (\textbf{weak}). 

As discussed in \cref{ex:cai} and the simulation section, including first-order overlaps between \textbf{net} and \textbf{weak} can significantly affect inference through induced contamination bias.\footnote{We found that \cite{cai2015social} included such overlaps in their version of \textbf{weak}; see the data/do/rawnet.do file in their replication folder: \url{https://www.openicpsr.org/openicpsr/project/113593/version/V1/view;jsessionid=743ABAC8AEBB3E612D4250D02BE40429}.} Here, we empirically examine whether such overlaps make a significant difference by comparing results when these overlaps are included or excluded in \textbf{net} and \textbf{weak}. Specifically, we run the following regression for the overlap and no-overlap specifications:\footnote{Note that \cite{cai2015social} specified the exposure mapping as either $(\text{intensive},\text{net})$ or $(\text{weak})$, running regressions separately. Here, we consider a hypothetical scenario where both \textbf{net} and \textbf{weak} are included in the regression simultaneously, rather than replicating their original results.}
$$
    \text{test} \sim \text{intensive} + \text{net} + \text{weak} + \text{controls}.
$$

For estimation, unlike in the simulation exercise above, we use all the available villages in the sample, as done in \cite{cai2015social}. We control for household characteristics, village fixed effects, and network information (degree dummy) to satisfy \cref{asm:linear_propensity}. Standard errors are calculated via our proposed method \cref{eq:se_mod}, with \(K=2\).

\begin{table}[htbp]
\centering
\caption{Regression Results for \cite{cai2015social}'s data}
\begin{tabular}{lccc}
\hline
\hline
 & With Overlaps & No Overlaps \\
\hline
intensive & 0.0752  & 0.0734  \\
  & (0.0159)  & (0.0164)  \\
net & 0.3110  & 0.2879  \\
  & (0.0527)  & (0.0500)  \\
weak & -0.1511  & -0.0741  \\
  & (0.0453)  & (0.0383)  \\
\hline
\end{tabular}
\caption*{\footnotesize \textit{Notes}: The number of villages is $47$, and the total sample size is $1247$. The first and second columns report estimates with and without overlaps in first-order links between \textbf{net} and \textbf{weak}. All regressions include household characteristics, village fixed effects, and network information as controls. Standard errors, computed using our proposed method \cref{eq:se_mod} with $\tilde{\gamma}_{n}=\hat{\gamma}_{n}$, are reported in parentheses.}
\label{table:empirical_regression} 
\end{table}

\cref{table:empirical_regression} reports the OLS estimator $\hat{\theta}_{n}$ and its standard errors, both with and without overlaps in the exposure mappings. When overlaps are included, the coefficient for \textbf{net} remains largely unchanged, but the estimate for \textbf{weak} becomes substantially more negative. 
Specifically, the coefficient on \textbf{weak} is statistically significant at the 95\% confidence level under the overlap specification, and its magnitude nearly doubles compared to the no-overlap specification\textemdash becoming comparable in size (but opposite in sign) to that of \textbf{net}.
This highlights the risk of overstating the effect of weak connections due to contamination bias, even when the true effect may be small or absent.

This pattern in the empirical results is consistent with the simulation findings in \cref{tab:simulationTcontam}, where overlaps in the exposure mapping lead to substantial contamination bias in the estimates of \textbf{weak}, while the estimates of \textbf{net} remain largely unaffected. 
Overall, this exercise highlights that correlations among elements of the exposure mapping can potentially lead to misleading assessments of causal spillover effects.

\section{Conclusion}
\label{sec:conclusion}
In this paper, we study a linear regression framework for estimating causal spillover effects in network experiments. We show that, due to contamination bias, the OLS estimator for spillover effects does not bear a causal interpretation unless the exposure mapping is free of correlation among its elements. We also develop a novel asymptotic theory for inference on causal spillover effects, allowing for explicit sampling of units and networks, as well as network dependence.

Based on our theoretical analysis and simulation/empirical exercises, we recommend that researchers follow the flowchart in \cref{fig:flowchart} when estimating causal spillover effects in network experiments using linear regression. A crucial step is to ensure that the exposure mapping is free of correlations among its elements to avoid contamination bias and to ensure a causal interpretation of the OLS estimator. If the exposure mapping implied by plausible economic theories is not free of correlations but is sufficiently discrete (e.g., binary) to satisfy the overlap condition, we suggest avoiding the OLS estimator and instead using alternative methods, such as inverse probability weighting (e.g., \citealp{aronow2017estimating}; \citealp{leung2022causal}; \citealp{gao2023causal}), to directly estimate the causal treatment effects. 

\begin{figure}[ht]
\begin{spacing}{1}
    \caption{Flowchart for Valid Inference with Linear Regression}
    \label{fig:flowchart}
    \centering
    \scalebox{0.9}{
\begin{tikzpicture}[node distance=1.2cm and 2.7cm, scale=0.95, transform shape, every node/.style={font=\small}]

  \node (start)  [startstop]               {Start};
  \node (dec1)   [decision, below=of start] {Covariates satisfy \cref{asm:linear_propensity}?};
  \node (dec2)   [decision, below=of dec1]  {Regressors satisfy the no correlation condition in \cref{cor:no_contamination}?};
  \node (dec3)   [decision, below=of dec2]  {Exposure mapping correctly specified and relevant network information observed (\cref{asm:exposure}(i))?};
  \node (dec4)   [decision, below=of dec3]  {Network HAC estimator correctly applied?};
  \node (stop)   [startstop, below=of dec4] {Valid inference: confidence interval >95\% asymptotically};

  \node (proc1)  [process, right=of dec1]  {Include\\ required covariates};
  \node (proc2)  [process, right=of dec2]  {Modify the exposure mapping or flag results as potentially contaminated};
  \node (proc3)  [process, right=of dec3]  {Interpret as inference for $\theta^{\causample}$, not $\theta^{\causal}$};
  \node (proc4)  [process, right=of dec4]  {Apply correct network HAC};

  \coordinate (mid12) at ($(dec1.south)!0.5!(dec2.north)$);
  \coordinate (mid23) at ($(dec2.south)!0.5!(dec3.north)$);
  \coordinate (mid34) at ($(dec3.south)!0.5!(dec4.north)$);
  \coordinate (mid45) at ($(dec4.south)!0.5!(stop.north)$);

  \draw [arrow] (start) -- (dec1);
  \draw [arrow] (dec1) -- node[left]{Yes} (dec2);
  \draw [arrow] (dec2) -- node[left]{Yes} (dec3);
  \draw [arrow] (dec3) -- node[left]{Yes} (dec4);
  \draw [arrow] (dec4) -- node[left]{Yes} (stop);

  \draw [arrow] (dec1.east) -- node[above]{No} (proc1.west);
  \draw [arrow] (proc1.south) |- (mid12);

  \draw [arrow] (dec2.east) -- node[above]{No} (proc2.west);
  \draw [arrow] (proc2.south) |- (mid23);

  \draw [arrow] (dec3.east) -- node[above]{No} (proc3.west);
  \draw [arrow] (proc3.south) |- (mid34);

  \draw [arrow] (dec4.east) -- node[above]{No} (proc4.west);
  \draw [arrow] (proc4.south) |- (mid45);

\end{tikzpicture}
}
\end{spacing}
\end{figure}

While this paper establishes a comprehensive framework for network experiments on sampled networks, several avenues for future research emerge.
First, relaxing the sampling assumptions to accommodate cluster and multi-wave designs, as well as allowing more complex assignment mechanisms, would broaden applicability. The present analysis permits assignment conditional on observed covariates but excludes matched-pair and blocked randomization.
Second, a systematic comparison between regression-based estimators and inverse-probability-weighting approaches for spillover effects in network experiments is important, but lies beyond the scope of this paper.

\addcontentsline{toc}{section}{References}
\putbib[list_ref]
\end{bibunit}

\newpage
\appendix

\clearpage
\begin{bibunit}[apecon]

\begin{LARGE}
\begin{center}
Supplement to ``Design-based and Network Sampling-Based Uncertainties in Network Experiments''
\end{center}
\end{LARGE}

\begin{large}
\begin{center}
Kensuke Sakamoto and Yuya Shimizu\\
University of Wisconsin-Madison
\end{center}
\end{large}

\begin{large}
\begin{center}
\today
\end{center}
\end{large}

\bigskip

This supplementary appendix contains proofs of the results in the main text as well as auxiliary results.
\cref{app:gamma} discusses how to estimate the nuisance parameters consistently.
\cref{app:lemmas} contains technical lemmas. \cref{app:proof} contains proofs. 
\cref{app:sim_additional} presents additional simulation results.
\cref{app:survey} lists the papers included in the survey of network experiment research presented in the Introduction.

\section{Example for $\tilde{\gamma}_n=\gamma_n^{\causal}+o_p(1)$}
\label{app:gamma}
In \cref{thm:var_consistency,thm:var_consistency_eig}, we need some $\tilde{\gamma}_n$ satisfying $\tilde{\gamma}_n=\gamma_n^{\causal}+o_p(1)$.
By \cref{asm:exposure} (ii), we, without loss of generality, assume that the first $m$ elements of $Z_{n,i}$ depend on $R_{n,i}$ multiplicatively.\footnote{In the usual applications, it is enough to consider the $m=1$ case.}
We allow general heterogeneous treatment assignment in \cref{asm:sampling} (iii).
We also assume that the researcher knows $\rho_n$ or the population size $n$.
Let $Z_{n,i}=(Z_{(1:m),n,i}',Z_{-(1:m),n,i}')'$, where $Z_{(1:m),n,i}$ is the first $m$ elements of $Z_{n,i}$ and $Z_{-(1:m),n,i}$ are the remaining elements.
Recall that $\tilde{Z}_{n,i}=Z_{n,i}$ under \cref{asm:exposure}.

Define 
\begin{equation}
    \tilde{\gamma}_{n}=(\tilde{P}_n^{ZZ})^{-1}\tilde{P}_n^{ZY},\label{eq:gamma}
\end{equation}
where
\begin{align*}
    \tilde{P}_n^{ZZ}&=\frac{1}{N} \sum_{i=1}^n R_{n, i}\left(\begin{array}{cc}
    \rho_nZ_{(1:m),n,i}Z_{(1:m),n,i}'&\rho_nZ_{(1:m),n,i}Z_{-(1:m),n,i}'\\
    \rho_nZ_{-(1:m),n,i}Z_{(1:m),n,i}'&Z_{-(1:m),n,i}Z_{-(1:m),n,i}'
    \end{array}\right),\\
    \tilde{P}_n^{ZY}&=\frac{1}{N} \sum_{i=1}^n R_{n, i}\left(\begin{array}{c}
    \rho_nZ_{(1:m),n,i}\\
    Z_{-(1:m),n,i}
    \end{array}\right)Y_{n,i}.
\end{align*}
Note that some elements of $\tilde{P}_n^{ZZ}$ and $\tilde{P}_n^{ZY}$ are rescaled by $\rho_n$ from $\tilde{Q}_n^{ZZ}$ and $\tilde{Q}_n^{ZY}$. $\rho_n$ can be replaced with its consistent estimator $N/n$. The consistency of $\tilde{\gamma}_{n}$ is shown in \cref{lem:gamma_consistency}.

\section{Preliminary Results}
\label{app:lemmas}
Remember that for each $i\in\mathcal{N}_{n}$,
\begin{align*}
    T_{n,i}&=g(i,\bD_{n},\bA_{n});\\
    \tilde{T}_{n,i}&=\tilde{g}(i,\bD_{n},\tilde{\bA}_{n});\\
    X_{n,i}&=T_{n,i}-\Lambda_{n}Z_{n,i};\\
    \tilde{X}_{n,i}&=\tilde{T}_{n,i}-\tilde{\Lambda}_{n}\tilde{Z}_{n,i},
\end{align*}
where \begin{align*}
    \Lambda_{n}&=(\sum_{i=1}^n \mathbb{E}[T_{n,i}Z_{n,i}^{\prime}])(\sum_{i=1}^n \mathbb{E}[Z_{n,i}Z_{n,i}^{\prime}])^{-1};\\
    \tilde{\Lambda}_{n}&=(\sum_{i=1}^n R_{n, i}\mathbb{E}[\tilde{T}_{n,i}|\bR_n]\tilde{Z}_{n,i}^{\prime})(\sum_{i=1}^n R_{n, i}\tilde{Z}_{n,i}\tilde{Z}_{n,i}^{\prime})^{-1},
\end{align*}
and
$$
\Omega_n=\frac{1}{n} \sum_{i=1}^n\mathbb{E}\left[\left(\begin{array}{c}
    Y_{n,i}\\
    X_{n,i}\\
    Z_{n,i}
\end{array}\right)\left(\begin{array}{c}
    Y_{n,i}\\
    X_{n,i}\\
    Z_{n,i}
\end{array}\right)^{\prime}\right]\equiv\left(\begin{array}{ccc}
    \Omega_n^{YY} & \Omega_n^{YX} & \Omega_n^{YZ}\\
    \Omega_n^{XY} & \Omega_n^{XX} & \Omega_n^{XZ}\\
    \Omega_n^{ZY} & \Omega_n^{ZX} & \Omega_n^{ZZ}
\end{array}\right);
$$
$$
\tilde{Q}_n=\frac{1}{N} \sum_{i=1}^n R_{n, i}\left(\begin{array}{c}
    Y_{n,i}\\
    \tilde{X}_{n,i}\\
    \tilde{Z}_{n,i}
\end{array}\right)\left(\begin{array}{c}
    Y_{n,i}\\
    \tilde{X}_{n,i}\\
    \tilde{Z}_{n,i}
\end{array}\right)^{\prime}\equiv\left(\begin{array}{ccc}
    \tilde{Q}_n^{YY} & \tilde{Q}_n^{YX} & \tilde{Q}_n^{YZ}\\
    \tilde{Q}_n^{XY} & \tilde{Q}_n^{XX} & \tilde{Q}_n^{XZ}\\
    \tilde{Q}_n^{ZY} & \tilde{Q}_n^{ZX} & \tilde{Q}_n^{ZZ}
\end{array}\right);
$$
$$
\tilde{\Omega}_n=\frac{1}{N} \sum_{i=1}^n R_{n, i}\mathbb{E}\left[\left(\begin{array}{c}
    Y_{n,i}\\
    \tilde{X}_{n,i}\\
    \tilde{Z}_{n,i}
\end{array}\right)\left(\begin{array}{c}
    Y_{n,i}\\
    \tilde{X}_{n,i}\\
    \tilde{Z}_{n,i}
\end{array}\right)^{\prime}\mid\bR_n\right]\equiv\left(\begin{array}{ccc}
    \tilde{\Omega}_n^{YY} & \tilde{\Omega}_n^{YX} & \tilde{\Omega}_n^{YZ}\\
    \tilde{\Omega}_n^{XY} & \tilde{\Omega}_n^{XX} & \tilde{\Omega}_n^{XZ}\\
    \tilde{\Omega}_n^{ZY} & \tilde{\Omega}_n^{ZX} & \tilde{\Omega}_n^{ZZ}
\end{array}\right).
$$

\subsection{Preliminary Lemmas}
We will use the following results from \cite{kojevnikov2021limit}. We will only state the conditional version of the results, but also use the unconditional version of the results, which can be understood analogously.

Define
\begin{align*}
    \sigma^{2}_{n}=\Var(S_{n}\mid\bR_n),
\end{align*}
where $S_{n}=\sum_{i\in \mathcal{N}_{n}}U_{i,n}$.
\begin{condition}
    \label{cond:clt}
    A triangular array $\{U_{n,i}\}$ is conditionally $\psi$-dependent given $\bR_n$ with $\xi_{n}$ satisfying
    \begin{itemize}
        \item For some constant $C>0$,
        \begin{align*}
            \psi_{a,b}(f,g)\leq C\times ab(\|f\|_{\infty}+\Lip(f))(\|g\|_{\infty}+\Lip(g)).
        \end{align*}
        \item $\sup_{n}\max_{s\geq1}\xi_{n,s}<\infty$ a.s.
        \item For some $p>4$, $\sup_{n\geq 1}\max_{i\in \mathcal{N}_{n}}\mathbb{E}[|U_{n,i}|^{p}\mid\bR_n]<\infty$ a.s.
        \item There exists a positive sequence $m_{n}\to\infty$ such that for $k=1,2$,
        \begin{align*}
            &\frac{n}{\sigma_{n}^{2+k}}\sum_{s\geq 0}c_{n}(s,m_{n};k)\xi_{n,s}^{1-\frac{2+k}{p}}\overset{a.s.}{\longrightarrow} 0,\\
            &\frac{n^{2}\xi_{n,m_{n}}^{1-1/p}}{\sigma_{n}}\overset{a.s.}{\longrightarrow} 0.
        \end{align*}
        \item $\mathbb{E}[U_{n,i}\mid\bR_n]=0$.
    \end{itemize}
\end{condition}

\begin{lemma}[CLT, Theorem 3.2 in \citealp{kojevnikov2021limit}] Under \cref{cond:clt}, 
\begin{align*}
    \sup_{t\in\mathbb{R}}\left|\mathbb{P}\left\{\frac{S_{n}}{\sigma_{n}}\leq t\mid\bR_n\right\}-\Phi(t)\right|\overset{a.s.}{\longrightarrow}0\text{ as }n\to\infty,
\end{align*}
    where $\Phi$ denotes the distribution function of $\mathcal{N}(0,1)$.
\end{lemma}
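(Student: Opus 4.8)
The plan is to establish this conditional central limit theorem by Stein's method, following the argument of \cite{kojevnikov2021limit} but carrying the conditioning on $\bR_n$ throughout. The guiding observation is that, once we condition on $\bR_n$, every object entering the normal-approximation bound---the conditional moments of the $U_{n,i}$, the $\psi$-dependence functionals $\psi_{a,b}$, the decay sequence $\zeta_{n}$, and the network counting functionals $c_n(s,m_n;k)$ and $\delta_n^{\partial}(s;p)$---is $\sigma(\bR_n)$-measurable. Hence the entire Stein error is itself a $\sigma(\bR_n)$-measurable random variable, and it suffices to show that this error vanishes almost surely along $n$. Accordingly, I would restrict attention to the probability-one event on which all the hypotheses collected in \cref{cond:clt} hold, and argue there.

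First I would set $W_n=S_n/\sigma_n$ and, for a smooth test function $h$ with bounded derivatives, solve the Stein equation $f'(x)-xf(x)=h(x)-\mathbb{E}[h(Z)]$ with $Z\sim N(0,1)$. Because $\mathbb{E}[U_{n,i}\mid\bR_n]=0$ gives $\mathbb{E}[W_n\mid\bR_n]=0$ and the normalization gives $\Var(W_n\mid\bR_n)=1$, controlling $|\mathbb{E}[h(W_n)\mid\bR_n]-\mathbb{E}[h(Z)]|$ reduces to bounding $\mathbb{E}[f'(W_n)-W_nf(W_n)\mid\bR_n]$. The dependence structure then enters through a neighborhood decomposition: writing $W_nf(W_n)=\sigma_n^{-1}\sum_i U_{n,i}f(W_n)$ and replacing $W_n$ inside $f$ by a localized sum that deletes the $m_n$-neighborhood of $i$, the cross terms split into \emph{near} contributions---indexed by units within each distance $s$ and controlled by the conditional moment bound $\sup_i\mathbb{E}[|U_{n,i}|^p\mid\bR_n]<\infty$ for $p>4$ together with the counting functionals $c_n(s,m_n;k)$---and \emph{remote} contributions, controlled by the $\psi$-dependence inequality $|\Cov(f(U_{n,A}),g(U_{n,B}))|\le\psi_{a,b}(f,g)\zeta_{n,s}$ for $d_n(A,B)\ge s$ combined with the growth bound on $\psi_{a,b}$.

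Collecting these pieces, and passing from smooth test functions to the Kolmogorov metric by a standard smoothing argument that approximates $\mathds{1}\{\,\cdot\le t\}$ and optimizes the smoothing scale, the conditional supremum distance is dominated by a constant multiple of
\begin{align*}
    \frac{n}{\sigma_n^{3}}\sum_{s\ge0}c_n(s,m_n;1)\,\zeta_{n,s}^{1-3/p}
    +\frac{n}{\sigma_n^{4}}\sum_{s\ge0}c_n(s,m_n;2)\,\zeta_{n,s}^{1-4/p}
    +\frac{n^{2}\,\zeta_{n,m_n}^{1-1/p}}{\sigma_n}.
\end{align*}
Each summand is $\sigma(\bR_n)$-measurable, and the last bullet of \cref{cond:clt} (for $k=1,2$) states precisely that each converges to zero almost surely. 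Therefore the conditional Kolmogorov distance converges to zero a.s., which is the assertion.

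The main obstacle is the derivation of the explicit Stein bound with the correct interplay between the network counting functionals and the dependence decay---in particular, selecting the truncation radius $m_n\to\infty$ so that the near error (which grows with the size of the $m_n$-neighborhoods through $c_n(s,m_n;k)$) and the remote error (which shrinks through $\zeta_{n,m_n}$) are \emph{simultaneously} negligible. The conditioning itself adds no essential difficulty: the entire bound is $\sigma(\bR_n)$-measurable, the hypotheses are imposed in the almost-sure sense, and once $\bR_n$ is fixed on the favorable event the localization and smoothing steps go through verbatim, so the only new element is reading the deterministic Stein estimate as an a.s. statement in $\bR_n$.
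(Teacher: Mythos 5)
Your proposal is correct, but it does considerably more than the paper itself does: the paper offers no proof of this lemma at all. In the appendix the authors simply state \cref{cond:clt} and quote Theorem 3.2 of \cite{kojevnikov2021limit}, prefaced by the remark that they ``only state the conditional version of the results,'' which ``can be understood analogously.'' Your sketch is precisely the argument that remark leaves implicit: conditionally on $\bR_n$ the array is $\psi$-dependent with $\sigma(\bR_n)$-measurable coefficients, so the finite-$n$ Berry--Esseen-type bound of Kojevnikov, Marmer, and Song (their Theorem 3.1) applies under each conditional measure; the resulting bound is a $\sigma(\bR_n)$-measurable random variable whose three terms are exactly the quantities in the last bullet of \cref{cond:clt}, and the assumed a.s. convergence of those terms yields a.s. convergence of the conditional Kolmogorov distance. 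Two small remarks. First, since $\bR_n$ is a vector of i.i.d. Bernoulli draws it is discrete, so the conditioning is elementary---no regular-conditional-probability machinery is needed; for each $n$ one applies the deterministic bound under $\mathbb{P}(\cdot\mid\bR_n=r)$ for every $r$ of positive probability, which is even cleaner than your measurability framing. Second, you need not re-derive the Stein estimate from scratch: the localization and smoothing work is already packaged in the cited Theorem 3.1, and invoking it conditionally at each $n$ suffices; your reconstruction of the near/remote decomposition is sound but optional, and any imprecision in the exact exponents of your displayed bound is immaterial since all three terms vanish a.s. by hypothesis. In short, your proof is a valid and more explicit substitute for the citation the paper relies on.
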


\begin{lemma}[Linear Transformation, Lemma 2.1 in \citealp{kojevnikov2021limit}]
    \label{lem:linear}
    For each $n \geq 1$, let $\left\{a_{n, i}\right\}_{i \in \mathcal{N}_n}$ be a sequence of $\sigma(\bR_n)$-measurable vectors such that $\max _{i \in \mathcal{N}_n}\left\|a_{n, i}\right\| \leq 1$ a.s. 
    Under the first condition of \cref{cond:clt}, 
    the array $a_{n, i}^{\prime} U_{n, i}$ is conditionally $\psi$-dependent given $\bR_n$ with the dependence coefficients $\left\{\xi_n\right\}$.
\end{lemma}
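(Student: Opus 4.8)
The plan is to verify directly that the scalar array $W_{n,i}=c_{n,i}'U_{n,i}$ satisfies the conditional covariance bound in the definition of $\psi$-dependence, reusing the \emph{same} coefficient sequence $\zeta_n$ and the canonical product form of $\psi_{a,b}$ supplied by the first bullet of \cref{cond:clt}. The central idea is that, because each $c_{n,i}$ is $\sigma(\bR_n)$-measurable, conditioning on $\bR_n$ freezes the coefficients, so that any bounded Lipschitz function of the $W$'s becomes a bounded Lipschitz function of the underlying $U$'s, to which the hypothesis applies verbatim.

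First I would fix $a,b\in\mathbb{N}$, an integer $s>0$, a pair $(A,B)\in\mathcal{P}_n(a,b;s)$, and functions $f\in\mathcal{L}_{1,a}$, $g\in\mathcal{L}_{1,b}$ (scalar arguments, since $W_{n,i}$ is real-valued). Define the composed maps $\tilde f$ and $\tilde g$ by $\tilde f((u_i)_{i\in A})=f((c_{n,i}'u_i)_{i\in A})$ and analogously for $\tilde g$. Conditional on $\bR_n$ these are \emph{deterministic} functions on $\mathbb{R}^{v\times a}$ and $\mathbb{R}^{v\times b}$, and by construction $f(W_{n,A})=\tilde f(U_{n,A})$ and $g(W_{n,B})=\tilde g(U_{n,B})$ almost surely. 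I would then check that $\tilde f\in\mathcal{L}_{v,a}$: boundedness is immediate, $\|\tilde f\|_\infty\le\|f\|_\infty$, since composition only reparametrizes the argument; and for the Lipschitz constant, the block-diagonal linear map $(u_i)_{i\in A}\mapsto(c_{n,i}'u_i)_{i\in A}$ contracts each block, because $|c_{n,i}'u_i-c_{n,i}'u_i'|\le\|c_{n,i}\|\,\|u_i-u_i'\|\le\|u_i-u_i'\|$ by Cauchy--Schwarz and the hypothesis $\max_i\|c_{n,i}\|\le1$, so the map does not increase the relevant product norm and $\Lip(\tilde f)\le\Lip(f)$. The same bounds hold for $\tilde g$, whence $\|\tilde f\|_\infty+\Lip(\tilde f)\le\|f\|_\infty+\Lip(f)$.

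Next I would invoke the conditional $\psi$-dependence of $\{U_{n,i}\}$. Since $(A,B)\in\mathcal{P}_n(a,b;s)$ and $\tilde f,\tilde g$ lie in the relevant function classes,
\[
|\Cov(f(W_{n,A}),g(W_{n,B})\mid\bR_n)|=|\Cov(\tilde f(U_{n,A}),\tilde g(U_{n,B})\mid\bR_n)|\le\psi_{a,b}(\tilde f,\tilde g)\,\zeta_{n,s}\quad\text{a.s.}
\]
Applying the first bullet of \cref{cond:clt} together with the sup-norm and Lipschitz bounds above yields
\[
\psi_{a,b}(\tilde f,\tilde g)\le C\,ab(\|\tilde f\|_\infty+\Lip(\tilde f))(\|\tilde g\|_\infty+\Lip(\tilde g))\le C\,ab(\|f\|_\infty+\Lip(f))(\|g\|_\infty+\Lip(g)).
\]
Setting $\psi^{W}_{a,b}(f,g):=C\,ab(\|f\|_\infty+\Lip(f))(\|g\|_\infty+\Lip(g))$ and retaining $\xi_n=\zeta_n$ exhibits $W_{n,i}$ as conditionally $\psi$-dependent given $\bR_n$ with the same dependence coefficients $\{\zeta_n\}$, as claimed.

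The main obstacle---indeed the only nontrivial point---is the conditioning/measurability step: one must argue that freezing $\bR_n$ turns the \emph{random} reparametrization $u\mapsto c_{n,i}'u$ into a fixed deterministic map, so that the conditional covariance of the transformed variables is literally the conditional covariance of a legitimate pair $(\tilde f,\tilde g)$ of fixed functions in $\mathcal{L}_{v,a}\times\mathcal{L}_{v,b}$. Once this is in place, everything else reduces to the elementary observation that composing with a $1$-Lipschitz linear contraction increases neither the sup-norm nor the Lipschitz constant.
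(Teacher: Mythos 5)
Your proof is correct. The paper itself offers no proof of this lemma---it is imported verbatim (in conditional form) as Lemma 2.1 of \citealp{kojevnikov2021limit}---and your argument is exactly the standard proof of that result adapted to covariances conditional on $\bR_n$: freeze the $\sigma(\bR_n)$-measurable weights by conditioning, observe that composition with the block-wise linear contraction $(u_i)_i \mapsto (c_{n,i}'u_i)_i$ increases neither $\|\cdot\|_\infty$ nor the Lipschitz constant, and then pass the resulting bound through the product-form majorant of $\psi_{a,b}$ from the first bullet of \cref{cond:clt}, retaining the same coefficients $\{\zeta_n\}$.
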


\begin{condition}
    \label{cond:var_consistency}
    Let $\omega(x)=\mathds{1}\{|x|\leq1\}$.
    There exists $p>4$ such that
    \begin{itemize}
        \item $\sup _{n \geq 1} \max _{i \in \mathcal{N}_n}\mathbb{E}[|U_{n,i}|^{p}\mid\bR_n]<\infty$ a.s.
        \item $\lim _{n \rightarrow \infty} \sum_{s \geq 1}\left|\omega(s/2K)-1\right| \delta_n^{\partial}(s,1) \xi_{n,s}^{1-(2 / p)}=0$ a.s.
        \item $\lim _{n \rightarrow \infty} n^{-1} \sum_{s \geq 0} c_n\left(s, 2K ; 2\right) \xi_{n,s}^{1-(4 / p)}=0$ a.s.
    \end{itemize}
\end{condition}

\begin{lemma}[Variance Consistency, $2K$ Local Case of Proposition 4.1. in \citealp{kojevnikov2021limit}]
    \label{lem:var_consistency}
    Suppose that \cref{cond:clt,cond:var_consistency} hold. Then as $n \rightarrow \infty$,
    $$
        \mathbb{E}\left[\left\|\frac{1}{n}\sum_{i=1}^n \sum_{j \in \tilde{\mathcal{N}}_n(i;2K)} U_{n,i}U_{n,j}^{\prime}-\Var\left(\frac{S_n}{\sqrt{n}}\mid\bR_n\right)\right\|_F
        \mid\bR_n\right] \overset{a.s.}{\longrightarrow} 0,
    $$
    where $\|\cdot\|_F$ is the Frobenius norm.
    By Markov's inequality, we also have
    $$
        \frac{1}{n}\sum_{i=1}^n \sum_{j \in \tilde{\mathcal{N}}_n(i;2K)} U_{n,i}U_{n,j}^{\prime}-\Var\left(\frac{S_n}{\sqrt{n}}\mid\bR_n\right)\overset{p^R}{\longrightarrow}0.
    $$
\end{lemma}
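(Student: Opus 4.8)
The plan is to read the displayed estimator as the network HAC estimator of \cite{kojevnikov2021limit} evaluated at the indicator kernel $\omega(x)=\mathds{1}\{|x|\le1\}$ with the fixed bandwidth $b_n=2K$: then the generic weight $\omega(d_n(i,j)/b_n)$ collapses to $\mathds{1}\{d_n(i,j)\le2K\}$, and $\tfrac1n\sum_i\sum_{j\in\mathcal{N}_n(i;2K)}U_{n,i}U_{n,j}'$ is exactly their estimator. Consistency then follows from a specialization of their Proposition 4.1, and the work reduces to verifying that \cref{cond:clt,cond:var_consistency} are the hypotheses of that proposition in this particular case and to isolating where the fixed-bandwidth ``local'' structure simplifies matters. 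I would organize the argument as a bias part and a variance part, writing $\hat V_n$ for the estimator and $V_n=\Var(S_n/\sqrt n\mid\bR_n)$ for the target.

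For the bias, note first that the conditional mean $\mathbb{E}[\hat V_n\mid\bR_n]=\tfrac1n\sum_i\sum_{j\in\mathcal{N}_n(i;2K)}\mathbb{E}[U_{n,i}U_{n,j}'\mid\bR_n]$ equals $\tfrac1n\sum_i\sum_{j\in\mathcal{N}_n(i;2K)}\Cov(U_{n,i},U_{n,j}\mid\bR_n)$ by the centering condition $\mathbb{E}[U_{n,i}\mid\bR_n]=0$ in \cref{cond:clt}. Its difference from $V_n$ is the tail $\tfrac1n\sum_i\sum_{j:\,d_n(i,j)>2K}\Cov(U_{n,i},U_{n,j}\mid\bR_n)$. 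With the indicator kernel the second bullet of \cref{cond:var_consistency} reads $\sum_{s>2K}\delta_n^\partial(s;1)\xi_{n,s}^{1-2/p}\to0$ a.s., and I would bound the tail by exactly this quantity: for $d_n(i,j)=s>0$ the conditional $\psi$-dependence controls $|\Cov(U_{n,i},U_{n,j}\mid\bR_n)|$ up to a truncation correction, and summing over $j\in\mathcal{N}_n^\partial(i;s)$ and over $i$ produces the $\delta_n^\partial$ weights. In the paper's applications the array is exactly $2K$-dependent conditional on $\bR_n$, so every tail covariance vanishes and the bias is identically zero; the point of the ``$2K$ local'' specialization is precisely that no diverging bandwidth is needed to kill the truncation error.

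The variance part is the technical core: I would bound $\mathbb{E}[\|\hat V_n-\mathbb{E}[\hat V_n\mid\bR_n]\|_F^2\mid\bR_n]$, which expands as $n^{-2}$ times a sum over quadruples $(i,j,i',j')$ with $d_n(i,j)\le2K$ and $d_n(i',j')\le2K$ of conditional covariances $\Cov(U_{n,i}U_{n,j}',\,U_{n,i'}U_{n,j'}'\mid\bR_n)$. Partitioning these quadruples by the separation $s$ between the blocks $\{i,j\}$ and $\{i',j'\}$, the near blocks contribute boundedly many terms (counted through $\delta_n^\partial$ and $\Delta_n$, with the moment bound supplying a uniform constant), while the separated blocks are handled by the conditional $\psi$-dependence, the number of contributing quadruples being summarized by $c_n(s,2K;2)$. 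The genuine obstacle is that the products $U_{n,i}U_{n,j}'$ are unbounded, so the dependence inequality---valid only for bounded Lipschitz functionals via the first bullet of \cref{cond:clt}---cannot be applied directly; I would truncate each factor at a level $M$, apply the dependence bound to the truncated (bounded Lipschitz) part and the $p$th-moment bound ($p>4$) to the tail, and optimize over $M$. That optimization is what converts $\xi_{n,s}$ into $\xi_{n,s}^{1-4/p}$, so the whole quadruple sum is dominated by $n^{-1}\sum_{s\ge0}c_n(s,2K;2)\xi_{n,s}^{1-4/p}$, which the third bullet of \cref{cond:var_consistency} sends to $0$ a.s.

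Combining the two parts, the triangle inequality and Jensen give $\mathbb{E}[\|\hat V_n-V_n\|_F\mid\bR_n]\le\|\mathbb{E}[\hat V_n\mid\bR_n]-V_n\|_F+(\mathbb{E}[\|\hat V_n-\mathbb{E}[\hat V_n\mid\bR_n]\|_F^2\mid\bR_n])^{1/2}\to0$ a.s., which is the first claim; the conditional Markov inequality then yields $\hat V_n-V_n\overset{p^R}{\longrightarrow}0$, the second. I expect the matching of the quadruple combinatorics to the density measures $c_n(\cdot,2K;2)$ and the propagation of the truncation argument through the Frobenius norm of a matrix- (rather than scalar-) valued estimator to be the main difficulties; once exact $2K$-dependence is invoked, the bias step is immediate.
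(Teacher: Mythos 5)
Your proposal is correct and matches the paper's treatment of this statement: the paper offers no proof of its own, importing the lemma verbatim as the indicator-kernel, fixed-bandwidth ($b_n=2K$) case of Proposition 4.1 in Kojevnikov, Marmer, and Song (2021), with \cref{cond:clt,cond:var_consistency} stated precisely so as to be that proposition's hypotheses. Your reduction to that proposition, the bias/variance decomposition, the truncation argument that produces the exponents $1-2/p$ and $1-4/p$ against the second and third bullets of \cref{cond:var_consistency}, and the concluding triangle-inequality-plus-conditional-Markov step are a faithful reconstruction of the cited proof, so nothing needs to be added.
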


\subsection{Main Lemmas}
\begin{lemma}
\label{lem:N_pos}
    Under $\rho_n n\to\infty$,
    \begin{align*}
        N>0 \text{ a.s. for large enough $n$}
    \end{align*}
\end{lemma}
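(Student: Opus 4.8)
The plan is to bound the probability that the entire row of sampling indicators vanishes and then invoke the first Borel--Cantelli lemma. Write $N = N_n = \sum_{i=1}^n R_{n,i}$. The event $\{N_n = 0\}$ is precisely the event that $R_{n,i} = 0$ for every $i \in \mathcal{N}_n$, so by the i.i.d. Bernoulli structure in \cref{asm:sampling} (i),
\[
    \mathbb{P}(N_n = 0) = \prod_{i=1}^n \mathbb{P}(R_{n,i}=0) = (1-\rho_n)^n.
\]

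First I would exploit the convergence $\rho_n \to \rho \in (0,1]$. Since $\rho > 0$, there is an $n_0$ with $\rho_n \ge \rho/2$ for all $n \ge n_0$, and hence $\mathbb{P}(N_n=0) \le (1-\rho/2)^n$ for $n \ge n_0$. Because $1 - \rho/2 \in [0,1)$, the tail $\sum_{n \ge n_0}(1-\rho/2)^n$ is a convergent geometric series, so $\sum_{n} \mathbb{P}(N_n = 0) < \infty$. Applying the first Borel--Cantelli lemma to the events $\{N_n = 0\}$ gives $\mathbb{P}(N_n = 0 \text{ infinitely often}) = 0$; equivalently, almost surely there is a (random) threshold $n_1$ with $N_n > 0$ for all $n \ge n_1$, which is exactly the claim.

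The only genuinely delicate point is bookkeeping rather than probabilistic content: the statement ``$N > 0$ a.s. for large enough $n$'' is a statement about the whole triangular array, so it presumes the indicators $\{R_{n,i}\}$ are realized on a single common probability space, and the Borel--Cantelli step needs the events $\{N_n=0\}$ to live on that space. This is guaranteed by the standard construction underlying \cref{asm:sampling} (for instance, realizing each row by thresholding an underlying i.i.d. uniform sequence at level $\rho_n$); note that Borel--Cantelli does not require the events $\{N_n=0\}$ to be independent across $n$, so no coupling across rows beyond joint measurability is needed. I expect this to be the main obstacle only in the sense of making the common-space assumption explicit; all the substance is contained in the summability of $(1-\rho_n)^n$, which follows immediately from $\rho_n$ being bounded away from zero.
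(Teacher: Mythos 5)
Your proof is correct and follows essentially the same route as the paper's: compute $\mathbb{P}(N=0)=(1-\rho_n)^n$ exactly, establish summability over $n$, and conclude via the first Borel--Cantelli lemma. The only cosmetic difference is the bounding step—you use $\rho_n\geq\rho/2$ eventually to get a geometric series, while the paper uses $1-x\leq e^{-x}$ to bound by $\sum_n e^{-n\rho_n}$—and your version is, if anything, slightly tighter logically, since summability of $e^{-n\rho_n}$ genuinely requires $\rho_n$ bounded away from zero (as in \cref{asm:sampling}) rather than merely $n\rho_n\to\infty$.
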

\begin{proof}
    Since the result is trivial for $\rho_n=1$, we focus on the case $\rho_{n}\in(0,1)$.
    By the inequality $1-x \leq e^{-x}$ for $x \in(0,1)$, we have $\left(1-\rho_n\right)^n \leq e^{-n \rho_n}$.
    Thus,
    $$
        \sum_{n=1}^{\infty} \mathbb{P}(N=0)=\sum_{n=1}^{\infty} \mathbb{P}\left(\sum_{i=1}^{n}R_{n,i}=0\right) 
        =\sum_{n=1}^{\infty}\left(1-\rho_n\right)^n
        \leq \sum_{n=1}^{\infty} e^{-n \rho_n}.
    $$
    $\rho_n n\to\infty$ implies the right-hand side is bounded.
    By the Borel-Cantelli lemma, we can conclude.
\end{proof}

\begin{lemma}
\label{lem:nNconv}
    Under $\rho_n^2 n\to\infty$,
    \begin{align*}
        \frac{N}{n\rho_n}\overset{a.s.}{\longrightarrow}1
    \end{align*}
\end{lemma}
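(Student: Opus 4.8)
The plan is to center and rescale the sampling count. Writing $N-n\rho_n=\sum_{i=1}^{n}(R_{n,i}-\rho_n)$, we have
\begin{align*}
    \frac{N}{n\rho_n}-1=\frac{1}{n\rho_n}\sum_{i=1}^{n}(R_{n,i}-\rho_n).
\end{align*}
Because the $R_{n,i}$ form a triangular array whose success probability $\rho_n$ drifts with $n$, I cannot simply invoke the strong law of large numbers for a fixed i.i.d.\ sequence. Instead I would establish almost sure convergence through a concentration-plus-Borel--Cantelli argument, exactly in the spirit of the proof of \cref{lem:N_pos}.

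First I would fix $\varepsilon>0$ and bound the tail $\mathbb{P}(|N/(n\rho_n)-1|>\varepsilon)$. Under \cref{asm:sampling} (i) the $R_{n,i}$ are i.i.d.\ across $i$ for fixed $n$ and each lies in $[0,1]$, so Hoeffding's inequality gives
\begin{align*}
    \mathbb{P}\left(\left|\sum_{i=1}^{n}(R_{n,i}-\rho_n)\right|>\varepsilon n\rho_n\right)\leq 2\exp\left(-2\varepsilon^2 n\rho_n^2\right).
\end{align*}
Next I would use $\rho_n\to\rho\in(0,1]$: since $\rho>0$, there exist $n_0$ and a constant $c>0$ (e.g.\ $c=\rho^2/4$) with $\rho_n^2\geq c$ for all $n\geq n_0$, so for such $n$ the tail bound is at most $2\exp(-2\varepsilon^2 c\,n)$, which is summable. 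The Borel--Cantelli lemma then yields $\mathbb{P}(|N/(n\rho_n)-1|>\varepsilon \text{ infinitely often})=0$.

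Finally, applying this to a countable sequence $\varepsilon_k\downarrow 0$ and intersecting the corresponding probability-one events gives $N/(n\rho_n)\to 1$ almost surely. The main obstacle, a mild one, is the triangular-array structure, which rules out the ordinary SLLN and forces the concentration route; the only other point requiring care is ensuring the denominator does not degenerate, which is precisely where $\rho>0$ is used to bound $\rho_n^2$ away from zero for large $n$.
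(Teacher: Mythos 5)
Your proof is correct and follows essentially the same route as the paper's: Hoeffding's inequality applied to $\sum_{i=1}^{n}(R_{n,i}-\rho_n)$ followed by Borel--Cantelli. If anything, your justification of summability is slightly more careful than the paper's (which appeals to $n\rho_n^2\to\infty$, a condition that alone would not suffice), since you bound $\rho_n^2\geq c>0$ for large $n$ using $\rho_n\to\rho>0$, giving genuinely geometric tail decay, and you also make the final intersection over $\varepsilon_k\downarrow 0$ explicit.
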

as $n\to\infty$.
\begin{proof}
    Pick any $\varepsilon>0$.
    By Hoeffding's inequality with $R_i\in[0,1]$,
    \begin{align*}
        \mathbb{P}\left(\left|\frac{N}{n \rho_n}-1\right|>\varepsilon\right)
        &=\mathbb{P}\left(\left|N-n \rho_n\right|>\varepsilon n \rho_n\right)
        =\mathbb{P}\left(\left|\sum_{i=1}^n R_i-n \rho_n\right|>\varepsilon n \rho_n\right)\\
        &\leq 2 \exp \left(-\frac{2 (\varepsilon n \rho_n)^2}{n}\right)=2 \exp \left(-2\varepsilon^2 n \rho_n^2\right).
    \end{align*}
    $\rho_n^2 n\to\infty$ implies $\sum_{n=1}^{\infty}\mathbb{P}\left(\left|\frac{N}{n \rho_n}-1\right|>\varepsilon\right)$ is bounded.
    From the Borel-Cantelli lemma, we can conclude.
    \end{proof}

\begin{lemma}
    \label{lem:as_rep}
    Assume that \cref{asm:moment,asm:linear_propensity} hold. Then, for large enough $n$,
    $$
        \Lambda_{n}=L_n,\quad X_{n,i}=T_{n,i}-\mathbb{E}[T_{n,i}|\bR_n]\quad\text{a.s.},
    $$
    and
    $$              
        \tilde{\Lambda}_{n}=\tilde{L}_{n},\quad\tilde{X}_{n,i}=\tilde{T}_{n,i}-\mathbb{E}[\tilde{T}_{n,i}|\bR_{n}]\quad\text{a.s.}
    $$
\end{lemma}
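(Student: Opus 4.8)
The plan is to substitute the linear-propensity identities of \cref{asm:linear_propensity} directly into the closed-form definitions of $\Lambda_n$ and $\tilde{\Lambda}_n$, after which the Gram-matrix factors cancel and the claimed representations drop out. The sampled (tilde) case is the cleaner of the two, because $\tilde{\Lambda}_n$ is built from the conditional moment $\mathbb{E}[\tilde{T}_{n,i}\mid\bR_n]$ and so matches the conditioning in \cref{asm:linear_propensity} exactly. The population (non-tilde) case requires one extra maneuver, the law of iterated expectations, since $\Lambda_n$ is defined through \emph{unconditional} moments whereas the assumption is a conditional statement; this is where the only genuine bookkeeping lies.

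For the sampled case I would first invoke \cref{asm:linear_propensity} to write $\mathbb{E}[\tilde{T}_{n,i}\mid\bR_n]=\tilde{L}_n\tilde{Z}_{n,i}$ a.s.\ for large $n$. Substituting this into the numerator of $\tilde{\Lambda}_n$ and factoring out the common, $i$-independent matrix $\tilde{L}_n$ gives
\begin{equation*}
    \sum_{i=1}^n R_{n,i}\,\mathbb{E}[\tilde{T}_{n,i}\mid\bR_n]\,\tilde{Z}_{n,i}' = \tilde{L}_n\sum_{i=1}^n R_{n,i}\,\tilde{Z}_{n,i}\tilde{Z}_{n,i}'.
\end{equation*}
Since $\sum_{i=1}^n R_{n,i}\tilde{Z}_{n,i}\tilde{Z}_{n,i}'$ is a.s.\ invertible for large $n$ by \cref{asm:moment}, right-multiplying by its inverse yields $\tilde{\Lambda}_n=\tilde{L}_n$ a.s., and hence $\tilde{X}_{n,i}=\tilde{T}_{n,i}-\tilde{\Lambda}_n\tilde{Z}_{n,i}=\tilde{T}_{n,i}-\mathbb{E}[\tilde{T}_{n,i}\mid\bR_n]$ a.s.

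For the population case the key observation is that $Z_{n,i}$ is $\sigma(\bR_n)$-measurable: being a pre-treatment covariate, it is a function of $\bR_n$ and the fixed population quantities only and carries none of the assignment randomness $\bD_n^*$. I would therefore apply the law of iterated expectations, pass the $\sigma(\bR_n)$-measurable factor $Z_{n,i}'$ through the inner conditional expectation, and use that $L_n$ is a deterministic matrix:
\begin{equation*}
    \mathbb{E}[T_{n,i}Z_{n,i}'] = \mathbb{E}\!\left[\mathbb{E}[T_{n,i}\mid\bR_n]\,Z_{n,i}'\right] = \mathbb{E}[L_n Z_{n,i}Z_{n,i}'] = L_n\,\mathbb{E}[Z_{n,i}Z_{n,i}'].
\end{equation*}
Summing over $i$ and right-multiplying by $(\sum_{i=1}^n\mathbb{E}[Z_{n,i}Z_{n,i}'])^{-1}$, invertible by \cref{asm:moment}, gives $\Lambda_n=L_n$, whence $X_{n,i}=T_{n,i}-L_nZ_{n,i}=T_{n,i}-\mathbb{E}[T_{n,i}\mid\bR_n]$ a.s.

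The arithmetic is essentially mechanical once the identities are substituted, so the only real care is in justifying the two factorizations: the distinction that $\tilde{L}_n$ is random but $\sigma(\bR_n)$-measurable (so it commutes through the conditional sampled moments as a common factor) while $L_n$ is genuinely deterministic (so it passes out of the unconditional expectation), together with the $\sigma(\bR_n)$-measurability of $Z_{n,i}$ that licenses the iterated-expectation step. The main obstacle, then, is conceptual rather than computational: making these measurability distinctions explicit. The invertibility of both Gram matrices is supplied directly by \cref{asm:moment} and holds only for large enough $n$, which is precisely why the conclusion is stated for large enough $n$.
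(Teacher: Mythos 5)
Your proof is correct and follows essentially the same route as the paper's: substitute the linear-propensity identities from \cref{asm:linear_propensity} into the definitions of $\Lambda_n$ and $\tilde{\Lambda}_n$, factor out $L_n$ (respectively $\tilde{L}_n$), cancel the Gram matrices using the invertibility guaranteed by \cref{asm:moment}, and read off the residual representations. The only difference is cosmetic: you make explicit the $\sigma(\bR_n)$-measurability of $Z_{n,i}$ and the deterministic/random distinction between $L_n$ and $\tilde{L}_n$ that license the iterated-expectation and factorization steps, which the paper leaves implicit.
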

\begin{proof}
    Observe that $\Lambda_{n}=L_{n}$ a.s. for large enough $n$ as
    \begin{align*}
        \Lambda_{n}&=\sum_{i=1}^{n}\mathbb{E}[\mathbb{E}[T_{n,i}|\bR_n]Z_{n,i}']\left(\sum_{i=1}^{n}\mathbb{E}[Z_{n,i}Z_{n,i}']\right)^{-1}\\
        &=L_{n}\sum_{i=1}^{n}\mathbb{E}[Z_{n,i}Z_{n,i}']\left(\sum_{i=1}^{n}\mathbb{E}[Z_{n,i}Z_{n,i}']\right)^{-1}\\
        &=L_{n},
    \end{align*}
    where $\Lambda_{n}$ is well-defined by \cref{asm:moment} and the second equality holds
    by \cref{asm:linear_propensity}. Similarly, $\tilde{\Lambda}_{n}=\tilde{L}_{n}$ a.s. for large enough $n$ as
    \begin{align*}
        \tilde{\Lambda}_{n}&=\sum_{i=1}^{n}R_{n,i}\mathbb{E}[\tilde{T}_{n,i}|\bR_{n}]\tilde{Z}_{n,i}'\left(\sum_{i=1}^{n}R_{n,i}\tilde{Z}_{n,i}\tilde{Z}_{n,i}'\right)^{-1}\\
        &=\tilde{L}_{n}\sum_{i=1}^{n}R_{n,i}\tilde{Z}_{n,i}\tilde{Z}_{n,i}\left(\sum_{i=1}^{n}R_{n,i}\tilde{Z}_{n,i}\tilde{Z}_{n,i}'\right)^{-1}\\
        &=\tilde{L}_{n},
    \end{align*}
    where $\tilde{\Lambda}_{n}$ is well-defined by \cref{asm:moment} and the second equality holds by \cref{asm:linear_propensity}.
    
    Since we define $X_{n,i}=T_{n,i}-\Lambda_n Z_{n,i}$ and $\tilde{X}_{n,i}=\tilde{T}_{n,i}-\tilde{\Lambda}_{n}\tilde{Z}_{n,i}$, \cref{asm:linear_propensity} and the above two displayed qualities imply for large enough $n$, $X_{n,i}=T_{n,i}-\mathbb{E}[T_{n,i}|\bR_n]$ a.s. and $\tilde{X}_{n,i}=\tilde{T}_{n,i}-\mathbb{E}[\tilde{T}_{n,i}|\bR_{n}]$ a.s.
\end{proof}

\begin{lemma}
    \label{lem:Lambda}
    Suppose that $\tilde{T}_{n,i}=T_{n,i}$ and $\tilde{Z}_{n,i}=Z_{n,i}$ for all $i\in\mathcal{N}_{n}$ and $n\in\mathbb{N}$. Under \cref{asm:moment,asm:linear_propensity}, (i) $\tilde{\Lambda}_{n}=\Lambda_{n}$ a.s. and (ii) $\tilde{X}_{n,i}=X_{n,i}$ a.s.
\end{lemma}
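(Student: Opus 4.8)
The plan is to avoid comparing the defining formulas for $\Lambda_n$ and $\tilde\Lambda_n$ head-on, since one is an unconditional population projection coefficient while the other is a conditional-on-$\bR_n$, $R_{n,i}$-weighted sample projection coefficient; a term-by-term match of those two expressions is hopeless. Instead I would route everything through \cref{lem:as_rep}, which already collapses each projection coefficient onto its linear-propensity representation and supplies clean almost-sure residual representations for $X_{n,i}$ and $\tilde X_{n,i}$.

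First I would prove (ii). By \cref{lem:as_rep}, for large enough $n$ we have the almost-sure representations $X_{n,i}=T_{n,i}-\mathbb{E}[T_{n,i}\mid\bR_n]$ and $\tilde X_{n,i}=\tilde T_{n,i}-\mathbb{E}[\tilde T_{n,i}\mid\bR_n]$. The hypothesis $\tilde T_{n,i}=T_{n,i}$ immediately gives $\mathbb{E}[\tilde T_{n,i}\mid\bR_n]=\mathbb{E}[T_{n,i}\mid\bR_n]$ a.s., so subtracting the two representations yields $\tilde X_{n,i}=X_{n,i}$ a.s. for every $i\in\mathcal{N}_n$, which is exactly (ii).

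For (i), I would then use the definitions $X_{n,i}=T_{n,i}-\Lambda_n Z_{n,i}$ and $\tilde X_{n,i}=\tilde T_{n,i}-\tilde\Lambda_n\tilde Z_{n,i}$. Since $\tilde T_{n,i}=T_{n,i}$ and $\tilde Z_{n,i}=Z_{n,i}$, part (ii) forces $\Lambda_n Z_{n,i}=\tilde\Lambda_n Z_{n,i}$, i.e. $(\Lambda_n-\tilde\Lambda_n)Z_{n,i}=0$ a.s. for every $i$. Multiplying on the right by $Z_{n,i}'$ and summing over $i$ gives $(\Lambda_n-\tilde\Lambda_n)\sum_{i=1}^n Z_{n,i}Z_{n,i}'=0$, and since $\sum_{i=1}^n Z_{n,i}Z_{n,i}'$ is almost surely invertible for large enough $n$ by \cref{asm:moment} (iii-b), I conclude $\Lambda_n=\tilde\Lambda_n$ a.s., which is (i). Equivalently, one could invoke \cref{asm:linear_propensity} to write both $\Lambda_n Z_{n,i}$ and $\tilde\Lambda_n Z_{n,i}$ as the common conditional mean $\mathbb{E}[T_{n,i}\mid\bR_n]$ and argue identically.

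The main obstacle is conceptual rather than computational: recognizing that the two projection coefficients cannot be reconciled from their raw definitions, and that \cref{lem:as_rep} is precisely the tool that bridges the unconditional/population and conditional/sample formulations. Once that reduction is in hand, the only substantive step is the identification argument, which rests on the full-rank condition in \cref{asm:moment} (iii-b) to cancel the shared regressor design matrix. The order (ii) then (i) is convenient because it lets (i) inherit the residual equality directly; deriving (i) first via \cref{asm:linear_propensity} and then reading off (ii) from the definitions would work equally well.
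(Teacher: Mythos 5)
Your proposal is correct and takes essentially the same route as the paper: the paper's entire proof of \cref{lem:Lambda} is the one-line remark that ``the results follow directly from \cref{lem:as_rep}.'' Your write-up simply makes explicit what that one-liner leaves implicit --- part (ii) via the almost-sure residual representations $X_{n,i}=T_{n,i}-\mathbb{E}[T_{n,i}\mid\bR_n]$ and $\tilde{X}_{n,i}=\tilde{T}_{n,i}-\mathbb{E}[\tilde{T}_{n,i}\mid\bR_n]$, and part (i) via the cancellation $(\Lambda_n-\tilde{\Lambda}_n)\sum_{i=1}^n Z_{n,i}Z_{n,i}'=0$ combined with the almost-sure invertibility in \cref{asm:moment} (iii-b) --- which is a faithful and complete fleshing-out of the intended argument.
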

\begin{proof}
    The results follow directly from \cref{lem:as_rep}.
\end{proof}
    
\begin{lemma}
    \label{lem:psi_dependency}
    Assume that \cref{asm:sampling,asm:linear,asm:moment,asm:linear_propensity,asm:local} hold.
    The following sequences of triangular arrays are $\psi$-dependent with $\xi_{n,s}=\mathds{1}\{s\leq 2K\}$:
    \begin{align*}
        X_{n,i}Z_{n,i}^{\prime}, \ 
        X_{n,i}X_{n,i}^{\prime}, \ 
        X_{n,i}Y_{n,i}, \ 
        Z_{n,i}Z_{n,i}^{\prime}, \ 
        Z_{n,i}Y_{n,i}.
    \end{align*}
    The following sequences of triangular arrays are conditionally $\psi$-dependent given $\bR_n$ with $\xi_{n,s}=\mathds{1}\{s\leq 2K\}$:
    \begin{align*}
        R_{n,i}\tilde{X}_{n,i}\tilde{Z}_{n,i}^{\prime}, \
        R_{n,i}\tilde{X}_{n,i}\tilde{T}_{n,i}^{\prime}, \
        R_{n,i}\tilde{X}_{n,i}Y_{n,i}, \ 
        R_{n,i}\tilde{Z}_{n,i}\tilde{Z}_{n,i}^{\prime}, \
        R_{n,i}\tilde{Z}_{n,i}Y_{n,i}.
    \end{align*}
\end{lemma}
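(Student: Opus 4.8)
The plan is to verify the definition of (conditional) $\psi$-dependence directly, exploiting that under \cref{asm:local} every array in the two lists, evaluated at a unit $i$, is a measurable function of only the \emph{local} randomness attached to the $K$-neighborhood $\mathcal{N}_n(i;K)$, together with the nonrandom network $\bA_n$ and, in the conditional case, $\sigma(\bR_n)$-measurable objects. Because the proposed coefficient $\xi_{n,s}=\mathds{1}\{s\le 2K\}$ makes the right-hand side of the covariance bound nontrivial only for $s\le 2K$, it suffices to show two things: (a) the relevant covariances vanish \emph{exactly} once the index sets are more than $2K$ apart, and (b) for $s\le 2K$ the covariances are controlled by a finite $\psi_{a,b}(f,g)$. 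Step (a) is an independence statement and step (b) is a crude uniform bound.

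First I would pin down locality. For the unconditional list, write $X_{n,i}=T_{n,i}-\Lambda_n Z_{n,i}$ with $\Lambda_n$ nonrandom; by \cref{asm:local}, $T_{n,i}=g(i,\bD_n,\bA_n)$ depends on $\bD_n$ only through $\{D_{n,j}\}_{j\in\mathcal{N}_n(i;K)}$, the covariates $Z_{n,i}$ are individual characteristics of $i$ together with local network/propensity statistics of the same neighborhood, and $Y_{n,i}=T_{n,i}'\theta_{n,i}+\nu_{n,i}$ is a function of $T_{n,i}$ by \cref{asm:linear} with $\theta_{n,i},\nu_{n,i}$ nonrandom. Hence each $U_{n,i}$ of the first list is a measurable function of $\{(R_{n,j},D^*_{n,j})\}_{j\in\mathcal{N}_n(i;K)}$ and the fixed network. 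For the conditional list I would invoke \cref{lem:as_rep} to write $\tilde X_{n,i}=\tilde T_{n,i}-\mathbb{E}[\tilde T_{n,i}\mid\bR_n]$ a.s., where the centering term is $\sigma(\bR_n)$-measurable and hence constant once we condition on $\bR_n$. Combined with the locality of $\tilde g$ on the population neighborhood (valid because $\tilde d_n\ge d_n$ a.s.) and the fact that $\tilde Z_{n,i},\tilde\bA_n,R_{n,i}$ are $\sigma(\bR_n)$-measurable, each conditional array is, given $\bR_n$, a measurable function of $\{D^*_{n,j}\}_{j\in\mathcal{N}_n(i;K)}$ alone, times the frozen factor $R_{n,i}\in\{0,1\}$.

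Next I would establish vanishing of covariances beyond $2K$. Take $(A,B)\in\mathcal{P}_n(a,b;s)$ with $s>2K$, so $d_n(i,j)\ge 2K+1$ for every $i\in A$, $j\in B$; the triangle inequality then forces $\mathcal{N}_n(i;K)\cap\mathcal{N}_n(j;K)=\emptyset$, whence $\bigcup_{i\in A}\mathcal{N}_n(i;K)$ and $\bigcup_{j\in B}\mathcal{N}_n(j;K)$ are disjoint. By \cref{asm:sampling} the pairs $(R_{n,k},D^*_{n,k})$ are jointly independent across $k$, so $U_{n,A}$ and $U_{n,B}$ are independent and $\Cov(f(U_{n,A}),g(U_{n,B}))=0$ for all bounded measurable $f,g$; the conditional version is identical after conditioning on $\bR_n$, since the $D^*_{n,k}$ remain independent (and independent of $\bR_n$) across the two disjoint neighborhoods. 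For $s\le 2K$, \cref{asm:moment} gives a.s. uniform boundedness of every array, so $|\Cov(f(U_{n,A}),g(U_{n,B}))|\le 2\|f\|_\infty\|g\|_\infty=:\psi_{a,b}(f,g)$ (respectively the conditional covariance, a.s.), a choice compatible with the first bullet of \cref{cond:clt} since $2\|f\|_\infty\|g\|_\infty\le 2ab(\|f\|_\infty+\Lip(f))(\|g\|_\infty+\Lip(g))$. Combining the two ranges yields $|\Cov(\cdot)|\le\psi_{a,b}(f,g)\,\xi_{n,s}$ with $\xi_{n,s}=\mathds{1}\{s\le 2K\}$, which is the claim.

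The main obstacle is the bookkeeping in the conditional case: one must check that conditioning on $\bR_n$ freezes $\tilde\bA_n$, $\tilde Z_{n,i}$, $R_{n,i}$, and the centering $\mathbb{E}[\tilde T_{n,i}\mid\bR_n]$ into constants while leaving the latent $D^*$'s conditionally independent, and that it is the locality of $\tilde g$ on the \emph{population} neighborhood $\mathcal{N}_n(i;K)$ (not the sampled one) that guarantees disjoint dependence sets. Verifying that $\tilde T_{n,i}$ genuinely localizes on $\mathcal{N}_n(i;K)$ — which rests on $\tilde d_n\ge d_n$ a.s. together with \cref{asm:local} — is the only place requiring real care; everything else reduces to independence plus uniform boundedness.
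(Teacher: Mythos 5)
Your proof is correct and follows essentially the same route as the paper's: locality of every array via \cref{asm:local} (with \cref{lem:as_rep} supplying the a.s.\ residualized representation of $X_{n,i}$ and $\tilde{X}_{n,i}$), exact independence of $U_{n,A}$ and $U_{n,B}$ once $d_n(A,B)>2K$ under the independent sampling and assignment of \cref{asm:sampling}, and the crude bound $\psi_{a,b}(f,g)=2\|f\|_\infty\|g\|_\infty$ from the uniform boundedness in \cref{asm:moment}. The paper's proof is simply a terser statement of this argument, so no further changes are needed.
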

\begin{proof}
    By \cref{asm:local}, we can set $\xi_{n,s}=\mathds{1}\{s\leq 2K\}$ for $s\geq 1$ since if $d_{n}(A,B)>2K$, $f(U_{n,A})\indep g(U_{n,B})$ for any $f\in\mathcal{L}_{v,a}$ and $g\in\mathcal{L}_{v,b}$ as long as $U_{n,i}$ are based on $\tilde{T}_{n,i},T_{n,i},\tilde{Z}_{n,i},Z_{n,i},\tilde{Y}_{n,i},Y_{n,i}$. 
    For large enough $n$, \cref{lem:as_rep} implies $X_{n,i}=T_{n,i}-\mathbb{E}[T_{n,i}|\bR_n]$ and $\tilde{X}_{n,i}=\tilde{T}_{n,i}-\mathbb{E}[\tilde{T}_{n,i}|\bR_{n}]$ almost surely.
    Thus, for large enough $n$, $X_{n,i}$ and  $\tilde{X}_{n,i}$ also have the local dependence with $2K$.
    By \cref{asm:moment}, each element is uniformly bounded. Thus, we can set $\psi_{a,b}(f,g)=2\|f\|_{\infty}\|g\|_{\infty}$ for any  $f\in\mathcal{L}_{v,a}$ and $g\in\mathcal{L}_{v,b}$. This completes the proof.
\end{proof}

\begin{lemma}
    \label{lem:eps_bound}
    Under \cref{asm:moment},
    $$\max_{i}|\tilde{
        \varepsilon
        }_{n,i}|<\infty\text{ a.s. }\qquad\text{and}\qquad\max_{i}|
        \varepsilon_{n,i}|<\infty\text{ a.s. }$$
\end{lemma}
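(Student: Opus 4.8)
The plan is to write each residual as the outcome minus a linear combination of the (residualized) regressors, bound every piece using the uniform-boundedness parts of \cref{asm:moment}, and then observe that a maximum over the finite index set $\mathcal{N}_n$ of uniformly bounded quantities is finite.

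First I would collect the pointwise bounds. Since $Y_{n,i}=Y_{n,i}^{*}(T_{n,i})$ with $T_{n,i}$ lying in the domain of the potential outcome, \cref{asm:moment}(i) gives $|Y_{n,i}|\le\bar Y$ a.s. The implication (a) of \cref{asm:moment} gives $\|X_{n,i}\|,\|\tilde X_{n,i}\|\le\bar X$, and (iii-a) gives $\|Z_{n,i}\|,\|\tilde Z_{n,i}\|\le\bar Z$, all almost surely and uniformly in $i$ and $n$.

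Second --- and this is the step carrying the real content --- I would argue that the estimand coefficients are finite. The population estimand is the product of the inverse of the $2\times2$ block Gram matrix (blocks $\Omega_n^{XX},\Omega_n^{XZ},\Omega_n^{ZX},\Omega_n^{ZZ}$) with the moment vector $(\Omega_n^{XY\prime},\Omega_n^{ZY\prime})'$. The moment vector is bounded, since $\|\Omega_n^{XY}\|\le\bar X\bar Y$ and $\|\Omega_n^{ZY}\|\le\bar Z\bar Y$ as averages of products of bounded terms, while the Gram matrix is invertible for large enough $n$ by the variation and full-rank parts of \cref{asm:moment} together with implication (b); hence its inverse has finite operator norm and $\|\theta_n^{\causal}\|,\|\gamma_n^{\causal}\|<\infty$. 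The same argument applied conditionally on $\bR_n$ to the sample Gram matrix (blocks $\tilde\Omega_n^{XX},\tilde\Omega_n^{XZ},\tilde\Omega_n^{ZX},\tilde\Omega_n^{ZZ}$), which is a.s. invertible for large enough $n$ (using also $N>0$ a.s. from \cref{lem:N_pos} so that $\tilde\Omega_n$ is well-defined), gives $\|\theta_n^{\causample}\|,\|\gamma_n^{\causample}\|<\infty$ a.s.

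Finally I would combine these via Cauchy--Schwarz and the triangle inequality:
$$|\varepsilon_{n,i}|\le|Y_{n,i}|+\|X_{n,i}\|\,\|\theta_n^{\causal}\|+\|Z_{n,i}\|\,\|\gamma_n^{\causal}\|\le\bar Y+\bar X\,\|\theta_n^{\causal}\|+\bar Z\,\|\gamma_n^{\causal}\|,$$
whose right-hand side is finite and does not depend on $i$, so that $\max_i|\varepsilon_{n,i}|<\infty$ a.s.; the identical chain with tildes and the sample-level coefficients gives $\max_i|\tilde\varepsilon_{n,i}|<\infty$ a.s. The main obstacle is Step 2: \cref{asm:moment} only delivers invertibility of the Gram matrices for large enough $n$ (and, in the sample case, only almost surely and conditional on $\bR_n$), so the statement must be read as holding for large enough $n$, and it is precisely the finiteness of the inverse's operator norm that converts bounded moments into bounded coefficients.
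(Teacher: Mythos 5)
Your proof is correct and follows essentially the same route as the paper's: establish finiteness of $\|\theta_n^{\causample}\|,\|\gamma_n^{\causample}\|$ (and their population counterparts) from the uniform boundedness and invertibility parts of \cref{asm:moment}, then bound each residual via the triangle and Cauchy--Schwarz inequalities. Your closing remark about the qualifier ``for large enough $n$'' is a fair reading of the assumptions that the paper's own one-line proof leaves implicit.
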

\begin{proof}
    Under the uniform boundedness and the invertibility condition (\cref{asm:moment}), $\|\theta_{n}^{\causample}\|<\infty$ a.s. and $\|\gamma_{n}^{\causample}\|<\infty$ a.s.
    Thus, by the Schwarz Inequality,
    \begin{align*}
        |\tilde{\varepsilon}_{n,i}|&\leq\max_{i}|Y_{n,i}|+\max_{i}\|\tilde{X}_{n,i}\|\times\|\theta_{n}^{\causample}\|+\max_{i}\|\tilde{Z}_{n,i}\|\times\|\gamma_{n}^{\causample}\|\\
        &<\infty\quad\text{a.s.}
    \end{align*}
    for all $i$.
    The bound for $|\varepsilon_{n,i}|$ can be derived similarly.
\end{proof}

\begin{lemma}
    \label{lem:LLN}
    Under \cref{asm:linear,asm:sampling,asm:moment,asm:linear_propensity,asm:local,asm:sparsity}, 
    \begin{align*}
        \tilde{Q}_{n}-\tilde{\Omega}_{n}\overset{p^R}{\longrightarrow}0
        \quad\text{and}\quad
        \tilde{Q}_{n}-\tilde{\Omega}_{n}\overset{p}{\to}0.
    \end{align*}
\end{lemma}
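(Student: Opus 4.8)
The plan is to establish the conditional statement $\tilde Q_n - \tilde\Omega_n \overset{p^R}{\longrightarrow} 0$ first by an $L^2$ (conditional variance) argument, and then deduce the unconditional statement by bounded convergence. Write $W_{n,i} = (Y_{n,i}, \tilde X_{n,i}', \tilde Z_{n,i}')'$, so that $\tilde Q_n = N^{-1}\sum_{i=1}^n R_{n,i} W_{n,i} W_{n,i}'$. Since $N$ and the $R_{n,i}$ are $\sigma(\bR_n)$-measurable, $\mathbb E[\tilde Q_n \mid \bR_n] = N^{-1}\sum_{i=1}^n R_{n,i}\mathbb E[W_{n,i}W_{n,i}'\mid\bR_n] = \tilde\Omega_n$, so the centering is exactly the conditional mean. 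It therefore suffices to show that each scalar entry of $\tilde Q_n - \tilde\Omega_n$ has conditional variance tending to $0$ almost surely. Fix an entry, let $V_{n,i}$ be the corresponding scalar component of $R_{n,i} W_{n,i}W_{n,i}'$, so the entry equals $N^{-1}\sum_i (V_{n,i} - \mathbb E[V_{n,i}\mid\bR_n])$ and
\[
\Var\!\left(\tfrac{1}{N}\sum_{i=1}^n V_{n,i}\,\Big|\,\bR_n\right) = \frac{1}{N^2}\sum_{i=1}^n\sum_{j=1}^n \Cov(V_{n,i}, V_{n,j}\mid\bR_n).
\]

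Each $V_{n,i}$ is a bounded function (by \cref{asm:moment}) of quantities that are conditionally $\psi$-dependent given $\bR_n$ with coefficient $\xi_{n,s} = \mathds 1\{s\le 2K\}$: this holds for the building blocks $R_{n,i}\tilde X_{n,i}\tilde Z_{n,i}'$, $R_{n,i}\tilde X_{n,i}\tilde T_{n,i}'$, $R_{n,i}\tilde Z_{n,i}\tilde Z_{n,i}'$, $R_{n,i}\tilde X_{n,i}Y_{n,i}$, and $R_{n,i}\tilde Z_{n,i}Y_{n,i}$ by \cref{lem:psi_dependency}, and the remaining entries ($\tilde X_{n,i}\tilde X_{n,i}'$ and $Y_{n,i}^2$) inherit the same property either through $\tilde X_{n,i} = \tilde T_{n,i} - \tilde\Lambda_n \tilde Z_{n,i}$ together with \cref{lem:linear} and the $\bR_n$-measurability of $\tilde\Lambda_n$, or directly because they are bounded functions supported on the $K$-neighborhood of $i$. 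Consequently $\Cov(V_{n,i}, V_{n,j}\mid\bR_n) = 0$ whenever $d_n(i,j) > 2K$, and uniform boundedness gives a constant $C$ with $|\Cov(V_{n,i}, V_{n,j}\mid\bR_n)| \le C$. Hence
\[
\Var\!\left(\tfrac{1}{N}\sum_{i=1}^n V_{n,i}\,\Big|\,\bR_n\right) \le \frac{C}{N^2}\sum_{i=1}^n |\mathcal N_n(i;2K)| = \frac{Cn}{N^2}\Big(1 + \sum_{s=1}^{2K}\delta_n^\partial(s;1)\Big).
\]
By \cref{asm:sparsity} the bracketed factor is $O(1)$, while \cref{lem:nNconv} gives $N/(n\rho_n)\to 1$ a.s.\ with $\rho_n\to\rho>0$, so $N^2$ is of exact order $n^2$ almost surely; the right-hand side is thus $O(1/n)\to 0$ a.s.

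Applying the conditional Chebyshev inequality entrywise yields $\mathbb P(|\tilde Q_n^{\mathrm{(entry)}} - \tilde\Omega_n^{\mathrm{(entry)}}| > \varepsilon \mid \bR_n) \le \varepsilon^{-2}\Var(\cdot\mid\bR_n) \to 0$ a.s., which combined across the finitely many entries gives $\tilde Q_n - \tilde\Omega_n \overset{p^R}{\longrightarrow} 0$. For the unconditional claim, the conditional probabilities $\mathbb P(\|\tilde Q_n - \tilde\Omega_n\| > \varepsilon \mid \bR_n)$ converge to $0$ a.s.\ and are bounded by $1$, so the bounded convergence theorem gives $\mathbb P(\|\tilde Q_n - \tilde\Omega_n\| > \varepsilon) = \mathbb E[\mathbb P(\|\tilde Q_n - \tilde\Omega_n\| > \varepsilon \mid \bR_n)] \to 0$, i.e.\ $\tilde Q_n - \tilde\Omega_n \overset{p}{\to} 0$. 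The main obstacle is essentially bookkeeping rather than analysis: confirming that every entry of $W_{n,i}W_{n,i}'$ falls under the $\psi$-dependence structure of \cref{lem:psi_dependency} (resolved by \cref{lem:linear}) and handling the random normalization $1/N$ (resolved by \cref{lem:N_pos,lem:nNconv}), after which the variance bound is a direct consequence of boundedness and \cref{asm:sparsity}.
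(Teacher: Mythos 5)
Your proof is correct and follows essentially the same route as the paper's: an entrywise conditional second-moment bound exploiting that covariances vanish beyond graph distance $2K$, uniform boundedness plus \cref{asm:sparsity} to control the $O(n)$ surviving covariance terms, conditional Chebyshev/Markov for the $p^R$ claim, and dominated (bounded) convergence for the unconditional claim. The only cosmetic differences are that you keep the random normalization $1/N$ and invoke \cref{lem:nNconv} at the end (the paper factors out $n\rho_n/N$ first and works with $1/(n\rho_n)$), and you route the zero-covariance step through \cref{lem:psi_dependency,lem:linear} where the paper argues the conditional independence directly.
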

\begin{proof}
    Let $W_{n,i}\equiv (Y_{n,i},\tilde{X}_{n,i},\tilde{Z}_{n,i})'$. Then,
    \begin{align*}
        \tilde{Q}_{n}-\tilde{\Omega}_{n}&=\frac{1}{N}\sum_{i=1}^{n}R_{n,i}(W_{n,i}W_{n,i}'-\mathbb{E}[W_{n,i}W_{n,i}'|\bR_{n}])\\
        &=\frac{n\rho_{n}}{N}\times \frac{1}{n\rho_{n}}\sum_{i=1}^{n}R_{n,i}\left(W_{n,i}W_{n,i}'-\mathbb{E}[W_{n,i}W_{n,i}'|\bR_{n}]\right).
    \end{align*}
    Since $(n\rho_{n})/N\overset{a.s.}{\longrightarrow}1$ (\cref{lem:nNconv}) implies $(n\rho_{n})/N\overset{p^R}{\longrightarrow}1$, it suffices to show that
    \begin{align*}
         \frac{1}{n\rho_{n}}\sum_{i=1}^{n}R_{n,i}\left(W_{n,i}W_{n,i}'-\mathbb{E}[W_{n,i}W_{n,i}'|\bR_{n}]\right)\overset{p^R}{\longrightarrow}0.
    \end{align*}
    We will show it by verifying
    \begin{align*}
        \mathbb{E}\left[\left(\frac{1}{n\rho_{n}}\sum_{i=1}^{n}R_{n,i}\left(W_{n,i,(k)}W_{n,i,(\ell)}-\mathbb{E}[W_{n,i,(k)}W_{n,i,(\ell)}\mid\bR_{n}]\right)\right)^{2}\mid\bR_{n}\right]\overset{a.s.}{\longrightarrow}0
    \end{align*}
    for all $k,\ell=1,\dots,d_{\tilde{T}}$.
    Observe that 
    \begin{align}
        &\mathbb{E}\left[\left(\frac{1}{n\rho_{n}}\sum_{i=1}^{n}R_{n,i}\left(W_{n,i,(k)}W_{n,i,(\ell)}-\mathbb{E}[W_{n,i,(k)}W_{n,i,(\ell)}\mid\bR_{n}]\right)\right)^{2}\mid\bR_{n}\right]\nonumber\\
        =& \frac{1}{n^{2}\rho_{n}^{2}}\sum_{i=1}^{n}R_{n,i}\mathbb{E}\left[(W_{n,i,(k)}W_{n,i,(\ell)}-\mathbb{E}[W_{n,i,(k)}W_{n,i,(\ell)}|\bR_{n}])^{2}\mid\bR_{n}\right]\label{eq:WW_var}\\
        &+\frac{1}{n^{2}\rho_{n}^{2}}\sum_{i\neq j}R_{n,i}R_{n,j}\mathbb{E}[(W_{n,i,(k)}W_{n,i,(\ell)}-\mathbb{E}[W_{n,i,(k)}W_{n,i,(\ell)}|\bR_{n}])\nonumber\\
        &\hspace{80pt}
        \times(W_{n,j,(k)}W_{n,j,(\ell)}-\mathbb{E}[W_{n,j,(k)}W_{n,j,(\ell)}|\bR_{n}])\mid\bR_{n}]\label{eq:WW_cov}
    \end{align}
    
    For \cref{eq:WW_var}, since there is some absolute constant $C$ such that $|W_{n,j,(k)}W_{n,j,(\ell)}|<C$ by \cref{asm:moment},
    \begin{align*}
        \cref{eq:WW_var}
        &\leq \frac{1}{n^{2}\rho_{n}^{2}}\sum_{i=1}^n(2C)^2
        = 4C^{2}\times \frac{1}{n\rho_{n}^2}\to0
    \end{align*} 
    where the inequality and the convergence do not depend on $\bR_n$.
    
    For $\cref{eq:WW_cov}$, note that if $d_{n}(i,j)>2K$, then
    \begin{align*}
        \mathbb{E}[(W_{n,i,(k)}W_{n,i,(\ell)}-\mathbb{E}[W_{n,i,(k)}W_{n,i,(\ell)}|\bR_{n}])
        (W_{n,j,(k)}W_{n,j,(\ell)}-\mathbb{E}[W_{n,j,(k)}W_{n,j,(\ell)}|\bR_{n}])\mid\bR_{n}]=0
    \end{align*}
    as $R_{n,i}$ is i.i.d and $(T_{n,i},\tilde{T}_{n,i})\indep (T_{n,j},\tilde{T}_{n,j})$ with no overlap in $\bD_{n}$ and $\bR_{n}$. Thus,
    \begin{align*}
        \cref{eq:WW_cov} &= \frac{1}{n^{2}\rho^{2}_{n}}\sum_{i=1}^{n}\sum_{j\in\mathcal{N}(i,2K)\setminus\{i\}}R_{n,i}R_{n,j}\mathbb{E}[(W_{n,i,(k)}W_{n,i,(\ell)}-\mathbb{E}[W_{n,i,(k)}W_{n,i,(\ell)}|\bR_{n}])\nonumber\\
        &\hspace{120pt}
        \times(W_{n,j,(k)}W_{n,j,(\ell)}-\mathbb{E}[W_{n,j,(k)}W_{n,j,(\ell)}|\bR_{n}])\mid\bR_{n}]\\
        &\leq 4C^{2}\times\frac{1}{n\rho_{n}^2}\sum_{1\leq s\leq 2K}\delta_{n}^{\partial}(s;1)\to0,
    \end{align*}
    where the last line holds by \cref{asm:sparsity}, and the inequality and the convergence do not depend on $\bR_n$.
    
    Thus, by Markov's inequality for $\left(\frac{1}{n\rho_{n}}\sum_{i=1}^{n}R_{n,i}\left(W_{n,i,(k)}W_{n,i,(\ell)}-\mathbb{E}[W_{n,i,(k)}W_{n,i,(\ell)}\mid\bR_{n}]\right)\right)^{2}$, 
    $$
    \frac{1}{n\rho_{n}}\sum_{i=1}^{n}R_{n,i}\left(W_{n,i,(k)}W_{n,i,(\ell)}-\mathbb{E}[W_{n,i,(k)}W_{n,i,(\ell)}\mid\bR_{n}]\right)\overset{p^R}{\longrightarrow}0,
    $$
    and
    \begin{align*}
        \tilde{Q}_{n}-\tilde{\Omega}_{n}\overset{p^R}{\longrightarrow} 0.
    \end{align*}

    Unconditional consistency can be shown easily from this result.
    Since a conditional probability is bounded, the dominated convergence theorem and the law of iterated expectations imply $\tilde{Q}_{n}-\tilde{\Omega}_{n}\overset{p}{\to} 0.$
\end{proof}

\begin{lemma}
    \label{lem:LLN_W}
    Let $W_{n,i}$ be a scalar random variable satisfying $|W_{n,i}|\leq\bar{W}<\infty$ a.s.
    We allow $W_{n,i}$ to depend on $\bR_n$ and $\bD_n$, but assume that $W_{n,i}\indep R_{n,i}$ and $W_{n,i}\indep W_{n,j}$ if $d_n(i,j)>2K$. Then, under \cref{asm:sampling,asm:sparsity},
    \begin{align*}
        \frac{1}{N}\sum_{i=1}^nR_{n,i}\mathbb{E}[W_{n,i}|\bR_{n}]-\frac{1}{n}\sum_{i=1}^n\mathbb{E}[W_{n,i}]\overset{p}{\to} 0.
    \end{align*}
\end{lemma}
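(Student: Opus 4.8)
The plan is to replace the random normalization $1/N$ by the deterministic $1/(n\rho_n)$ using \cref{lem:nNconv}, and then establish $L^2$ convergence of the resulting centered average. Write $V_{n,i}=R_{n,i}\mathbb{E}[W_{n,i}\mid\bR_n]$, $c_n=(n\rho_n)^{-1}\sum_{i=1}^n V_{n,i}$, and $b_n=n^{-1}\sum_{i=1}^n\mathbb{E}[W_{n,i}]$. Since $\frac{1}{N}\sum_i V_{n,i}=\frac{n\rho_n}{N}c_n$ and $|b_n|\le\bar W$, the decomposition $\frac{1}{N}\sum_i V_{n,i}-b_n=\frac{n\rho_n}{N}(c_n-b_n)+\big(\frac{n\rho_n}{N}-1\big)b_n$, together with $\frac{n\rho_n}{N}\overset{a.s.}{\to}1$ from \cref{lem:nNconv}, reduces the whole claim to showing $c_n-b_n\overset{p}{\to}0$.

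First I would verify that $c_n$ is unbiased for $b_n$. Because $R_{n,i}$ is $\sigma(\bR_n)$-measurable, the law of iterated expectations gives $\mathbb{E}[V_{n,i}]=\mathbb{E}[R_{n,i}W_{n,i}]$, and the hypothesis $W_{n,i}\indep R_{n,i}$ yields $\mathbb{E}[R_{n,i}W_{n,i}]=\rho_n\mathbb{E}[W_{n,i}]$; summing gives $\mathbb{E}[c_n]=b_n$. Hence it suffices to show $\Var(c_n)\to0$ and invoke Chebyshev's inequality. Note that $n\rho_n\to\infty$ by \cref{asm:sampling}, since $\rho_n\to\rho\in(0,1]$, so all the $O(1/(n\rho_n))$ bounds below vanish.

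For the variance I would expand $\Var(c_n)=(n\rho_n)^{-2}\sum_{i,j}\Cov(V_{n,i},V_{n,j})$. Using $V_{n,i}^2=R_{n,i}(\mathbb{E}[W_{n,i}\mid\bR_n])^2$ and Jensen's inequality, $\mathbb{E}[V_{n,i}^2]\le\bar W^2\rho_n$, so every variance and, by Cauchy--Schwarz, every covariance is at most $\bar W^2\rho_n$. The diagonal then contributes at most $n\bar W^2\rho_n/(n\rho_n)^2=\bar W^2/(n\rho_n)\to0$. For the off-diagonal terms I would split on distance: the terms with $d_n(i,j)>2K$ vanish (the crux, addressed next), while the terms with $1\le d_n(i,j)\le 2K$ number $\sum_i|\mathcal N_n(i;2K)\setminus\{i\}|=n\sum_{1\le s\le 2K}\delta_n^\partial(s;1)$, so by \cref{asm:sparsity} their contribution is at most $\bar W^2\sum_{1\le s\le 2K}\delta_n^\partial(s;1)/(n\rho_n)=O\big(1/(n\rho_n)\big)\to0$. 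Thus $\Var(c_n)\to0$.

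The main obstacle is establishing $\Cov(V_{n,i},V_{n,j})=0$ for $d_n(i,j)>2K$. This does \emph{not} follow from the marginal independence $W_{n,i}\indep W_{n,j}$ alone, because $V_{n,i}$ couples $R_{n,i}$ with the $\bR_n$-measurable factor $\mathbb{E}[W_{n,i}\mid\bR_n]$; what is really needed is the local structure underlying that independence, namely that $W_{n,i}$ is a function of the i.i.d. draws $\{(R_{n,k},D^*_{n,k}):k\in\mathcal N_n(i;K)\}$. Granting this, $\mathbb{E}[W_{n,i}\mid\bR_n]$ depends on $\bR_n$ only through $\{R_{n,k}:k\in\mathcal N_n(i;K)\}$, so $V_{n,i}$ is measurable with respect to those coordinates; when $d_n(i,j)>2K$ the index sets $\mathcal N_n(i;K)$ and $\mathcal N_n(j;K)$ are disjoint, and independence of the i.i.d. draws gives $V_{n,i}\indep V_{n,j}$, hence zero covariance. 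Combining $\Var(c_n)\to0$ (so $c_n-b_n\overset{p}{\to}0$) with \cref{lem:nNconv} and the boundedness of $b_n$ in the displayed decomposition then yields the claim.
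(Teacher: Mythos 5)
Your proposal is correct and takes essentially the same route as the paper's proof: replace the random normalization $1/N$ by $1/(n\rho_n)$ via \cref{lem:nNconv}, then show the second moment of the centered average vanishes by splitting into diagonal terms, nearby pairs (whose count is controlled by \cref{asm:sparsity}, giving an $O(1/(n\rho_n))$ contribution), and far pairs (zero covariance). Your caveat about the far pairs is well taken and, if anything, more careful than the paper's own treatment: the paper justifies that step by invoking $W_{n,i}\indep R_{n,j}$ (a condition not among the lemma's stated hypotheses) and still implicitly relies on exactly the local structure you identify, namely that $W_{n,i}$ is a function of the draws $\{(R_{n,k},D^*_{n,k}):k\in\mathcal{N}_n(i;K)\}$, so that $R_{n,i}\mathbb{E}[W_{n,i}\mid\bR_n]$ and $R_{n,j}\mathbb{E}[W_{n,j}\mid\bR_n]$ are measurable with respect to disjoint blocks of i.i.d.\ coordinates.
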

\begin{proof}
    By \cref{lem:nNconv}, 
    \begin{align*}
        \frac{1}{N}\sum_{i=1}^nR_{n,i}\mathbb{E}[W_{n,i}|\bR_{n}]=\frac{1}{n}\sum_{i=1}^n\frac{R_{n,i}}{\rho_n}\mathbb{E}[W_{n,i}|\bR_{n}]+o_p(1).
    \end{align*}
    Thus, it suffices to show that
    \begin{align}
       \mathbb{E}\left[\left(\frac{1}{n}\sum_{i=1}^n\frac{R_{n,i}}{\rho_n}\mathbb{E}[W_{n,i}|\bR_{n}]-\frac{1}{n}\sum_{i=1}^n\mathbb{E}[W_{n,i}]\right)^{2}\right]\to0
       \label{eq:consistency_causal}
    \end{align}
    The left-hand side of \cref{eq:consistency_causal} is given by
    \begin{align}
        &\frac{1}{n^{2}}\sum_{i=1}^n\mathbb{E}\left[\left(\frac{R_{n,i}}{\rho_n}\mathbb{E}[W_{i,n}|\bR_{n}]-\mathbb{E}[W_{n,i}]\right)^{2}\right]\label{eq:cc_var}\\
        &+\frac{1}{n^{2}}\sum_{i\neq j}\mathbb{E}\left[\left(\frac{R_{n,i}}{\rho_n}\mathbb{E}[W_{n,i}|\bR_{n }]-\mathbb{E}[W_{n,i}]\right)\left(\frac{R_{n,j}}{\rho_n}\mathbb{E}[W_{n,j}|\bR_{n}]-\mathbb{E}[W_{n,j}]\right)\right]\label{eq:cc_cov}
    \end{align}
    For \cref{eq:cc_var}, we have
        \begin{align*}
            \cref{eq:cc_var}&\leq \frac{2}{n^{2}}\sum_{i=1}^n\mathbb{E}\left[\left(\frac{R_{n,i}}{\rho_n}\right)^{2}(\mathbb{E}[W_{i,n}|\bR_{n}])^{2}+(\mathbb{E}[W_{n,i}])^{2}\right]\\
            &\leq \frac{2\bar{W}^2}{n}\left[\mathbb{E}\left[\left(\frac{R_{n,i}}{\rho_n}\right)^{2}\right]+1\right],
        \end{align*}
        where the first inequality holds from the inequality $(a-b)^{2}\leq 2(a^{2}+b^{2})$ for any $a,b\in\mathbb{R}$ and the second inequality holds by the uniform boundedness. Note that
        \begin{align*}
            \frac{1}{n}\mathbb{E}\left[\left(\frac{R_{n,i}}{\rho_n}\right)^{2}\right]&=\frac{1}{n\rho_n}=o(1).
        \end{align*}
    
    For \cref{eq:cc_cov},
    \begin{align*}
        \cref{eq:cc_cov}
        &=\frac{1}{n^{2}}\sum_{i=1}^n\sum_{j\in\mathcal{N}_{n}(i,2K)\setminus\{i\}}\mathbb{E}\left[\left(\frac{R_{n,i}}{\rho_n}\mathbb{E}[W_{n,i}|\bR_{n}]-\mathbb{E}[W_{n,i}]\right)\left(\frac{R_{n,j}}{\rho_n}\mathbb{E}[W_{n,j}|\bR_{n}]-\mathbb{E}[W_{n,j}]\right)\right]\\
        &\leq\frac{\bar{W}^2}{n^{2}}\sum_{i=1}^{n}\sum_{j\in\mathcal{N}_{n}(i,2K)\setminus\{i\}}\mathbb{E}\left[\left|\frac{R_{n,i}}{\rho_n}-1\right|\cdot\left|\frac{R_{n,j}}{\rho_n}-1\right|\right]\\
        &\leq\frac{\bar{W}^2}{n^{2}}\sum_{i=1}^{n}\sum_{j\in\mathcal{N}_{n}(i,2K)\setminus\{i\}}\mathbb{E}\left[\left(\frac{R_{n,i}}{\rho_n}-1\right)^2\right]\\
        &=\left(\frac{1}{\rho_n}-1\right)\frac{\bar{W}^2}{n^{2}}\sum_{i=1}^{n}\sum_{j\in\mathcal{N}_{n}(i,2K)\setminus\{i\}}1=O\left(\frac{1}{n\rho_n}\right)\sum_{1\leq s\leq2K}\delta_n^{\partial}(s;1)=o(1),
    \end{align*}
    where the first equality holds by $W_{n,i}\indep R_{n,j}$, $W_{n,i}\indep W_{n,j}$ if $d_{n}(i,j)>2K$, and \cref{asm:sampling}, the first inequality holds by the uniform boundedness, the next inequality holds by the Cauchy-Schwarz inequality, and the last step follows from \cref{asm:sparsity}.
    
    Combining the arguments for $\cref{eq:cc_var}$ and \cref{eq:cc_cov}, we have shown the convergence \cref{eq:consistency_causal} as $n\to\infty$.
\end{proof}

\begin{lemma}
    \label{lem:LLN_RW}
    Let $W_{n,i}$ be a scalar random variable satisfying $|W_{n,i}|\leq\bar{W}<\infty$ a.s.
    We allow $W_{n,i}$ to depend on $\bR_n$ and $\bD_n$, but assume that $W_{n,i}\indep R_{n,i}$ and $W_{n,i}\indep W_{n,j}$ if $d_n(i,j)>2K$. Then, under \cref{asm:sampling,asm:sparsity},
    \begin{align*}
        \frac{1}{N}\sum_{i=1}^nR_{n,i}\mathbb{E}[R_{n,i}W_{n,i}|\bR_{n}]-\frac{1}{n\rho_n}\sum_{i=1}^n\mathbb{E}[R_{n,i}W_{n,i}]
        &=\frac{1}{N}\sum_{i=1}^nR_{n,i}\mathbb{E}[W_{n,i}|\bR_{n}]-\frac{1}{n}\sum_{i=1}^n\mathbb{E}[W_{n,i}]\\
        &\overset{p}{\to}0.
    \end{align*}
\end{lemma}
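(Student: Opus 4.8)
The plan is to reduce the claim to \cref{lem:LLN_W}, which has already been established under exactly the same hypotheses on $W_{n,i}$. The right-hand side of the asserted equality is precisely the quantity shown to converge to zero in \cref{lem:LLN_W}, so once the equality itself is verified, the convergence $\overset{p}{\to}0$ is immediate. Thus the only substantive work is the algebraic identity relating the two displayed expressions.

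First I would treat the left term. Since each $R_{n,i}$ is $\sigma(\bR_n)$-measurable, it can be pulled out of the conditional expectation, giving $\mathbb{E}[R_{n,i}W_{n,i}\mid\bR_n]=R_{n,i}\mathbb{E}[W_{n,i}\mid\bR_n]$. Because $R_{n,i}\in\{0,1\}$ forces $R_{n,i}^2=R_{n,i}$, the repeated factor collapses, so that
$$
\frac{1}{N}\sum_{i=1}^n R_{n,i}\mathbb{E}[R_{n,i}W_{n,i}\mid\bR_n]
=\frac{1}{N}\sum_{i=1}^n R_{n,i}^2\mathbb{E}[W_{n,i}\mid\bR_n]
=\frac{1}{N}\sum_{i=1}^n R_{n,i}\mathbb{E}[W_{n,i}\mid\bR_n],
$$
which is exactly the first term on the right-hand side of the claimed equality.

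Next I would treat the centering term. The assumed independence $W_{n,i}\indep R_{n,i}$, together with $\mathbb{E}[R_{n,i}]=\rho_n$ from \cref{asm:sampling}, yields $\mathbb{E}[R_{n,i}W_{n,i}]=\mathbb{E}[R_{n,i}]\,\mathbb{E}[W_{n,i}]=\rho_n\,\mathbb{E}[W_{n,i}]$. Dividing by $n\rho_n$ and summing gives
$$
\frac{1}{n\rho_n}\sum_{i=1}^n\mathbb{E}[R_{n,i}W_{n,i}]
=\frac{1}{n\rho_n}\sum_{i=1}^n\rho_n\,\mathbb{E}[W_{n,i}]
=\frac{1}{n}\sum_{i=1}^n\mathbb{E}[W_{n,i}],
$$
which matches the second term on the right-hand side. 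This establishes the stated equality, and \cref{lem:LLN_W} applied to the right-hand side (whose hypotheses on boundedness, independence from $R_{n,i}$, and $2K$-locality are inherited verbatim by $W_{n,i}$) delivers the convergence to zero.

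There is no genuine analytic obstacle here: the result is essentially a bookkeeping identity plus an invocation of the preceding lemma. The only points requiring mild care are ensuring that the independence $W_{n,i}\indep R_{n,i}$ is used for the marginal pair when computing the \emph{unconditional} expectation $\mathbb{E}[R_{n,i}W_{n,i}]$ (and not erroneously extended to the full vector $\bR_n$ or to cross terms), and that the measurability of $R_{n,i}$ with respect to $\sigma(\bR_n)$ justifies pulling it through the conditional expectation. Neither step introduces any difficulty, so the proof is short.
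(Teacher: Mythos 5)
Your proposal is correct and matches the paper's intent: the paper's own proof is a one-line deferral stating the result ``follows by the same logic as'' \cref{lem:LLN_W}, which is precisely your argument made explicit — the identity via $\sigma(\bR_n)$-measurability of $R_{n,i}$, $R_{n,i}^2=R_{n,i}$, and the factorization $\mathbb{E}[R_{n,i}W_{n,i}]=\rho_n\mathbb{E}[W_{n,i}]$ from $W_{n,i}\indep R_{n,i}$, followed by an application of \cref{lem:LLN_W} to the right-hand side. No gap; your version is, if anything, a cleaner write-up of the same reduction.
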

\begin{proof}
    The result follows by the same logic as \cref{lem:LLN_W}.
\end{proof}

\begin{lemma}\label{lem:CLT} 
    Under \cref{asm:sampling,asm:linear,asm:moment,asm:linear_propensity,asm:local,asm:sparsity,asm:dependence},
    \begin{align*}
        \tilde{\Sigma}_{n}^{-1/2}
        \sum_{i=1}^{n}R_{n,i}\tilde{X}_{n,i}\tilde{\varepsilon}_{n,i} \overset{d^R}{\longrightarrow} \mathrm{N}(0,I_{d_{\tilde{T}}}).
    \end{align*}
\end{lemma}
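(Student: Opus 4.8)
The plan is to apply the conditional CLT for $\psi$-dependent arrays (Theorem 3.2 of \citealp{kojevnikov2021limit}, recorded above) through a Cram\'er--Wold reduction, after first recentering the summands. Set $\tilde\Psi_{n,i}=\tilde X_{n,i}\tilde\varepsilon_{n,i}$ and define the centered array
$$U_{n,i}=R_{n,i}\bigl(\tilde\Psi_{n,i}-\mathbb{E}[\tilde\Psi_{n,i}\mid\bR_n]\bigr).$$
The sample moment condition \eqref{eq:moment_causalsample} gives $\sum_{i=1}^nR_{n,i}\mathbb{E}[\tilde\Psi_{n,i}\mid\bR_n]=0$, so that $\sum_{i=1}^nU_{n,i}=\sum_{i=1}^nR_{n,i}\tilde X_{n,i}\tilde\varepsilon_{n,i}$ and $\Var(\sum_iU_{n,i}\mid\bR_n)=\tilde\Sigma_n$, while by construction $\mathbb{E}[U_{n,i}\mid\bR_n]=0$. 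Writing $\tilde\varepsilon_{n,i}$ out, $R_{n,i}\tilde\Psi_{n,i}$ is a $\sigma(\bR_n)$-measurable linear combination of the arrays $R_{n,i}\tilde X_{n,i}Y_{n,i}$, $R_{n,i}\tilde X_{n,i}\tilde T_{n,i}'$, and $R_{n,i}\tilde X_{n,i}\tilde Z_{n,i}'$, each shown to be conditionally $\psi$-dependent with $\xi_{n,s}=\mathds{1}\{s\le 2K\}$ in \cref{lem:psi_dependency}; subtracting the $\sigma(\bR_n)$-measurable conditional mean preserves this, so $U_{n,i}$ is conditionally $\psi$-dependent with the same coefficients. \cref{lem:eps_bound} together with \cref{asm:moment} makes $U_{n,i}$ uniformly bounded a.s., which supplies the moment bullets of \cref{cond:clt} for every $p>4$.

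For the Cram\'er--Wold step I would fix an arbitrary $b\in\mathbb{R}^{d_{\tilde{T}}}$ and aim to show $b'\tilde\Sigma_n^{-1/2}\sum_iU_{n,i}\overset{d^R}{\to}\mathrm{N}(0,\|b\|^2)$, where $\tilde\Sigma_n^{-1/2}$ is well defined since $\tilde\Sigma_n$ is a.s. positive definite for large $n$ under \cref{asm:dependence}. The delicate point is that the normalizer $\tilde\Sigma_n^{-1/2}$ is random (a function of $\bR_n$) and depends on $n$, so one cannot simply feed a fixed direction into the scalar CLT. The device is to absorb the matrix into a \emph{data-dependent unit direction}: put $\lambda_n=\tilde\Sigma_n^{-1/2}b/\|\tilde\Sigma_n^{-1/2}b\|$, which is $\sigma(\bR_n)$-measurable with $\|\lambda_n\|=1$ a.s. A direct computation gives $\lambda_n'\tilde\Sigma_n\lambda_n=\|b\|^2/\|\tilde\Sigma_n^{-1/2}b\|^2$ and hence the identity
$$\frac{\lambda_n'\sum_iU_{n,i}}{\sqrt{\lambda_n'\tilde\Sigma_n\lambda_n}}=\frac{b'\tilde\Sigma_n^{-1/2}\sum_iU_{n,i}}{\|b\|}\quad\text{a.s.},$$
so it suffices to prove the scalar directional CLT along the random direction $\lambda_n$.

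To that end I would apply \cref{lem:linear} with $c_{n,i}=\lambda_n$ (legitimate since $\|\lambda_n\|\le1$) to conclude that the scalar array $V_{n,i}=\lambda_n'U_{n,i}$ is conditionally $\psi$-dependent with coefficients $\xi_{n,s}=\mathds{1}\{s\le 2K\}$ and conditional mean zero, with $\sigma_n^2:=\Var(\sum_iV_{n,i}\mid\bR_n)=\lambda_n'\tilde\Sigma_n\lambda_n$. The crucial verification is the dependence bullet of \cref{cond:clt}: since $\zeta_{n,s}^{1-(2+k)/p}=\mathds{1}\{s\le 2K\}$ and $\zeta_{n,m_n}=0$ for large $n$ (because $m_n\to\infty$ eventually exceeds $2K$), the second part is immediate, and for the first part I would use the uniform lower bound $\sigma_n^2=\lambda_n'\tilde\Sigma_n\lambda_n\ge\lambda_{\min}(\tilde\Sigma_n)$ to dominate
$$\frac{n}{\sigma_n^{2+k}}\sum_{s=0}^{2K}c_n(s,m_n;k)\le n\,\lambda_{\min}(\tilde\Sigma_n)^{-(1+k/2)}\sum_{s=0}^{2K}c_n(s,m_n;k),$$
which tends to $0$ a.s. by \cref{asm:dependence} for $k=1,2$. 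Because this bound is uniform over all unit vectors, it holds for the random realization $\lambda_n$ directionwise. The conditional scalar CLT then yields $\lambda_n'\sum_iU_{n,i}/\sigma_n\overset{d^R}{\to}\mathrm{N}(0,1)$, and through the displayed identity and the conditional Cram\'er--Wold theorem, $\tilde\Sigma_n^{-1/2}\sum_iR_{n,i}\tilde X_{n,i}\tilde\varepsilon_{n,i}\overset{d^R}{\to}\mathrm{N}(0,I_{d_{\tilde{T}}})$.

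The step I expect to be the main obstacle is precisely the treatment of the random, $n$-dependent normalization $\tilde\Sigma_n^{-1/2}$: a naive fixed-direction Cram\'er--Wold argument fails because the induced direction $\lambda_n$ drifts with $n$ and with $\bR_n$, and pre-normalizing to a constant conditional variance would leave the dependence bullet reading $n\sum_{s=0}^{2K}c_n(s,m_n;k)\to0$, which \cref{asm:dependence} does not deliver. Making the directional CLT hold \emph{uniformly over unit directions} — so that it can be evaluated along the random sequence $\lambda_n$ while retaining the growing variance $\sigma_n^2\ge\lambda_{\min}(\tilde\Sigma_n)$ — is what forces the eigenvalue lower bound in \cref{asm:dependence} and the linear-combination stability of \cref{lem:linear}; the remaining bullets of \cref{cond:clt} are routine given uniform boundedness.
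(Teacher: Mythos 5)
Your proposal is correct and follows essentially the same route as the paper's proof: recenter $R_{n,i}\tilde{X}_{n,i}\tilde{\varepsilon}_{n,i}$ using the orthogonality condition \cref{eq:moment_causalsample}, reduce to a scalar statement via Cram\'er--Wold, obtain conditional $\psi$-dependence from \cref{lem:psi_dependency} and \cref{lem:linear}, and verify \cref{cond:clt} through \cref{asm:dependence} (with uniform boundedness supplying the moment bullets). The only difference is one of care rather than substance: the paper phrases the Cram\'er--Wold reduction for fixed unit vectors $a$, whereas you explicitly absorb the random normalizer $\tilde{\Sigma}_{n}^{-1/2}$ into a $\sigma(\bR_n)$-measurable direction $\lambda_n$ and observe that \cref{lem:linear} plus the bound $\sigma_n^2\geq\lambda_{\min}(\tilde{\Sigma}_{n})$ make the scalar CLT hold uniformly over such directions --- a point on which the paper's terser argument implicitly relies.
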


\begin{proof}
    We use the Cramer-Wold device and verify \cref{cond:clt} for any given $a\in\mathbb{R}^{|\mathcal{T}|}$.

    First, we will transform the statistics and verify the zero expectation condition.
    The orthogonality condition for $\theta_{n}^{\causample}$ \cref{eq:moment_causalsample} implies
    \begin{align}
        \sum_{i=1}^n R_{n,i}\mathbb{E}\left[\tilde{X}_{n,i}\tilde{\varepsilon}_{n,i} \mid \bR_n\right] &=0.
    \end{align}
    
    Define $U_{n,i}\equiv R_{n,i}\tilde{X}_{n,i}\tilde{\varepsilon}_{n,i}-\mathbb{E}\left[R_{n,i}\tilde{X}_{n,i}\tilde{\varepsilon}_{n,i}\mid\bR_n\right]$.
    Then, $\tilde{\Sigma}_{n}^{-1/2}\sum_{i=1}^{n}R_{n,i}\tilde{X}_{n,i}\tilde{\varepsilon}_{n,i}
    =\tilde{\Sigma}_{n}^{-1/2}\sum_{i=1}^{n}U_{n,i}$
    and we have $\mathbb{E}[U_{n,i}\mid\bR_n]=0$ for all $i$.
    
    By the Cramer-Wold device, it suffices to show that 
    $$
    \frac{\sum_{i=1}^n a^{\prime}U_{n, i}}{\sqrt{a^{\prime}\tilde{\Sigma}_{n}a}} \overset{d^R}{\longrightarrow} \mathrm{N}(0,1)
    $$
    for any $a\in\mathbb{R}^{d_{\tilde{T}}}$ with $a^{\prime}a=1$.

    By \cref{lem:psi_dependency,lem:linear}, $a^{\prime}U_{n, i}$ is conditionally $\psi$-dependent with $\xi_{n,s}=\mathds{1}\{s\leq 2K\}$ given $\bR_n$.
    The other conditions are assumed in \cref{asm:dependence} or automatically satisfied under the local dependence (\cref{asm:local}).
\end{proof}

\begin{lemma}\label{lem:CLT_causal} 
    Under \cref{asm:exposure,asm:linear,asm:sampling,asm:moment,asm:linear_propensity,asm:local,asm:sparsity,asm:dependence},
    \begin{align*}
        \Sigma_{n}^{-1/2}
        \sum_{i=1}^{n}R_{n,i}X_{n,i}\varepsilon_{n,i} \overset{d}{\to} \mathrm{N}(0,I_{d_{T}}).
    \end{align*}
\end{lemma}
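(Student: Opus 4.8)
The plan is to reproduce the proof of \cref{lem:CLT} with every conditional object replaced by its unconditional analogue: unconditional $\psi$-dependence in place of conditional $\psi$-dependence, and the unconditional central limit theorem of \cite{kojevnikov2021limit} in place of the conditional one. By the Cramer-Wold device it is enough to show, for each fixed $a\in\mathbb{R}^{d_T}$ with $a'a=1$, that $(a'\Sigma_n a)^{-1/2}\sum_{i=1}^n a'R_{n,i}X_{n,i}\varepsilon_{n,i}\overset{d}{\to}\mathrm{N}(0,1)$. First I would set $U_{n,i}=R_{n,i}X_{n,i}\varepsilon_{n,i}-\mathbb{E}[R_{n,i}X_{n,i}\varepsilon_{n,i}]$, which is mean zero with $\Var(\sum_i a'U_{n,i})=a'\Sigma_n a$, and verify \cref{cond:clt} for the scalar array $a'U_{n,i}$.

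For the dependence structure I would invoke the unconditional version of \cref{lem:psi_dependency}: since $R_{n,i}$ and $D^*_{n,i}$ are i.i.d.\ across $i$ by \cref{asm:sampling}, and $T_{n,i},Y_{n,i},Z_{n,i}$ depend on $(\bR_n,\bD^*_n)$ only through coordinates in the $K$-neighborhood of $i$ by \cref{asm:local}, the arrays indexed by $i$ and $j$ are exactly independent whenever $d_n(i,j)>2K$; hence $R_{n,i}X_{n,i}\varepsilon_{n,i}$ is $\psi$-dependent with $\xi_{n,s}=\mathds{1}\{s\le 2K\}$, and \cref{lem:linear} passes this on to $a'U_{n,i}$. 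The uniform bounds in \cref{asm:moment} together with \cref{lem:eps_bound} supply the bounded $p$-th moments. Because $\xi_{n,s}^{1-(2+k)/p}=1$ for $s\le2K$ and $\xi_{n,m_n}=0$ eventually (as $m_n\to\infty$), the decay requirements in \cref{cond:clt} collapse to exactly the unconditional statements posited in \cref{asm:dependence}. Applying the unconditional CLT then gives $(a'\Sigma_n a)^{-1/2}\sum_i a'U_{n,i}\overset{d}{\to}\mathrm{N}(0,1)$.

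The step with no counterpart in \cref{lem:CLT}, and which I expect to be the crux, is recentering. Unlike the conditional summand $R_{n,i}\tilde X_{n,i}\tilde\varepsilon_{n,i}$, the summand $R_{n,i}X_{n,i}\varepsilon_{n,i}$ is \emph{not} conditionally mean-zero given $\bR_n$, since $\varepsilon_{n,i}$ is built from the population coefficients $(\theta_n^{\causal},\gamma_n^{\causal})$. Writing $\mu_n=\sum_i\mathbb{E}[R_{n,i}X_{n,i}\varepsilon_{n,i}]$, I must show $\Sigma_n^{-1/2}\mu_n\to0$ so that the uncentered statistic shares the limit of $\sum_i U_{n,i}$. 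Under \cref{asm:exposure}, \cref{lem:Lambda} gives $\tilde X_{n,i}=X_{n,i}$ and $\tilde Z_{n,i}=Z_{n,i}$, while \cref{lem:as_rep} gives $\mathbb{E}[X_{n,i}\mid\bR_n]=0$ and hence $\mathbb{E}[X_{n,i}Z_{n,i}'\mid\bR_n]=0$; combining these with the linear model of \cref{asm:linear} yields the identity $\sum_i R_{n,i}\mathbb{E}[X_{n,i}\varepsilon_{n,i}\mid\bR_n]=M_n(\theta_n^{\causample}-\theta_n^{\causal})$, where $M_n=\sum_i R_{n,i}\mathbb{E}[X_{n,i}X_{n,i}'\mid\bR_n]=O(n\rho_n)$, so that $\mu_n=\mathbb{E}[M_n(\theta_n^{\causample}-\theta_n^{\causal})]$. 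Since \cref{thm:consistency,thm:causal_consistency} force $\theta_n^{\causample}$ and $\theta_n^{\causal}$ to share a limit under \cref{asm:exposure}, the gap $\theta_n^{\causample}-\theta_n^{\causal}$ vanishes. The delicate part is to establish that this expected product is only $O(1)$, rather than the $O((n\rho_n)^{1/2})$ that a naive Cauchy--Schwarz bound would give: this requires splitting $\mathbb{E}[M_n(\theta_n^{\causample}-\theta_n^{\causal})]$ into the product of the means plus a covariance and bounding each term via the local dependence and the law-of-large-numbers controls of \cref{lem:LLN_W,lem:nNconv}. Given $\Sigma_n\asymp n\rho_n\to\infty$, the bound $\mu_n=O(1)$ yields $\Sigma_n^{-1/2}\mu_n\to0$, which closes the argument.
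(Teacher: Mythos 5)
Your CLT machinery matches the paper's (Cramer--Wold, unconditional $\psi$-dependence with $\xi_{n,s}=\mathds{1}\{s\leq 2K\}$, the unconditional version of the Kojevnikov et al.\ theorem, with \cref{asm:dependence} supplying the decay conditions), and you correctly identify that the only new issue relative to \cref{lem:CLT} is the centering term $\mu_n=\sum_{i=1}^{n}\mathbb{E}[R_{n,i}X_{n,i}\varepsilon_{n,i}]$. But your treatment of that term is a genuine gap. You propose to show only $\Sigma_n^{-1/2}\mu_n\to 0$, via the (correct) identity $\sum_{i}R_{n,i}\mathbb{E}[X_{n,i}\varepsilon_{n,i}\mid\bR_n]=M_n(\theta_n^{\causample}-\theta_n^{\causal})$ and an asserted bound $\mathbb{E}[M_n(\theta_n^{\causample}-\theta_n^{\causal})]=O(1)$ --- and that bound is precisely the step you never prove. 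It also cannot be obtained from the results you cite: \cref{thm:consistency,thm:causal_consistency} give $\theta_n^{\causample}-\theta_n^{\causal}=o_p(1)$ with no rate, and \cref{lem:LLN_W,lem:nNconv} are first-order laws of large numbers. Since $M_n$ is of order $n\rho_n$, making the product $O(1)$ in expectation requires an $O((n\rho_n)^{-1})$ bias bound for $\theta_n^{\causample}$ together with covariance control, i.e., second-order expansions in expectation of the matrix ratio $(\tilde{\Omega}_n^{XX})^{-1}\tilde{\Omega}_n^{XY}$; nothing of this kind exists in the paper, and it would itself need uniform eigenvalue lower bounds that \cref{asm:moment} does not supply. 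So the crux of your argument is left open, and the route you sketch for closing it is substantially harder than the problem requires.

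The missing idea is that under \cref{asm:exposure} the centering term is \emph{exactly} zero, so no rate argument is needed at all; this is how the paper's proof proceeds. Expanding $\varepsilon_{n,i}$ via \cref{asm:linear}, every factor appearing in $X_{n,i}\varepsilon_{n,i}$ either carries $R_{n,i}$ multiplicatively (the single own-treatment element of $T_{n,i}$ and of $X_{n,i}$, and the elements of $Z_{n,i}$, by \cref{asm:exposure} (ii)--(iii)) or is a function of $(R_{n,j},D^*_{n,j})_{j\neq i}$ alone and hence independent of $R_{n,i}$ (the spillover elements, by \cref{asm:sampling,asm:local}). For the former, $R_{n,i}^{2}=R_{n,i}$ makes multiplication by $R_{n,i}$ vacuous; for the latter, $\mathbb{E}[R_{n,i}(\cdot)]=\rho_n\mathbb{E}[(\cdot)]$, and the cross terms vanish because $\mathbb{E}[X_{n,i}\mid\bR_n]=0$ (\cref{lem:as_rep}). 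Hence, element by element, $\sum_{i}\mathbb{E}[R_{n,i}X_{n,i,(k)}\varepsilon_{n,i}]$ equals either $\sum_{i}\mathbb{E}[X_{n,i,(k)}\varepsilon_{n,i}]$ or $\rho_n\sum_{i}\mathbb{E}[X_{n,i,(k)}\varepsilon_{n,i}]$, and both are exactly zero by the population orthogonality condition \cref{eq:moment_causal}. With $\mu_n=0$, the statistic coincides with $\sum_i U_{n,i}$ identically and the CLT part of your argument finishes the proof; replacing your third paragraph with this algebraic computation is what repairs the proposal.
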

\begin{proof}
    An orthogonality condition for $\theta_{n}^{\causal}$ \cref{eq:moment_causal} implies
    \begin{align}
        \sum_{i=1}^n \mathbb{E}\left[X_{n,i}\varepsilon_{n,i}\right]  &=0.
    \end{align}
    By \cref{asm:linear,asm:exposure} (i),
    \begin{align*}
        X_{n,i}\varepsilon_{n,i}&=X_{n,i}(Y_{n,i}-X_{n,i}'\theta_{n}^{\causal}-Z_{n,i}'\gamma_{n}^{\causal})\\
        &=X_{n,i}T_{n,i}'\theta_{n,i}+X_{n,i}\nu_{n,i}-X_{n,i}X_{n,i}'\theta_{n}^{\causal}
        -X_{n,i}Z_{n,i}'\gamma_{n}^{\causal}.
    \end{align*}
    By \cref{lem:as_rep,asm:exposure} (ii), $R_{n,i}$ enters only multiplicatively for $T_{n,i}$ and $X_{n,i}=T_{n,i}-\mathbb{E}[T_{n,i}|\bR_n]$.
    By \cref{asm:exposure} (ii), each element of $Z_{n,i}$ is multiplicatively in $R_{n,i}$.
    Thus, each element of $X_{n,i}\varepsilon_{n,i}$ is multiplicatively in $R_{n,i}$ by $R_{n,i}^2=R_{n,i}$.
    Combining it with the orthogonality, 
    \begin{align*}
        \sum_{i=1}^{n}\mathbb{E}[R_{n,i}X_{n,i}\varepsilon_{n,i}]=0.
    \end{align*}
    
    Define $U_{n,i}=R_{n,i}X_{n,i}\varepsilon_{n,i}-\mathbb{E}[R_{n,i}X_{n,i}\varepsilon_{n,i}]$. Then, we have 
    \begin{align*}
        \Sigma_{n}^{-1/2}\sum_{i=1}^{n}R_{n,i}X_{n,i}\varepsilon_{n,i}
        &=\Sigma_{n}^{-1/2}\sum_{i=1}^{n}U_{n,i}+\Sigma^{-1/2}_{n}\sum_{i=1}^{n}\mathbb{E}[R_{n,i}X_{n,i}\varepsilon_{n,i}]\\
        &=\Sigma_{n}^{-1/2}\sum_{i=1}^{n}U_{n,i},
    \end{align*}
    and $\mathbb{E}[U_{n,i}]=0$.
    
    The remaining parts of the proof are similar to \cref{lem:CLT}.
\end{proof}

\begin{lemma}
    \label{lem:gamma_consistency}
    Under \cref{asm:linear,asm:sampling,asm:moment,asm:linear_propensity,asm:local,asm:sparsity}, $$\hat{\gamma}_{n}-\gamma_{n}^{\causample}\overset{p^R}{\longrightarrow}0.$$
    If we assume \cref{asm:exposure} additionally, 
    $$\tilde{\gamma}_{n}-\gamma_{n}^{\causal}\overset{p}{\to}0,$$
    where $\tilde{\gamma}_{n}$ is defined in \cref{eq:gamma}.
\end{lemma}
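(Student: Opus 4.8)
The statement splits into a conditional claim for $\hat{\gamma}_n$ and an unconditional claim for the bias-corrected $\tilde{\gamma}_n$, and I would treat them separately.

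For the first claim, the plan is to read both $\hat{\gamma}_n$ and $\gamma_n^{\causample}$ as the lower block of one matrix map applied to $\tilde{Q}_n$ and $\tilde{\Omega}_n$, respectively. Writing
$$\tilde{M}_n=\begin{pmatrix}\tilde{Q}_n^{XX}&\tilde{Q}_n^{XZ}\\\tilde{Q}_n^{ZX}&\tilde{Q}_n^{ZZ}\end{pmatrix},\qquad \tilde{M}_n^{*}=\begin{pmatrix}\tilde{\Omega}_n^{XX}&\tilde{\Omega}_n^{XZ}\\\tilde{\Omega}_n^{ZX}&\tilde{\Omega}_n^{ZZ}\end{pmatrix},$$
\cref{lem:LLN} gives $\tilde{M}_n-\tilde{M}_n^{*}\overset{p^R}{\longrightarrow}0$ and the analogous statement for the right-hand moment vectors. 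Since $\tilde{M}_n^{*}$ is $\sigma(\bR_n)$-measurable and, by \cref{asm:moment}, almost surely invertible for large $n$, I would apply the resolvent identity $A^{-1}-B^{-1}=-A^{-1}(A-B)B^{-1}$ to conclude that $\tilde{M}_n^{-1}\tilde{V}_n-(\tilde{M}_n^{*})^{-1}\tilde{V}_n^{*}\overset{p^R}{\longrightarrow}0$ and then read off the $\gamma$-block. The one delicate point is the uniform-in-$n$ control of $\|(\tilde{M}_n^{*})^{-1}\|$, which I take from the invertibility half of \cref{asm:moment} together with the uniform boundedness of the moments.

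For the second claim I would first exploit the residualization. Because $X_{n,i}=T_{n,i}-\Lambda_n Z_{n,i}$ with $\Lambda_n=(\sum_i\mathbb{E}[T_{n,i}Z_{n,i}'])(\sum_i\mathbb{E}[Z_{n,i}Z_{n,i}'])^{-1}$, one has $\Omega_n^{XZ}=0$ identically, so the population system block-diagonalizes and $\gamma_n^{\causal}=(\Omega_n^{ZZ})^{-1}\Omega_n^{ZY}$. Under \cref{asm:exposure}, \cref{lem:Lambda} gives $\tilde{Z}_{n,i}=Z_{n,i}$ and $\tilde{X}_{n,i}=X_{n,i}$ on the sampled units, so the target reduces to showing that the $\rho_n$-reweighted sample moments satisfy $\tilde{P}_n^{ZZ}-\Omega_n^{ZZ}\overset{p}{\to}0$ and $\tilde{P}_n^{ZY}-\Omega_n^{ZY}\overset{p}{\to}0$; the conclusion $\tilde{\gamma}_n-\gamma_n^{\causal}\overset{p}{\to}0$ then follows from the same resolvent argument, using invertibility of $\Omega_n^{ZZ}$. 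For the two displayed convergences I would use \cref{lem:nNconv} to replace $1/N$ by $1/(n\rho_n)$ and treat each block as a scalar average of a locally dependent, uniformly bounded triangular array, invoking the dependent-data laws of large numbers in \cref{lem:LLN_W,lem:LLN_RW}. The role of the $\rho_n$ factors is purely to balance the number of sampling indicators each covariate carries: since the first $m$ coordinates are multiplicative in $R_{n,i}$ (\cref{asm:exposure}(ii)) and $R_{n,i}^2=R_{n,i}$, $\mathbb{E}[R_{n,i}]=\rho_n$, each such block acquires one spurious factor of $R_{n,i}$ that the $\rho_n$ prefactor is meant to cancel.

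The main obstacle I anticipate is aligning sample and population moments for the cross-terms in which the outcome $Y_{n,i}$ itself depends on the unit's own sampling/treatment indicator $R_{n,i}D^{*}_{n,i}$: there the stripped covariate is no longer independent of $R_{n,i}$, and a naive reweighting leaves a residual factor of order $1-\rho_n$, so the reweighted $ZY$-block does not obviously reproduce $\Omega_n^{ZY}$ coordinate by coordinate. This is exactly the configuration \cref{asm:exposure}(iii) is designed to control — the own treatment enters at most one coordinate of the exposure mapping and never through a neighbor — so that its contribution is confined to the propensity covariate living in $Z_{(1:m)}$, where the $\rho_n$ rescaling acts, rather than into the unrescaled $Z_{-(1:m)}$ block. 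Verifying this confinement and checking that the resulting probability limit coincides with $\gamma_n^{\causal}$ after inverting $\Omega_n^{ZZ}$ is the step that requires care; the remaining concentration arguments are routine given \cref{lem:LLN_W,lem:LLN_RW} and the sparsity in \cref{asm:sparsity}.
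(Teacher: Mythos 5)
Your treatment of the first claim is correct and is the paper's own argument: the paper likewise deduces $\hat{\gamma}_{n}-\gamma_{n}^{\causample}\overset{p^R}{\longrightarrow}0$ from \cref{lem:LLN} exactly as in the proof of \cref{thm:consistency}, and your block-inverse/resolvent phrasing is only a cosmetic variant.

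For the second claim, your skeleton (reduce to $\tilde{P}_n^{ZZ}-\Omega_n^{ZZ}\overset{p}{\to}0$ and $\tilde{P}_n^{ZY}-\Omega_n^{ZY}\overset{p}{\to}0$ via \cref{lem:nNconv,lem:LLN_W,lem:LLN_RW}, then invert) is again the paper's route, and you correctly locate the dangerous step. But your proposed resolution --- that \cref{asm:exposure} (iii) ``confines'' the own-treatment contribution to the rescaled block $Z_{(1:m)}$ --- is false, and this is a genuine gap. Row $k$ of $\tilde{P}_n^{ZY}$ is $N^{-1}\sum_i R_{n,i}Z_{(k),n,i}Y_{n,i}$, so the own-treatment term $R_{n,i}D^*_{n,i}\theta_{n,i,(1)}$ inside $Y_{n,i}$ multiplies \emph{every} covariate, including each entry of the unrescaled block $Z_{-(1:m),n,i}$; \cref{asm:exposure} (iii) restricts which elements of $T_{n,i}$ contain the own treatment, not which rows of the moment vector the own-treatment part of the outcome enters. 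In those unrescaled rows, \cref{lem:LLN_RW} gives
\begin{align*}
\frac{1}{N}\sum_{i=1}^{n}R_{n,i}\mathbb{E}\left[Z_{-(1:m),n,i}R_{n,i}D^*_{n,i}\mid\bR_n\right]\theta_{n,i,(1)}
\overset{p}{\to}\frac{1}{n}\sum_{i=1}^{n}\mathbb{E}\left[Z_{-(1:m),n,i}D^*_{n,i}\right]\theta_{n,i,(1)},
\end{align*}
whereas the matching contribution to $\Omega_n^{ZY}$ is $\frac{\rho_n}{n}\sum_{i}\mathbb{E}\left[Z_{-(1:m),n,i}D^*_{n,i}\right]\theta_{n,i,(1)}$: a discrepancy of order $1-\rho_n$ that the $\rho_n$-rescaling in \cref{eq:gamma} never reaches.

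The failure is substantive, not presentational. Take $T_{n,i}=(R_{n,i}D^*_{n,i},\sum_{j\neq i}A_{n,i,j}R_{n,j}D^*_{n,j})$ with out-of-sample network sampling, $Z_{n,i}=\mathbb{E}[T_{n,i}\mid\bR_n]=(R_{n,i}p,\;p\sum_{j\neq i}A_{n,i,j}R_{n,j})'$ (so \cref{asm:linear_propensity} holds with $L_n=I$), homogeneous $p_{n,i}=p$, $\rho_n=\rho<1$, bounded positive degrees, $\theta_{n,i}=(\theta,0)'$, $\nu_{n,i}=0$. Every hypothesis of the lemma holds, $\gamma_n^{\causal}=(\theta,0)'$, and $\tilde{P}_n^{ZZ}\overset{p}{\to}\Omega_n^{ZZ}$ as intended; but the second entry of $\tilde{P}_n^{ZY}$ converges to $\rho p^{2}\theta\bar{d}$ while the second entry of $\Omega_n^{ZY}$ equals $\rho^{2}p^{2}\theta\bar{d}$, where $\bar{d}=n^{-1}\sum_{i}\sum_{j\neq i}A_{n,i,j}$, and solving the limiting system yields a limit for $\tilde{\gamma}_n$ with strictly positive second component whenever $\theta>0$. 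Hence the route cannot deliver $\tilde{\gamma}_{n}-\gamma_{n}^{\causal}\overset{p}{\to}0$. In fairness, the paper's own proof is a one-liner asserting $\mathbb{E}[\tilde{P}_n^{ZY}\mid\bR_n]\overset{p}{\to}\Omega_n^{ZY}$ ``by \cref{lem:LLN_W} and \cref{lem:LLN_RW}'' without examining exactly this cross term, so it has the same hole; your instinct about where the difficulty sits was right, but neither your confinement claim nor the paper's citation of the two laws of large numbers closes it. Closing it requires $\rho=1$, or excluding the own treatment from $T_{n,i}$, or a correction that also reweights the own-treatment contribution inside the unrescaled rows, which is infeasible without knowing or first estimating $\theta_{n,i,(1)}$.
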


\begin{proof}
    We can show $\hat{\gamma}_{n}-\gamma_{n}^{\causample}\overset{p^R}{\longrightarrow}0$ by \cref{lem:LLN} as the proof for \cref{thm:consistency}.

    Next, we show $\tilde{\gamma}_{n}-\gamma_{n}^{\causal}\overset{p}{\to}0$. 
    By \cref{lem:LLN}, $\tilde{P}_n^{ZZ}\overset{p^R}{\longrightarrow}\mathbb{E}[\tilde{P}_n^{ZZ}|\bR_n]$ and $\tilde{P}_n^{ZY}\overset{p^R}{\longrightarrow}\mathbb{E}[\tilde{P}_n^{ZY}|\bR_n]$.
    By \cref{lem:LLN_W} and \cref{lem:LLN_RW},
    $\mathbb{E}[\tilde{P}_n^{ZZ}|\bR_n]\overset{p}{\to}\Omega_n^{ZZ}$ and $\mathbb{E}[\tilde{P}_n^{ZY}|\bR_n]\overset{p}{\to}\Omega_n^{ZY}$. Thus, we can conclude by the continuous mapping theorem.
\end{proof}
    
\section{Proofs}
\label{app:proof}
\subsection{Proof of \cref{thm:weakly_causal}}
\begin{proof}
    \cref{lem:as_rep} implies that
    \begin{align*}
        \Omega_{n}^{XZ}=0=\mathbb{E}[(T_{n,i}-\mathbb{E}[T_{n,i}|\bR_n])Z_{n,i}']=0
    \end{align*}
     for large enough $n$.
    Similarly, 
    \begin{align*}
        \tilde{\Omega}_{n}^{XZ}=\mathbb{E}[\tilde{X}_{n,i}\tilde{Z}_{n,i}'|\bR_{n}] 
        = \mathbb{E}[(\tilde{T}_{n,i}-\mathbb{E}[\tilde{T}_{n,i}|\bR_{n}])\tilde{Z}_{n,i}'|\bR_{n}] = 0\quad\text{a.s.}
    \end{align*}
    for large enough $n$ since $\tilde{Z}_{n,i}$ is measurable with respect to $\sigma(\bR_{n})$.

    Therefore, for large enough $n$,
    \begin{align*}
        \theta_{n}^{\causal}&=\left(\Omega_{n}^{XX}\right)^{-1}\Omega_{n}^{XY},
    \end{align*}
    and
    \begin{align*}
        \theta_{n}^{\causample}&=\left(\tilde{\Omega}_{n}^{XX}\right)^{-1}\tilde{\Omega}_{n}^{XY}\quad\text{a.s.}
    \end{align*}
    They are well-defined under \cref{asm:moment}.
    Then, it suffices to show that for large enough $n$,
    \begin{align*}
        \mathbb{E}[X_{n,i}Y_{n,i}]&=\mathbb{E}[X_{n,i}X_{n,i}']\theta_{n,i},
       \end{align*}
    and
    \begin{align*}
        \mathbb{E}[\tilde{X}_{n,i}Y_{n,i}|\bR_{n}]&=\mathbb{E}[\tilde{X}_{n,i}X_{n,i}'|\bR_{n}]\theta_{n,i}\quad\text{a.s.}
    \end{align*}
    
    The following transformations hold for large enough $n$:
    \begin{align*}
        \mathbb{E}[X_{n,i}Y_{n,i}]&=\mathbb{E}[X_{n,i}T_{n,i}']\theta_{n,i}+ \mathbb{E}[X_{n,i}]\nu_{n,i}\\
        &=\mathbb{E}[X_{n,i}X_{n,i}']\theta_{n,i}+\mathbb{E}[X_{n,i}(T_{n,i}-X_{n,i})']\theta_{n,i}\\
        &=\mathbb{E}[X_{n,i}X_{n,i}']\theta_{n,i}+\mathbb{E}[X_{n,i}]Z_{n,i}'\Lambda_{n}'\theta_{n,i}\\
        &=\mathbb{E}[X_{n,i}X_{n,i}']\theta_{n,i},
    \end{align*}
    where the first equality holds by \cref{asm:linear}, the second and the last equalities follow by $\mathbb{E}[X_{n,i}]=0$, which is implied by \cref{lem:as_rep}, and the third equality follows by the definition of $X_{n,i}$. Similarly, the following transformations hold almost surely for large enough $n$:
    \begin{align*}
        \mathbb{E}[\tilde{X}_{n,i}Y_{n,i}]&=\mathbb{E}[\tilde{X}_{n,i}T_{n,i}'|\bR_{n}]\theta_{n,i}+\mathbb{E}[\tilde{X}_{n,i}|\bR_{n}]\nu_{n,i}\\
        &=\mathbb{E}[\tilde{X}_{n,i}X_{n,i}|\bR_{n}]\theta_{n,i}+\mathbb{E}[\tilde{X}_{n,i}(T_{n,i}-X_{n,i})'|\bR_{n}]\theta_{n,i}\\
        &=\mathbb{E}[\tilde{X}_{n,i}X_{n,i}'|\bR_{n}]\theta_{n,i}+\mathbb{E}[\tilde{X}_{n,i}|\bR_{n}]Z_{n,i}'\Lambda_{n}'\theta_{n,i}\\
        &=\mathbb{E}[\tilde{X}_{n,i}X_{n,i}'|\bR_{n}]\theta_{n,i},
    \end{align*}
    where we used $\mathbb{E}[\tilde{X}_{n,i}|\bR_{n}]=0$. This completes the proof.
\end{proof}

\subsection{Proof of \cref{cor:causal_element}}
\begin{proof}
    By the population version of the Frisch-Waugh-Lovell theorem,
    \begin{align*}
        \theta_{n,(k)}^{\causal}=\frac{\sum_{i=1}^n\mathbb{E}[U_{n,i,(k)}Y_{n,i}]}{\sum_{i=1}^n\mathbb{E}[U_{n,i,(k)}^{2}]}
    \end{align*}
    By the linearity of the model (\cref{asm:linear}), the numerator can be transformed as
    \begin{align*}
        \sum_{i=1}^n\mathbb{E}[U_{n,i,(k)}Y_{n,i}]
        &=\sum_{i=1}^n\mathbb{E}[U_{n,i,(k)}T_{n,i}']\theta_{n,i}+\sum_{i=1}^n\mathbb{E}[U_{n,i,(k)}]\nu_{n,i}\\
        &=\sum_{i=1}^n\mathbb{E}[U_{n,i,(k)}(X_{n,i}+\mathbb{E}[T_{n,i}\mid\bR_n])']\theta_{n,i}\\
        &=\sum_{i=1}^n\mathbb{E}[U_{n,i,(k)}X_{n,i,(k)}]\theta_{n,i,(k)} + \sum_{i=1}^n\mathbb{E}[U_{n,i,(k)}X_{n,i,(-k)}']\theta_{n,i,(-k)},
    \end{align*}
    where the second equality holds as $\mathbb{E}[U_{n,i,(k)}]=0$, which is implied by $\mathbb{E}[X_{n,i}]=0$, a consequence of \cref{lem:as_rep} and the last equality follows from the law of iterated expectations and $\mathbb{E}[X_{n,i}\mid\bR_n]=0$.
    
    Similarly,
    \begin{align*}
        \theta_{n,(k)}^{\causample}=\frac{\sum_{i=1}^n\mathbb{E}[\tilde{U}_{n,i,(k)}Y_{n,i}|\bR_{n}]}{\sum_{i=1}^n\mathbb{E}[\tilde{U}_{n,i,(k)}^{2}|\bR_{n}]}.
    \end{align*}
    The numerator is given by
    \begin{align*}
        \sum_{i=1}^nR_{n,i}\mathbb{E}[\tilde{U}_{n,i,(k)}Y_{n,i}|\bR_{n}]
        &=\sum_{i=1}^nR_{n,i}\mathbb{E}[\tilde{U}_{n,i,(k)}T_{n,i}'|\bR_{n}]\theta_{n,i}\\
        &=\sum_{i=1}^nR_{n,i}\sum_{l=1}^{d_{T}}\mathbb{E}[\tilde{U}_{n,i,(k)}X_{n,i,(l)}|\bR_{n}]\theta_{n,i,(l)}.
    \end{align*}
    Under $d_T=d_{\tilde{T}}$, the last equation can be simplified further to
    \begin{align*}
        &\sum_{i=1}^nR_{n,i}\mathbb{E}[\tilde{U}_{n,i,(k)}X_{n,i,(k)}|\bR_{n}]\theta_{n,i,(k)}+\sum_{i=1}^n\sum_{l\neq k}R_{n,i}\mathbb{E}[\tilde{U}_{n,i,(k)}X_{n,i,(l)}|\bR_{n}]\theta_{n,i,(l)}
    \end{align*}
    as above. This completes the proof.
\end{proof}

\subsection{Proof of \cref{cor:no_contamination}}
\begin{proof}
    By \cref{lem:as_rep},
    \begin{align*}
        &\mathbb{E}[\tilde{X}_{n,i,(k)}X_{n,i,(l)}|\bR_{n}]\\
        =&\mathbb{E}[(\tilde{T}_{n,i,(k)}-\mathbb{E}[\tilde{T}_{n,i,(k)}|\bR_{n}])(T_{n,i,(l)}-\mathbb{E}[T_{n,i,(l)}])|\bR_{n}]\\
        =&\mathbb{E}[(\tilde{T}_{n,i,(k)}-\mathbb{E}[\tilde{T}_{n,i,(k)}|\bR_{n}])(T_{n,i,(l)}-\mathbb{E}[T_{n,i,(l)}|\bR_{n}]+\mathbb{E}[T_{n,i,(l)}|\bR_{n}]-\mathbb{E}[T_{n,i,(l)}])|\bR_{n}]\\
        =&\mathbb{E}[(\tilde{T}_{n,i,(k)}-\mathbb{E}[\tilde{T}_{n,i,(k)}|\bR_{n}])(T_{n,i,(l)}-\mathbb{E}[T_{n,i,(l)}|\bR_{n}])|\bR_{n}]\\
        =&\Cov(\tilde{T}_{n,i,(k)},T_{n,i,(l)}|\bR_{n}).
    \end{align*}
    Also, by the law of iterated expectations,
    \begin{align*}
        \mathbb{E}[X_{n,i,(k)}X_{n,i,(l)}]
        &=\mathbb{E}[\mathbb{E}[X_{n,i,(k)}X_{n,i,(l)}|\bR_{n}]]\\
        &=\mathbb{E}[\mathbb{E}[(T_{n,i,(k)}-\mathbb{E}[T_{n,i,(k)}|\bR_{n}])(T_{n,i,(l)}-\mathbb{E}[T_{n,i,(l)}|\bR_{n}])]|\bR_{n}]]\\
        &=\mathbb{E}[\Cov(T_{n,i,(k)},T_{n,i,(l)}|\bR_n)].
    \end{align*}
    By \cref{thm:weakly_causal} and the above equivalences, the no contamination result follows if the covariance condition is satisfied.
    
    Moreover, the numerator of $\theta_{n,(k)}^{\causample}$ is 
    \begin{align*}
        &\sum_{i=1}^nR_{n,i}\mathbb{E}[\tilde{X}_{n,i,(k)}X_{n,i,(k)}|\bR_{n}]\theta_{n,i,(k)}+\sum_{i=1}^n\sum_{l\in\{1,\cdots,d_T\}\setminus \{k\}}R_{n,i}\mathbb{E}[\tilde{X}_{n,i,(k)}X_{n,i,(l)}|\bR_{n}]\theta_{n,i,(l)}\\
        =&\sum_{i=1}^nR_{n,i}\mathbb{E}[\tilde{X}_{n,i,(k)}X_{n,i,(k)}|\bR_{n}]\theta_{n,i,(k)},
    \end{align*}
    and $R_{n,i}\mathbb{E}[\tilde{X}_{n,i,(k)}X_{n,i,(k)}|\bR_{n}]\geq0$ if we assume that $\Cov(\tilde{T}_{n,i,(k)},T_{n,i,(k)}|\bR_{n})\geq0$.
\end{proof}

\subsection{Proof of \cref{thm:consistency}}
\begin{proof}
    By \cref{lem:as_rep}, we have $\mathbb{E}[\tilde{X}_{n,i}\tilde{Z}_{n,i}|\bR_{n}]=0$ a.s. for large enough $n$. Thus, $\tilde{Q}^{ZX}_{n},\tilde{Q}_{n}^{XZ}\overset{a.s.}{\longrightarrow}0$. Since
    $$
    \left(
    \begin{array}{c}
    \hat{\theta}_{n}\\
    \hat{\gamma}_{n}
    \end{array}
    \right)
    =
    \left(
    \begin{array}{cc}
    \tilde{Q}_{n}^{XX}&\tilde{Q}_{n}^{XZ}\\
    \tilde{Q}_{n}^{ZX}&\tilde{Q}_{n}^{ZZ}
    \end{array}\right)^{-1}
    \left(
    \begin{array}{c}
    \tilde{Q}_{n}^{XY}\\
    \tilde{Q}_{n}^{ZY}
    \end{array}
    \right),
    $$
    \cref{lem:LLN} implies that
    $$
    \hat{\theta}_{n}-\theta_{n}^{\causample}=\hat{\theta}_{n}-(\tilde{\Omega}_{n}^{XX})^{-1}\tilde{\Omega}_{n}^{XY}+o_{p^R}(1)\overset{p^R}{\longrightarrow}0,
    $$
    which further implies
    $$
    \hat{\theta}_{n}-\theta_{n}^{\causample}\overset{p}{\to}0.
    $$
\end{proof}

\subsection{Proof of \cref{thm:causal_consistency}}
\begin{proof}
    Since we have already shown \cref{thm:consistency}, it suffices to prove $\theta_{n}^{\causample}-\theta_{n}^{\causal}\overset{p}{\to}0$.
    By \cref{lem:as_rep}, we have $\mathbb{E}[X_{n,i}Z_{n,i}|\bR_{n}]=0$ a.s. and $\mathbb{E}[X_{n,i}Z_{n,i}]=0$ for large enough $n$.
    Thus, for large enough $n$,
    $\theta_{n}^{\causample}=(\tilde{\Omega}_{n}^{XX})^{-1}\tilde{\Omega}_{n}^{XY}$ a.s. and
    $\theta_{n}^{\causal}=(\Omega_{n}^{XX})^{-1}\Omega_{n}^{XY}$.
    Without loss of generality, assume that the first element of $T_{n,i}$ depends on $R_{n,i}D^*_{n,i}$.
    By \cref{asm:exposure} (i) and (iii), we can treat the first element of $\theta_{n,(1)}^{\causample}$ and  $\theta_{n,(1)}^{\causample}$ the other elements separately as $\theta_{n,(1)}^{\causal}=(\Omega_{n,(1,1)}^{XX})^{-1}\Omega_{n,(1,1)}^{XY}$, 
    $\theta_{n,(-1)}^{\causal}=(\Omega_{n,(-1,-1)}^{XX})^{-1}\Omega_{n,(-1,-1)}^{XY}$, 
    $\theta_{n,(1)}^{\causample}=(\tilde{\Omega}_{n,(1,1)}^{XX})^{-1}\tilde{\Omega}_{n,(1,1)}^{XY}$, and 
    $\theta_{n,(-1)}^{\causample}=(\tilde{\Omega}_{n,(-1,-1)}^{XX})^{-1}\tilde{\Omega}_{n,(-1,-1)}^{XY}$, 
    where $\Omega_{n,(1,1)}$ is the $(1,1)$ element of $\Omega_{n}$ and $\Omega_{n,(-1,-1)}$ is the submatrix of $\Omega_{n}$ except for its first row and first column. $\tilde{\Omega}_{n,(1,1)}$ and $\tilde{\Omega}_{n,(-1,-1)}$ are defined analogously.
    By  \cref{lem:LLN_W}, 
    \begin{align*}
        \theta_{n,(-1)}^{\causample}-\theta_{n,(-1)}^{\causal}=(\tilde{\Omega}_{n,(-1,-1)}^{XX})^{-1}\tilde{\Omega}_{n,(-1,-1)}^{XY}-(\Omega_{n,(-1,-1)}^{XX})^{-1}\Omega_{n,(-1,-1)}^{XY}\overset{p}{\to}0.
    \end{align*}
    By \cref{lem:LLN_RW}, 
    \begin{align*}
        \theta_{n,(1)}^{\causample}-\theta_{n,(1)}^{\causal}=(\tilde{\Omega}_{n,(1,1)}^{XX})^{-1}\tilde{\Omega}_{n,(1,1)}^{XY}-((1/\rho_n)\Omega_{n,(1,1)}^{XX})^{-1}(1/\rho_n)\Omega_{n,(1,1)}^{XY}\overset{p}{\to}0.
    \end{align*}
    We can conclude by stacking them.
\end{proof}

\subsection{Proof of \cref{thm:normality}}
\begin{proof}
    We have $\tilde{\Omega}_n^{XZ}\overset{a.s.}{\longrightarrow}0$ and $(n\rho_n)/N\overset{a.s.}{\longrightarrow}1$ under the invertibility and the moment conditions.
    Thus,
    \begin{align*}
        &\sqrt{n\rho_n}\left(
        \begin{array}{c}
            \hat{\theta}_{n}-\theta_n^{\causample}\\
            \hat{\gamma}_{n}-\gamma_n^{\causample}
        \end{array}
        \right)\\
        =&
        \left(
        \begin{array}{cc}
            \tilde{Q}_{n}^{XX}&\tilde{Q}_{n}^{XZ}\\
            \tilde{Q}_{n}^{ZX}&\tilde{Q}_{n}^{ZZ}
        \end{array}\right)^{-1}
        \left(
        \begin{array}{c}
            \frac{\sqrt{n\rho_n}}{N}\sum_{i=1}^{n}R_{n,i}\tilde{X}_{n,i}\tilde{\varepsilon}_{n,i}\\
            \frac{\sqrt{n\rho_n}}{N}\sum_{i=1}^{n}R_{n,i}\tilde{Z}_{n,i}\tilde{\varepsilon}_{n,i}
        \end{array}
        \right)\\
        =&\left[\left(
    \begin{array}{cc}
        \tilde{Q}_n^{XX} & O \\
        O & \tilde{Q}_n^{ZZ}
    \end{array}\right)^{-1}+o_{p^R}(1)\right]\left(
    \begin{array}{c}
            (1+o_{p^R}(1))\frac{1}{\sqrt{n\rho_n}}\sum_{i=1}^{n}R_{n,i}\tilde{X}_{n,i}\tilde{\varepsilon}_{n,i}\\
            (1+o_{p^R}(1))\frac{1}{\sqrt{n\rho_n}}\sum_{i=1}^{n}R_{n,i}\tilde{Z}_{n,i}\tilde{\varepsilon}_{n,i}
        \end{array}
        \right),
    \end{align*}
    and it suffices to show\footnote{
            A random variable $X_n$ is $O_{p^R}(1)$ if for any $\varepsilon>0$, there exist some constant $M_\varepsilon<\infty$ such that
            \begin{align*}
                \mathbb{P}\left(\|X_n\|>M_\varepsilon\mid\bR_n\right)<\varepsilon\quad\text{a.s.}
            \end{align*}
            for large enough $n$.
        }
    \begin{align}
        \frac{1}{\sqrt{n\rho_n}}\sum_{i=1}^n R_{n,i}\tilde{X}_{n,i}\tilde{\varepsilon}_{n,i}&=O_{p^R}(1),\label{eq:tilde_X_eps}\\
        \frac{1}{\sqrt{n\rho_n}}\sum_{i=1}^{n}R_{n,i}\tilde{Z}_{n,i}\tilde{\varepsilon}_{n,i}&=O_{p^R}(1),\label{eq:tilde_Z_eps}\\
        \frac{1}{\sqrt{n\rho_n}}\tilde{\Sigma}_{n}^{-1/2}&=O_{\text{a.s.}}(1)\label{eq:tilde_Sigma_bound}
    \end{align}
    since these conditions imply that 
    \begin{align*}
        &\tilde{\Sigma}_{n}^{-1/2}\tilde{Q}_n^{XX}\left(
            \hat{\theta}_{n}-\theta_n^{\causample}
        \right)\\
        =&\frac{1}{\sqrt{n\rho_n}}\tilde{\Sigma}_{n}^{-1/2}\tilde{Q}_n^{XX}\left(\tilde{Q}_n^{XX}\right)^{-1}\frac{1}{\sqrt{n\rho_n}}\sum_{i=1}^{n}R_{n,i}\tilde{X}_{n,i}\tilde{\varepsilon}_{n,i}+o_{p^R}(1),
    \end{align*}
    and we can conclude the convergence in conditional distribution with \cref{lem:CLT}. The dominated convergence theorem and the law of iterated expectations imply the unconditional result.

    We show \cref{eq:tilde_X_eps}-\cref{eq:tilde_Sigma_bound}.
    By Chebyshev's inequality, it suffices to show that its conditional variance is almost surely bounded.
    \begin{align}
        &\Var\left(\frac{1}{\sqrt{n}}\sum_{i=1}^{n}\frac{R_{n,i}}{\sqrt{\rho_n}}\tilde{X}_{n,i}\tilde{\varepsilon}_{n,i}\mid\bR_n\right)
        \nonumber\\
        =&\frac{1}{n}\sum_{i=1}^{n}\Var\left(\frac{R_{n,i}}{\sqrt{\rho_n}}\tilde{X}_{n,i}\tilde{\varepsilon}_{n,i}\mid\bR_n\right)
        +\frac{1}{n}\sum_{i=1}^{n}\sum_{j\in\mathcal{N}_{n}(i,2K)\setminus\{i\}}\Cov\left(\frac{R_{n,i}}{\sqrt{\rho_n}}\tilde{X}_{n,i}\tilde{\varepsilon}_{n,i},\frac{R_{n,j}}{\sqrt{\rho_n}}\tilde{X}_{n,j}\tilde{\varepsilon}_{n,j}\mid\bR_n\right)
        \nonumber\\
        \leq&\frac{1}{n}\sum_{i=1}^{n}\frac{R_{n,i}}{\rho_n}\mathbb{E}\left[\tilde{X}_{n,i}\tilde{X}_{n,i}^{\prime}\tilde{\varepsilon}_{n,i}^2\mid\bR_n\right]
        \label{eq:clt_var}\\
        &+\frac{1}{n}\sum_{i=1}^{n}\sum_{j\in\mathcal{N}_{n}(i,2K)\setminus\{i\}}\frac{R_{n,i}R_{n,j}}{\rho_n}\left(\mathbb{E}\left[\tilde{X}_{n,i}\tilde{X}_{n,j}^{\prime}\tilde{\varepsilon}_{n,i}\tilde{\varepsilon}_{n,j}\mid\bR_n\right]-\mathbb{E}\left[\tilde{X}_{n,i}\tilde{\varepsilon}_{n,i}\mid\bR_n\right]\mathbb{E}\left[\tilde{X}_{n,j}\tilde{\varepsilon}_{n,j}\mid\bR_n\right]'\right).
        \label{eq:clt_cov}
    \end{align}
    Each element of the first term \cref{eq:clt_var} is almost surely bounded by
    \begin{align*}
        \left(\frac{N}{n\rho_n}\right)\cdot\max_{i}\|\tilde{X}_{n,i}\|^2\cdot\max_{i}|\tilde{
        \varepsilon
        }_{n,i}|^2.
    \end{align*}
    Thus, the first term \cref{eq:clt_var} is $O_{\text{a.s.}}(1)$ by \cref{asm:moment,lem:eps_bound,lem:nNconv}.
    The second term \cref{eq:clt_cov} is also $O_{\text{a.s.}}(1)$ by a similar argument as the first term and \cref{asm:sparsity}. Hence, \cref{eq:tilde_X_eps} is $O_{p^R}(1)$.
    Similarly, we can show that \cref{eq:tilde_Z_eps} is $O_{p^R}(1)$. \cref{eq:tilde_Sigma_bound} is also $O_{\text{a.s.}}(1)$ by the invertibility assumption (\cref{asm:moment}). 
\end{proof}

\subsection{Proof of \cref{thm:normality_causal}}
\begin{proof}
    Since $\tilde{X}_{n,i}=X_{n,i}$ and $\tilde{Z}_{n,i}=Z_{n,i}$,
    \begin{align*}
        &\sqrt{n\rho_n}\left(
        \begin{array}{c}
            \hat{\theta}_{n}-\theta_n^{\causal}\\
            \hat{\gamma}_{n}-\gamma_n^{\causal}
        \end{array}
        \right)\\
        =&
        \left(
        \begin{array}{cc}
            \tilde{Q}_{n}^{XX}&\tilde{Q}_{n}^{XZ}\\
            \tilde{Q}_{n}^{ZX}&\tilde{Q}_{n}^{ZZ}
        \end{array}\right)^{-1}
        \left(
        \begin{array}{c}
            \frac{\sqrt{n\rho_n}}{N}\sum_{i=1}^{n}R_{n,i}X_{n,i}(Y_{n,i}-X_{n,i}'\theta_{n}^{\causal}-Z_{n,i}'\gamma_{n}^{\causal})\\
            \frac{\sqrt{n\rho_n}}{N}\sum_{i=1}^{n}R_{n,i}Z_{n,i}(Y_{n,i}-X_{n,i}'\theta_{n}^{\causal}-Z_{n,i}'\gamma_{n}^{\causal})\\
        \end{array}
        \right)\\
        =&\left[\left(
    \begin{array}{cc}
        \tilde{Q}_n^{XX} & O \\
        O & \tilde{Q}_n^{ZZ}
    \end{array}\right)^{-1}+o_{p}(1)\right]\\
    &\qquad\times\left(
    \begin{array}{c}
            (1+o_p(1))\frac{1}{\sqrt{n\rho_n}}\sum_{i=1}^{n}R_{n,i}X_{n,i}\varepsilon_{n,i}\\
            (1+o_p(1))\frac{1}{\sqrt{n\rho_n}}\sum_{i=1}^{n}R_{n,i}Z_{n,i}\varepsilon_{n,i}
        \end{array}
        \right).
    \end{align*}
    By an argument similar to the proof of \cref{thm:normality}, we can show that 
    \begin{align}
        \frac{1}{\sqrt{n\rho_n}}\sum_{i=1}^{n}R_{n,i}X_{n,i}\varepsilon_{n,i}&=O_p(1),
        \label{eq:X_eps}\\
        \frac{1}{\sqrt{n\rho_n}}\sum_{i=1}^{n}R_{n,i}Z_{n,i}\varepsilon_{n,i}&=O_p(1),
        \label{eq:Z_eps}\\
        \frac{1}{\sqrt{n\rho_n}}\Sigma_{n}^{-1/2}&=O_p(1).
        \label{eq:Sigma_bound}
    \end{align}
    Thus, \cref{eq:X_eps,eq:Z_eps,eq:Sigma_bound} imply that 
    \begin{align*}
        \Sigma_{n}^{-1/2}\tilde{Q}_n^{XX}\left(
            \hat{\theta}_{n}-\theta_n^{\causal}
        \right)
        =\frac{1}{\sqrt{n\rho_n}}\Sigma_{n}^{-1/2}\tilde{Q}_n^{XX}\left(\tilde{Q}_n^{XX}\right)^{-1}\frac{1}{\sqrt{n\rho_n}}\sum_{i=1}^{n}R_{n,i}X_{n,i}\varepsilon_{n,i}+o_p(1),
    \end{align*}
    and we can conclude with \cref{lem:CLT_causal}.
\end{proof}

\subsection{Proof of \cref{thm:var_consistency}}
\begin{proof}
    \textbf{[Proof for $\frac{1}{n\rho_n}\tilde{\Sigma}_{n}$]}

    Let 
    \begin{align*}
        \frac{1}{n\rho_n}\tilde{\Sigma}_{n}^{\dagger}
        =\frac{1}{n\rho_n}\sum_{i=1}^{n}\sum_{j\in\tilde{\mathcal{N}}_{n}(i,2K)}R_{n,i}R_{n,j}\left(\tilde{\Psi}_{n,i}-\mathbb{E}\left[\tilde{\Psi}_{n,i}\mid\bR_n\right]\right)\left(\tilde{\Psi}_{n,j}-\mathbb{E}\left[\tilde{\Psi}_{n,j}\mid\bR_n\right]\right)^{\prime}.
    \end{align*}
    Under the boundedness (\cref{asm:moment}) and the local dependence (\cref{asm:local}), \cref{cond:var_consistency} is automatically satisfied.
    Then, \cref{lem:var_consistency} implies that 
    \begin{equation*}
        \frac{1}{n\rho_n}\tilde{\Sigma}_{n}^{\dagger}=\frac{1}{n\rho_n}\tilde{\Sigma}_{n}+o_{p^R}(1).
    \end{equation*}
    Hence, it suffices to show that 
    \begin{align*}
        \frac{1}{N}\hat{\Sigma}_n
        =&\frac{1}{n\rho_n}\tilde{\Sigma}_{n}^{\dagger}+\tilde{B}_{n}+o_{p^R}(1).
    \end{align*}

    Here, $\max_{i}|\hat{\varepsilon}_{n,i}-\tilde{\varepsilon}_{n,i}|=o_p^{R}(1)$ by \cref{asm:moment,thm:consistency,lem:gamma_consistency}. Also,
    $$\frac{1}{n\rho_n}\sum_{i=1}^{n}\sum_{j\in\tilde{\mathcal{N}}_{n}(i,2K)}R_{n,i}R_{n,j}\tilde{X}_{n,i}\tilde{X}_{n,j}^{\prime}\hat{\varepsilon}_{n,i}\hat{\varepsilon}_{n,j}=O_{\text{a.s.}}(1)$$
    by \cref{asm:moment,asm:sparsity}, $\rho\in(0,1]$, and \cref{lem:eps_bound}.
    Thus, we can show that
    \begin{align}
        \frac{1}{N}\hat{\Sigma}_n
        =&\frac{1}{N}\sum_{i=1}^{n}\sum_{j\in\tilde{\mathcal{N}}_{n}(i,2K)}R_{n,i}R_{n,j}\hat{\Psi}_{n,i}\hat{\Psi}_{n,j}^{\prime}\nonumber\\
        =&\frac{1}{N}\sum_{i=1}^{n}\sum_{j\in\tilde{\mathcal{N}}_{n}(i,2K)}R_{n,i}R_{n,j}\tilde{X}_{n,i}\tilde{X}_{n,j}^{\prime}\hat{\varepsilon}_{n,i}\hat{\varepsilon}_{n,j}\nonumber\\
        =&\frac{1}{n\rho_n}\sum_{i=1}^{n}\sum_{j\in\tilde{\mathcal{N}}_{n}(i,2K)}R_{n,i}R_{n,j}\tilde{X}_{n,i}\tilde{X}_{n,j}^{\prime}\tilde{\varepsilon}_{n,i}\tilde{\varepsilon}_{n,j}+o_{p^R}(1),\label{eq:hateps_to_tildeeps}
    \end{align}
    where the last equality holds by \cref{lem:nNconv}.

    Then,
    \begin{align}
        \cref{eq:hateps_to_tildeeps}
        =&\frac{1}{n\rho_n}\sum_{i=1}^{n}\sum_{j\in\tilde{\mathcal{N}}_{n}(i,2K)}R_{n,i}R_{n,j}\tilde{\Psi}_{n,i}\tilde{\Psi}_{n,j}^{\prime}+o_{p^R}(1)\nonumber\\
        =&\frac{1}{n\rho_n}\tilde{\Sigma}_{n}^{\dagger}+\tilde{B}_{n}+o_{p^R}(1)\\
        &+\frac{2}{n\rho_n}\sum_{i=1}^{n}\sum_{j=1}^{n}R_{n,i}R_{n,j}
        \left(\tilde{\Psi}_{n,i}-\mathbb{E}\left[\tilde{\Psi}_{n,i}\mid\bR_n\right]\right)\mathbb{E}\left[\tilde{\Psi}_{n,j}\mid\bR_n\right]^{\prime}\mathds{1}\{\tilde{d}_n(i,j)\leq2K\},\label{eq:tilde_psi_rem}
    \end{align}
    thus, it suffices to show that the remainder term $\cref{eq:tilde_psi_rem}=o_{p^R}(1)$.

    We will show it element-wise. Take the $(k,k^{\prime})$-element of \cref{eq:tilde_psi_rem}. Let 
    \begin{align*}
        \tilde{\varphi}_i=\sum_{j=1}^{n}R_{n,j}\mathbb{E}\left[\tilde{\Psi}_{n,j,(k^{\prime})}\mid\bR_n\right]\mathds{1}\{\tilde{d}_n(i,j)\leq2K\}.
    \end{align*}
    Then,
    \begin{align*}
        &\mathbb{E}\left[\left|(k,k^{\prime})\text{-element of }\cref{eq:tilde_psi_rem}\right|\mid\bR_n\right]\\
        =&
        \mathbb{E}\left[\left|\frac{2}{n\rho_n}\sum_{i=1}^{n}R_{n,i}\left(\tilde{\Psi}_{n,i}-\mathbb{E}\left[\tilde{\Psi}_{n,i,(k)}\mid\bR_n\right]\right)\tilde{\varphi}_i\right|\mid\bR_n\right]\\
        \leq&\mathbb{E}\left[\left(\frac{2}{n\rho_n}\sum_{i=1}^{n}R_{n,i}\left(\tilde{\Psi}_{n,i}-\mathbb{E}\left[\tilde{\Psi}_{n,i,(k)}\mid\bR_n\right]\right)\tilde{\varphi}_i\right)^2\mid\bR_n\right]^{1/2}\\
        \leq&\frac{2}{\rho_n}\left(\frac{1}{n^2}\sum_{i=1}^{n}\Var\left(\tilde{\Psi}_{n,i,(k)}\mid\bR_n\right)\tilde{\varphi}_i^2+\frac{1}{n^2}\sum_{i=1}^{n}\sum_{j\neq i}\left|\Cov\left(\tilde{\Psi}_{n,i,(k)},\tilde{\Psi}_{n,j,(k)}\mid\bR_n\right)\right|\times|\tilde{\varphi}_i\tilde{\varphi}_j|\right)^{1/2},
    \end{align*}
    where the first inequality follows from Jensen's inequality.
    
    By \cref{asm:moment} and \cref{lem:eps_bound}, $\tilde{\Psi}_{n,i,(k)}$ is uniformly bounded, thus $\max_{i}\Var\left(\tilde{\Psi}_{n,i,(k)}\mid\bR_n\right)=O_{\text{a.s.}}(1)$ and $\tilde{\varphi}_i^2\leq C\times(\sum_{j=1}^{n}\mathds{1}\{\tilde{d}_n(i,j)\leq2K\})^2\leq C\times|\mathcal{N}_{n}(i;2K)|^2$ for some constant $C>0$. Hence,
    $\frac{1}{n^2}\sum_{i=1}^{n}\Var\left(\tilde{\Psi}_{n,i,(k)}\mid\bR_n\right)\tilde{\varphi}_i^2\leq C'\delta_n(2K,2)/n$ for some constant $C'>0$.
    By \cref{asm:var_dependence} (i), $\delta_n(2K,2)/n\to0$ as $n\to\infty$.
    
    By \cref{lem:psi_dependency}, $\tilde{\Psi}_{n,i,(k)}$ is conditionally $\psi$-dependent with $\xi_{n,s}=\mathds{1}\{s\leq 2K\}$ given $\bR_n$, thus $\left|\Cov\left(\tilde{\Psi}_{n,i,(k)},\tilde{\Psi}_{n,j,(k)}\mid\bR_n\right)\right|\leq C''\sum_{s=1}^{\infty}\mathds{1}\{s\leq 2K\}\times\mathds{1}\{d_n(i,j)=s\}$ for some constant $C''>0$. Thus,
    \begin{align*}
        &\frac{1}{n^2}\sum_{i=1}^{n}\sum_{j\neq i}\left|\Cov\left(\tilde{\Psi}_{n,i,(k)},\tilde{\Psi}_{n,j,(k)}\mid\bR_n\right)\right|\times|\tilde{\varphi}_i\tilde{\varphi}_j|\\
        \leq&\frac{C''}{n^2}\sum_{s=1}^{2K}\sum_{i=1}^{n}\sum_{j\neq i}\mathds{1}\{d_n(i,j)=s\}\sum_{i'\in\mathcal{N}(i,2K)}\sum_{j'\in\mathcal{N}(j,2K)}1\\
        \leq&\frac{C'''}{n^2}\sum_{s=1}^{2K}|\mathcal{J}_n(s,2K)|
    \end{align*}
    for some constant $C'''>0$.
    By \cref{asm:var_dependence} (ii), $\sum_{s=1}^{2K}\mathcal{J}_n(s,2K)/n^2\to0$ as $n\to\infty$.
    
    Therefore, we have shown that $\mathbb{E}\left[\left|(k,k^{\prime})\text{-element of }\cref{eq:tilde_psi_rem}\right|\mid\bR_n\right]=o_{\text{a.s.}}(1)$.
    By Markov's inequality, we can conclude that the remainder term $\cref{eq:tilde_psi_rem}=o_{p^R}(1)$.
    
    \vspace{20pt}
    \textbf{[Proof for $\frac{1}{n\rho_n}\Sigma_{n}$]}
    Let 
    \begin{align*}
        \frac{1}{n\rho_n}\Sigma_{n}^{\dagger}=\frac{1}{n\rho_n}\sum_{i=1}^{n}\sum_{j\in\mathcal{N}(i,2K)}\left(R_{n,i}\Psi_{n,i}-\rho_n\mathbb{E}[\Psi_{n,i}]\right)\left(R_{n,j}\Psi_{n,j}-\rho_n\mathbb{E}[\Psi_{n,j}]\right)^{\prime}.
    \end{align*}
    We can show that
    \begin{align*}
        \frac{1}{N}\hat{\Sigma}_n
        =&\frac{1}{n\rho_n}\sum_{i=1}^{n}\sum_{j\in\mathcal{N}_{n}(i,2K)}R_{n,i}\Psi_{n,i}R_{n,j}\Psi_{n,j}^{\prime}+o_p(1)\\
        =&\frac{1}{n\rho_n}\Sigma_{n}^{\dagger}+\hat{B}_n+o_p(1)\\
        &+\frac{2}{n\rho_n}\sum_{i=1}^{n}\sum_{j=1}^{n}\left(R_{n,i}\Psi_{n,i}-\rho_n\mathbb{E}[\Psi_{n,i}]\right)\rho_n\mathbb{E}[\Psi_{n,j}]\mathds{1}\{d_n(i,j)\leq2K\}\\
        =&\frac{1}{n\rho_n}\Sigma_{n}^{\dagger}+\hat{B}_n+o_p(1),
    \end{align*}
    where the first equality follows by the similar arguments as we derive \cref{eq:hateps_to_tildeeps} and by \cref{lem:Lambda}, the second equality is a simple transformation, and the last equality holds by the similar arguments for the remainder term $\cref{eq:tilde_psi_rem}$.
    We can conclude by applying \cref{lem:var_consistency} to $(n\rho_n)^{-1}\Sigma_{n}^{\dagger}$.
\end{proof}

\subsection{Proof of \cref{thm:var_consistency_eig}}
\begin{proof}
    Let $\frac{1}{N}\hat{\Sigma}_n^{-}=\frac{1}{N}\sum_{i=1}^{n}\sum_{j=1}^{n}R_{n,i}R_{n,j}\hat{\Psi}_{n,i}\hat{\Psi}_{n,j}^{\prime}\tilde{K}_{n,i,j}^{-}$.
    Since $\tilde{K}_n^{+}=\tilde{K}_n+\tilde{K}_n^{-}$, we have
    \begin{align}
        \label{eq:Sigma_decomp}
        \frac{1}{N}\hat{\Sigma}_n^{+}
        &=\frac{1}{N}\hat{\Sigma}_n+\frac{1}{N}\hat{\Sigma}_n^{-}.
    \end{align}

    \textbf{[Proof for $\frac{1}{n\rho_n}\tilde{\Sigma}_{n}$]}

    \cref{thm:var_consistency} implies 
    \begin{align*}
        &\frac{1}{N}\hat{\Sigma}_n\\
        =&\frac{1}{n\rho_n}\tilde{\Sigma}_n+\tilde{B}_{n}+o_{p^{R}}(1)\\
        =&\frac{1}{n\rho_n}\tilde{\Sigma}_n+\frac{1}{n\rho_n}\sum_{i=1}^{n}\sum_{j=1}^{n}R_{n,i}R_{n,j}\mathbb{E}\left[\tilde{\Psi}_{n,i}\mid\bR_n\right]\mathbb{E}\left[\tilde{\Psi}_{n,j}\mid\bR_n\right]^{\prime}\left(\tilde{K}_{n,i,j}^{+}-\tilde{K}_{n,i,j}^{-}\right)+o_{p^{R}}(1).
    \end{align*}
    By the same logic as in the proof of \cref{thm:var_consistency} after replacing $\mathds{1}\{\tilde{d}_n(i,j)\leq2K\}$ by $\tilde{K}_{n,i,j}^{-}$ and \cref{asm:var_dependence} by \cref{asm:var_dependence_eig}, we can show that
    \begin{align*}
        \frac{1}{N}\hat{\Sigma}_n^{-}
        =&\frac{1}{n\rho_n}\sum_{i=1}^{n}\sum_{j=1}^{n}R_{n,i}R_{n,j}\mathbb{E}\left[\tilde{\Psi}_{n,i}\mid\bR_n\right]\mathbb{E}\left[\tilde{\Psi}_{n,j}\mid\bR_n\right]^{\prime}\tilde{K}_{n,i,j}^{-}\\
        &+\frac{1}{n\rho_n}\sum_{i=1}^{n}\sum_{j=1}^{n}R_{n,i}R_{n,j}\mathbb{E}\left[\left(\tilde{\Psi}_{n,i}-\mathbb{E}\left[\tilde{\Psi}_{n,i}\mid\bR_n\right]\right)\left(\tilde{\Psi}_{n,j}-\mathbb{E}\left[\tilde{\Psi}_{n,j}\mid\bR_n\right]\right)^{\prime}\mid\bR_n\right]\tilde{K}_{n,i,j}^{-}\\
        &+o_{p^{R}}(1).
    \end{align*}
    We get the conclusion by substituting these results into \cref{eq:Sigma_decomp}.

    \textbf{[Proof for $\frac{1}{n\rho_n}\Sigma_{n}$]}

    The proof is similar.
    By the same logic as in the proof of \cref{thm:var_consistency},
    \begin{align*}
        \frac{1}{N}\hat{\Sigma}_n^{-}
        =&\frac{1}{n}\sum_{i=1}^{n}\sum_{j=1}^{n}\rho_n\mathbb{E}\left[\Psi_{n,i}\right]\mathbb{E}\left[\Psi_{n,j}\right]^{\prime}K_{n,i,j}^{-}\\
        &+\frac{1}{n\rho_n}\sum_{i=1}^{n}\sum_{j=1}^{n}\mathbb{E}\left[\left(R_{n,i}\Psi_{n,i}-\rho_n\mathbb{E}\left[\Psi_{n,i}\right]\right)\left(R_{n,j}\Psi_{n,j}-\rho_n\mathbb{E}\left[\Psi_{n,j}\right]\right)^{\prime}\right]K_{n,i,j}^{-}\\
        &+o_p(1).
    \end{align*}
    We get the conclusion by combining it with the result of \cref{thm:var_consistency}.
\end{proof}

\section{Additional simulation results}
\label{app:sim_additional}
We consider the following exposure mapping:
\begin{align*}
    T_{n,i}=\left(R_{n,i}D^*_{n,i},\sum_{j\neq i}A_{n,i,j}R_{n,j}D_{n,j}^{*}\right)
    =:(D_{n,i},\text{net}_{n,i}).
\end{align*}
We set $\tilde{T}_{n,i}=T_{n,i}$. Note that, since $D_{n,i}\indep \text{net}_{n,i}$, no contamination bias would arise. Our focus here is to evaluate our inference procedure based on the asymptotic approximation in this correctly specified model. 

We follow the same implementation procedure as in the simulation exercise in \cref{sec:simulation}, except for the definition of $T_{n,i}$ and $\tilde{T}_{n,i}$, and $\theta_{n,i,(1)}\sim \text{Exponential}(1/3)$ and $\theta_{n,i,(2)}=\frac{\sum_{j\neq i}A_{n,i,j}}{\max_{k}\sum_{j\neq k}A_{n,k,k}}$. Here, the average direct effect is $1/3$ and the average spillover effect is about $2/9$.

In \cref{tab:simulationT}, Panels A and B, we report the results of this simulation when we vary $\rho_{n}$ from $0.1$ to $0.5$ and from $0.6$ to $1.0$, respectively. Since the population size (the number of nodes) is $1770$, the sample size varies from about $177$ to $1770$.
In each panel, the first three rows report the averages of the population and sample-level causal estimands and the OLS estimator. The fourth to sixth rows report the averages of the EHW standard errors and the averages of our proposed standard errors in Equation \cref{eq:se_mod}. The seventh and eighth rows report the average absolute deviations of the estimator from the causal estimands. The last four rows report the coverage probabilities of the $95\%$ confidence intervals constructed using the EHW standard errors and those based on \cref{eq:se_mod} for the two causal estimands.

The first three rows in \cref{tab:simulationT} show that the estimator closely approximates both estimands, as expected from our asymptotic theory (\cref{thm:consistency,thm:causal_consistency}). The difference between $\theta_{n}^{\causal}$ and $\theta_{n}^{\causample}$ is negligible because $T_{n,i}=\tilde{T}_{n,i}$. 
We also observe that while the direct effect estimands $\theta^{\causal}_{(1)}$ and $\theta^{\causample}_{(1)}$ are close to the average direct effect of $1/3$, the spillover effect estimands $\theta^{\causal}_{(2)}$ and $\theta^{\causample}_{(2)}$ are larger than the average spillover effect of $2/9$. 
This occurs because the spillover effect estimands place greater weight on nodes with more connections, who tend to have larger spillover effects, resulting in an upward bias. 
The seventh and eighth rows, showing the average absolute deviations of the estimator from the estimands, also confirm that the estimator closely approximates the estimands, especially as $\rho_{n}$ increases and the sample size becomes larger.

The fourth to sixth rows show that our proposed standard errors based on \cref{eq:se_mod} tend to be larger than the EHW standard errors, especially as $\rho_{n}$ increases. This is because (i) the EHW standard errors do not account for the network dependence structure, and the observed network becomes denser as $\rho_{n}$ increases, and (ii) our standard errors are designed to be conservative, as established in \cref{thm:var_consistency_eig}. When $\rho_{n}$ is small, the difference between the two types of standard errors is less pronounced because (i) the observed network is sparser and the dependence structure is less important, and (ii) the sample-to-population ratio approaches the infinite population case, where the standard model-based inference is valid.
Additionally, we observe that our proposed standard errors based on \cref{eq:se_mod} for $\theta_{n}^{\causal}$ tend to be slightly larger than those for $\theta_{n}^{\causample}$, reflecting the additional adjustment for sampling variation in the former.

The last two rows in \cref{tab:simulationT} show that the coverage rates based on our proposed method \cref{eq:se_mod} are reasonably close to the nominal $95\%$ target. We observe under-coverage for $\theta_{n,(2)}^{\causal}$ and $\theta_{n,(2)}^{\causample}$ when $\rho_{n}$ is small, likely due to the small sample size and limited variation in the $\text{net}$ variable in sparse networks. In contrast, the coverage rates for $\theta_{n,(2)}^{\causal}$ and $\theta_{n,(2)}^{\causample}$ based on the EHW standard errors are substantially below the nominal level as $\rho_{n}$ increases. This is because the EHW standard errors ignore the network dependence structure and finite population bias, which likely leads to over-rejection of the null hypothesis.

Overall, our simulation exercise shows that as long as the model is correctly specified and relevant network information is observed, reliable inference for the causal estimands is possible even when not everyone in the population is sampled. Since exhaustive network collection can be costly in practice, our results provide a rationale for collecting network data based on sampled units, which is less costly.

\begin{table}[htbp]
  \caption{Simulation Results: $T_{n,i}=\tilde{T}_{n,i}$ case}
  \label{tab:simulationT}
  \scalebox{0.9}{
  \begin{threeparttable}
    \centering
    \begingroup
  \begin{tabular}{lcccccccccc}
    \hline
    \hline
    \multicolumn{11}{c}{\textbf{Panel A: $\rho=0.1-0.5$}} \\
   \hline
     & \multicolumn{2}{c}{$0.1$} & \multicolumn{2}{c}{$0.2$} & \multicolumn{2}{c}{$0.3$} & \multicolumn{2}{c}{$0.4$} & \multicolumn{2}{c}{$0.5$} \\
     & D & net & D & net & D & net & D & net & D & net \\
    \hline
    $\theta^{\causal}$ & 0.348 & 0.312 & 0.348 & 0.312 & 0.348 & 0.312 & 0.348 & 0.312 & 0.348 & 0.312 \\
    $\theta^{\causample}$ & 0.346 & 0.311 & 0.349 & 0.311 & 0.349 & 0.312 & 0.349 & 0.311 & 0.349 & 0.312 \\
    $\hat{\theta}$ & 0.347 & 0.285 & 0.350 & 0.305 & 0.348 & 0.308 & 0.352 & 0.310 & 0.350 & 0.305 \\
    $\text{SE EHW}$ & 0.214 & 0.265 & 0.216 & 0.268 & 0.153 & 0.136 & 0.153 & 0.136 & 0.126 & 0.093 \\
    $\text{SE \eqref{eq:se_mod} }\theta^{\causal}$ & 0.214 & 0.263 & 0.215 & 0.265 & 0.156 & 0.146 & 0.156 & 0.145 & 0.132 & 0.109 \\
    $\text{SE \eqref{eq:se_mod} }\theta^{\causample}$ & 0.214 & 0.263 & 0.215 & 0.265 & 0.156 & 0.147 & 0.156 & 0.146 & 0.132 & 0.110 \\
    $|\hat{\theta}-\theta^{\causal}|$ & 0.172 & 0.233 & 0.174 & 0.223 & 0.119 & 0.122 & 0.128 & 0.118 & 0.097 & 0.093 \\
    $|\hat{\theta}-\theta^{\causample}|$ & 0.172 & 0.232 & 0.171 & 0.220 & 0.118 & 0.120 & 0.126 & 0.117 & 0.096 & 0.092 \\
    $\text{Coverage EHW }\theta^{\causal}$ & 0.945 & 0.920 & 0.958 & 0.937 & 0.953 & 0.907 & 0.942 & 0.919 & 0.953 & 0.879 \\
    $\text{Coverage EHW }\theta^{\causample}$ & 0.948 & 0.914 & 0.955 & 0.933 & 0.951 & 0.908 & 0.945 & 0.920 & 0.954 & 0.886 \\
    $\text{Coverage \eqref{eq:se_mod} } \theta^{\causal}$ & 0.941 & 0.904 & 0.953 & 0.926 & 0.953 & 0.924 & 0.944 & 0.937 & 0.963 & 0.931 \\
    $\text{Coverage \eqref{eq:se_mod} } \theta^{\causample}$ & 0.946 & 0.907 & 0.952 & 0.924 & 0.955 & 0.925 & 0.949 & 0.939 & 0.963 & 0.928 \\
    \hline
    \multicolumn{11}{c}{\textbf{Panel B: $\rho=0.6-1.0$}} \\
     \hline
     & \multicolumn{2}{c}{$0.6$} & \multicolumn{2}{c}{$0.7$} & \multicolumn{2}{c}{$0.8$} & \multicolumn{2}{c}{$0.9$} & \multicolumn{2}{c}{$1.0$} \\
     & D & net & D & net & D & net & D & net & D & net \\
    \hline
    $\theta^{\causal}$ & 0.348 & 0.312 & 0.348 & 0.312 & 0.348 & 0.312 & 0.348 & 0.312 & 0.348 & 0.312 \\
    $\theta^{\causample}$ & 0.348 & 0.312 & 0.348 & 0.312 & 0.348 & 0.312 & 0.348 & 0.312 & 0.348 & 0.312 \\
    $\hat{\theta}$ & 0.352 & 0.311 & 0.348 & 0.305 & 0.350 & 0.306 & 0.348 & 0.305 & 0.350 & 0.307 \\
    $\text{SE EHW}$ & 0.126 & 0.092 & 0.110 & 0.071 & 0.110 & 0.071 & 0.099 & 0.058 & 0.100 & 0.058 \\
    $\text{SE \eqref{eq:se_mod} }\theta^{\causal}$ & 0.132 & 0.109 & 0.119 & 0.091 & 0.119 & 0.091 & 0.110 & 0.082 & 0.110 & 0.083 \\
    $\text{SE \eqref{eq:se_mod} }\theta^{\causample}$ & 0.132 & 0.110 & 0.119 & 0.093 & 0.119 & 0.093 & 0.110 & 0.084 & 0.110 & 0.085 \\
    $|\hat{\theta}-\theta^{\causal}|$ & 0.105 & 0.089 & 0.086 & 0.076 & 0.092 & 0.076 & 0.078 & 0.067 & 0.084 & 0.067 \\
    $|\hat{\theta}-\theta^{\causample}|$ & 0.104 & 0.088 & 0.086 & 0.076 & 0.091 & 0.075 & 0.078 & 0.067 & 0.084 & 0.066 \\
    $\text{Coverage EHW }\theta^{\causal}$ & 0.942 & 0.899 & 0.955 & 0.845 & 0.943 & 0.859 & 0.952 & 0.832 & 0.936 & 0.816 \\
    $\text{Coverage EHW }\theta^{\causample}$ & 0.948 & 0.904 & 0.954 & 0.850 & 0.940 & 0.862 & 0.954 & 0.831 & 0.940 & 0.822 \\
    $\text{Coverage \eqref{eq:se_mod} } \theta^{\causal}$ & 0.946 & 0.940 & 0.969 & 0.933 & 0.953 & 0.935 & 0.969 & 0.951 & 0.962 & 0.956 \\
    $\text{Coverage \eqref{eq:se_mod} } \theta^{\causample}$ & 0.953 & 0.949 & 0.970 & 0.936 & 0.959 & 0.940 & 0.970 & 0.950 & 0.966 & 0.959 \\
    \hline
  \end{tabular}
        \begin{tablenotes}
            \footnotesize
            \item {\it Note:} Panel A reports the results for $\rho_{n}=0.1,\ldots,0.5$ and Panel B reports the results for $\rho_{n}=0.6,\ldots,1.0$. The first three rows report the averages of the population and sample-level causal estimands and the OLS estimator. The fourth and fifth rows report the averages of the EHW standard errors and our proposed standard errors based on \cref{eq:se_mod}. The sixth and seventh rows report the average absolute deviations of the estimator from the two causal estimands. The last four rows report the coverage probabilities of the $95\%$ confidence intervals constructed using the EHW standard errors and the standard errors based on our proposed method \cref{eq:se_mod} for the two causal estimands.
        \end{tablenotes}
    \endgroup
    \end{threeparttable}   
    }
\end{table}

\section{Survey of OLS usage in network experiment applications}
\label{app:survey}
In this section, we summarize our survey of the usage of OLS in network experiment applications in economics, as introduced in the second paragraph of the introduction. Our survey provides an overview of the prevalence of OLS in estimating spillover effects in network experiments.

We considered papers published from April 2010 through April 2025 in the following journals: American Economic Review, Econometrica, Quarterly Journal of Economics, Journal of Political Economy, Review of Economic Studies, American Economic Journal: Applied Economics, and Journal of Development Economics. We searched for articles that included both ``networks'' and either ``field experiments'' or ``randomized trial'' as keywords on the Web of Science platform. This search yielded 52 papers, as listed in \cref{tab:survey_summary}. We then reviewed each paper to determine whether it conducted a network experiment and estimated spillover effects using regression. Among these, 29 papers ran regressions to estimate spillover effects; all 29 used the OLS estimator, while only two papers (\citealp{Coutts2022} and \citealp{Fafchamps2013}) mentioned propensity scores or used related estimators.
{\small
\begin{longtable}{p{0.3\textwidth}p{0.2\textwidth}p{0.2\textwidth}p{0.2\textwidth}}
\caption{Survey of OLS usage in network experiment applications} \label{tab:survey_summary} \\
\hline
\hline
                     Citation &  Field/Lab Exp w/ Network? & Regression for Causal Effects? & Estimator(s) Used \\
\midrule
\endfirsthead

\toprule
                     Citation &  Field/Lab Exp w/ Network? & Regression? & Estimator(s) Used \\
\midrule
\endhead
\midrule
\multicolumn{4}{r}{{Continued on next page}} \\
\midrule
\endfoot

\bottomrule
\caption*{\footnotesize \textit{Notes}: The first column lists the citation of the paper. The second column indicates whether the paper uses a field or lab experiment with a network structure. The third column indicates whether the paper uses regression to estimate causal effects, and the fourth column lists the specific estimator(s) used in the regression analysis. Methodological papers are marked with ``No (method)'' in the second column and do not have the third and fourth columns filled in.} 
\endlastfoot
        \cite{Evsyukova2024} &                                Yes &                            Yes &                      OLS, Causal Forest \\
           \cite{Batista2025}  &                                 No &                            Yes &                                     OLS \\
            \cite{Karing2024}  &No &                            Yes &                              OLS, Logit \\
           \cite{Chegere2024} &                                Yes &                            Yes &                                     OLS \\
       \cite{Deutschmann2024} &         Yes &                            Yes &                 OLS \\
           \cite{Barsbai2024} &                                 No &                            Yes &                                     OLS \\
          \cite{Banerjee2024}  &  No  &                            Yes &                                 OLS, IV \\
        \cite{Colonnelli2024}  &                                 No &                            Yes &                                OLS, DiD \\
\cite{HernandezAgramonte2024} &                                 No &                            Yes &                          OLS, IPW, Logit \\
          \cite{Borusyak2023} &                                 No &                            Yes &                               OLS, 2SLS \\
          \cite{Banerjee2023} &                                Yes &                            Yes &            OLS \\
           \cite{Soldani2023}  &                                Yes &                            Yes &                                     OLS \\
           \cite{Bobonis2022}  &                                 No &                            Yes &                            OLS, IV \\
              \cite{Alan2022}  &                                 Yes &                            Yes &                         OLS \\
            \cite{Coutts2022}  &                                Yes &                            Yes &                               Propensity score matching,   OLS \\
             \cite{Leung2022} &                   No (method) &              - &                                    - \\
        \cite{Bjorkegren2022} &                                Yes &                            No, Structural &            OLS  \\
            \cite{Beaman2021}  &                                Yes &                            Yes &                                     OLS \\
              \cite{Hess2021}  &                                Yes &                            Yes &             OLS   \\
            \cite{Meghir2022}  &                                No &                            Yes &                   OLS  \\
            \cite{Carter2021}  &                                Yes &                            Yes &            OLS,  \\
             \cite{Hardy2021}  &                                Yes &                            Yes &                                     OLS \\
             \cite{Breza2020}  &                        No (method) &                            - &                         - \\
              \cite{Abel2020}  &                                 No &                            Yes &                                     OLS \\
            \cite{Afridi2020}  &                                Yes &                            Yes &                                     OLS \\
             \cite{Drago2020} &                                Yes &                            Yes &                                     OLS \\
         \cite{BenYishay2020} &                                Yes &                            Yes &                                     OLS \\
               \cite{Cai2020} &                                 No &                            Yes &          OLS, Propensity Score Matching \\
          \cite{Banerjee2019}  &                                Yes &                            Yes &                                     OLS \\
           \cite{Kandpal2019}  &                                No (natural experiment) &                            Yes &        OLS, IV \\
         \cite{Benyishay2019}  &                                Yes &                            Yes &            OLS \\
             \cite{Boltz2019}  &                                Yes &                            Yes &                                     OLS, Logit \\
             \cite{Breza2019}  &                                Yes &                            Yes &                                     OLS \\
             \cite{Flory2018} &                                Yes &                            Yes &                                     OLS \\
     \cite{Chandrasekhar2018}  &                                Yes &                            Yes &                                     OLS \\
               \cite{Cai2018}  &                                Yes &                            Yes &            OLS\\
           \cite{DiFalco2018} &                                Yes &                            Yes &                   OLS  \\
              \cite{Gine2018} &                                 No (cluster) &                            Yes &                                 OLS, IV \\
           \cite{Kessler2017} &                                 No &                            Yes &                      OLS,  \\
              \cite{Cruz2017}  &                                 No &                            Yes &                          OLS, IV,  \\
         \cite{Barnhardt2017} &                                Yes &                            Yes & OLS \\
           \cite{Belloni2017} &                                 No (method) &                            - & - \\
           \cite{Pallais2016}  &                                No &                            Yes &            OLS \\
            \cite{Alatas2016}  &                                Yes &                            Yes &            OLS  \\
        \cite{Nagavarapu2016} &                                No &                            Yes &            OLS  \\
            \cite{Levine2016} &                                 No &                            Yes &                                     OLS \\
           \cite{Jakiela2016} &                                Yes &                            Yes &                                     OLS \\
               \cite{Cai2015}  &                                Yes &                            Yes &                  OLS\\
            \cite{Callen2015}  &                                 No &                            Yes &                                     OLS \\
         \cite{Fafchamps2013}  &                                Yes &                            Yes &                                     OLS, Propensity score matching \\
          \cite{Robinson2012}  &                                 No &                            Yes &                                     OLS \\
         \cite{Godlonton2012} &                                Yes &                            Yes &        OLS \\
\end{longtable}
}

\phantomsection
\addcontentsline{toc}{section}{References}
\putbib[list_ref]
\end{bibunit}

\end{document}